\newif\ifdraft\drafttrue
\newcommand\am[1]{{\color{red} [#1 - \textbf{Antoine}]}}
\newcommand\vp[1]{{\color{blue} [#1 - \textbf{Vincent}]}}
\newcommand\am[1]{}
\newcommand\vp[1]{}
\newcommand\review[1]{}
\begin{document}

\title{Rewriting Higher-Order Stack Trees \thanks{This work was
    partially supported by the French National Research Agency (ANR),
    through excellence program Bézout (ANR-10-LABX-58)}}

\author{Vincent Penelle}

\institute{Universit\'e Paris-Est, LIGM (CNRS UMR 8049), UPEM, CNRS,\\
 F-77454 Marne-la-Vall\'ee, France\\
\email{vincent.penelle@u-pem.fr}
}

\maketitle

\begin{abstract}
  Higher-order pushdown systems and ground tree rewriting systems can
  be seen as extensions of suffix word rewriting systems. Both classes
  generate infinite graphs with interesting logical
  properties. Indeed, the model-checking problem for
  monadic second order logic (respectively first order logic with
  a reachability predicate) is decidable on such graphs. We unify
  both models by introducing the notion of stack trees, trees whose nodes are
  labelled by higher-order stacks, and define the corresponding class
  of higher-order ground tree rewriting systems. We show that these
  graphs retain the decidability properties of ground tree rewriting
  graphs while generalising the pushdown hierarchy of graphs.
\end{abstract}

\section{Introduction}

Since Rabin's proof of the decidability of monadic second order logic (MSO) 
over 
the full infinite binary tree $\Delta_2$ \cite{Rabin68}, there has been an 
effort to characterise increasingly general classes of structures with 
decidable 
MSO theories. This can be achieved for instance using families of 
graph transformations which preserve the decidability of MSO - such as the 
unfolding or the MSO-interpretation and applying them 
to graphs of known decidable MSO theories, such as finite graphs or
the graph $\Delta_2$.

This approach was followed in \cite{Caucal96}, where it is shown 
that the prefix (or suffix) rewriting graphs of recognisable word rewriting 
systems, which coincide (up to graph isomorphism) with the transition graphs of 
pushdown automata (contracting $\varepsilon$-transitions), can be obtained from 
$\Delta_2$ using inverse regular substitutions, a simple class of 
MSO-compatible 
transformations. They also coincide with those obtained by applying MSO 
interpretations to $\Delta_2$ \cite{b98}.
Alternately unfolding and applying inverse regular mappings to these graphs 
yields a strict hierarchy of classes of trees and graphs with a decidable MSO 
theory \cite{Caucal02,CarayolW03} coinciding with the transition graphs of 
\emph{higher-order pushdown automata} and capturing the solutions of \emph{safe 
higher-order program schemes}\footnotemark, whose MSO decidability had already 
been established in \cite{knu02}. We will henceforth call this the 
\emph{pushdown hierarchy} and the graphs at its $n$-th level \emph{$n$-pushdown 
graphs} for simplicity.
\footnotetext{This hierarchy was extended to encompass \emph{unsafe}
  schemes and \emph{collapsible} automata, which are out of the scope
  of this paper. See \cite{bro10,car12,bro12} for recent results on
  the topic.}

Also well-known are the automatic and tree-automatic structures (see for 
instance \cite{bg04}), whose vertices are represented by words or trees and 
whose edges are characterised using finite automata running over tuples of 
vertices. The decidability of first-order logic (FO) over these graphs stems 
from the well-known closure properties of regular word and tree languages, but 
it can also be related to Rabin's result since tree-automatic graphs are 
precisely the class of graphs obtained from $\Delta_2$ using \emph{finite-set 
interpretations} \cite{ColcombetL07}, a generalisation of WMSO interpretations 
mapping structures with a decidable MSO theory to structures with a decidable 
FO 
theory.
Applying finite-set interpretations to the whole pushdown hierarchy therefore 
yields an infinite hierarchy of graphs of decidable FO theory, which is proven 
in \cite{ColcombetL07} to be strict.

Since prefix-recognisable graphs can be seen as word rewriting graphs, another 
variation is to consider similar rewriting systems over trees. This 
yields the class of \emph{ground tree rewriting graphs}, which strictly 
contains 
that of real-time order 1 pushdown graphs. This class is orthogonal to 
the whole pushdown hierarchy since it contains at least one graph of 
undecidable 
MSO theory, for instance the infinite 2-dimensional grid. The transitive 
closures of ground tree rewriting systems can be represented using \emph{ground 
tree transducers}, whose graphs were shown in \cite{DauchetT90} to have 
decidable FO[$\xrightarrow{*}$] theories by establishing their closure under 
iteration and then showing that any such graph is tree-automatic.

The purpose of this work is to propose a common extension to both higher-order 
stack operations and ground tree rewriting. We introduce a model of 
\emph{higher-order ground tree rewriting} over trees labelled by higher-order 
stacks (henceforth called \emph{stack trees}), which coincides, at
order 1, with ordinary ground tree rewriting and, over unary trees,
with the dynamics of  
higher-order pushdown automata. Following ideas from the works 
cited above, as well as the notion of recognisable sets and relations over 
higher-order stacks defined in \cite{Carayol05}, we introduce the class of 
\emph{ground (order $n$) stack tree rewriting systems}, whose derivation 
relations are captured by \emph{ground stack tree transducers}. Establishing 
that this class of relations is closed under iteration and can be finite-set 
interpreted in $n$-pushdown graphs yields the decidability of their 
FO[$\xrightarrow{*}$] theories.

The remainder of this paper is organised as follows. Section \ref{sec:def} 
recalls some of the concepts used in the paper. Section \ref{sec:host} 
defines stack trees and stack tree rewriting systems.
Section \ref{sec:gstt} explores a 
notion of recognisability for binary relations over stack trees. 
Section \ref{sec:rew_graph} proves the decidability of 
FO[$\xrightarrow{*}$] model checking over ground stack tree rewriting graphs.
Finally, Section \ref{sec:concl} presents some further perspectives.

\section{Definitions and notations}\label{sec:def}

\paragraph*{Trees.}
Given an arbitrary set $\Sigma$, an ordered $\Sigma$-labelled tree $t$ of 
arity at most $d \in \mathbb{N}$ 
is a \emph{partial function} from $\{1, \ldots, d\}^*$ to $\Sigma$ such
that the domain of $t$, $\dom(t)$ is prefix-closed (if $u$ is in
$\dom(t)$, then every prefix of $u$ is also in $\dom(t)$) and 
left-closed (for all $u \in \{1, \ldots, d\}^*$ and $2 \leq j \leq d$, $t(uj)$
is defined only if $t(ui)$ is for every $i < j$). Node $uj$ is called
the $j$-th \emph{child} of its \emph{parent} node $u$. Additionally, the
nodes of $t$ are totally ordered by the natural
length-lexicographic ordering $\leq_\llex$ over $\{1, \ldots, d\}^*$.
By abuse of notation, given a symbol $a\in \Sigma$, we simply denote
by $a$ the tree $\{\epsilon \mapsto a\}$ reduced to a unique
$a$-labelled node.
The frontier of $t$ is the set $\fr(t) = \{u \in \dom(t) \mid u1
\not\in \dom(t)\}$. Trees will always be drawn in such a way that the
left-to-right placement of leaves respects $\leq_\lex$.
The set of trees labelled by $\Sigma$ is denoted by $\mathcal{T}(\Sigma)$.
In this paper we only consider finite trees, i.e. trees with finite 
domains.

Given nodes $u$ and $v$, we write $u \sqsubseteq v$ if $u$ is a prefix of $v$, 
i.e. if there exists $w \in \{1,\cdots,d\}^*$, $v = uw$. We will say that 
$u$ is an \emph{ancestor} of $v$ or is \emph{above} $v$, and symmetrically that 
$v$ is \emph{below} $u$ or is its \emph{descendant}.
We call $v_{\leq i}$ the prefix of $v$ of length $i$.
For any $u \in \dom(t)$, $t(u)$ is called the \emph{label} of node $u$ in $t$ 
and $t_u = \{v \mapsto t(uv) \mid uv \in \dom(t)\}$ is the sub-tree of $t$
rooted at $u$.
For any $u \in \dom(t)$, we call $\#_t(u)$ the \emph{arity} of $u$, i.e. its 
number of children. When $t$ is understood, we simply write $\#(u)$.
Given trees $t, s_1, \ldots, s_k$ and a $k$-tuple of positions
$\mathbf{u} = (u_1, \ldots, u_k) \in \dom(t)^k$, we denote by
$\context{t}{\mathbf{u}}{s_1, \ldots s_k}$ the tree obtained by replacing the
sub-tree at each position $u_i$ in $t$ by $s_i$, i.e. the tree in which any node
$v$ not below any $u_i$ is labelled $t(v)$, and any node $u_i.v$ with
$v\in \dom(s_i)$ is labelled $s_i(v)$. In the special case
where $t$ is a $k$-\emph{context}, i.e. contains leaves $u_1, \ldots,
u_k$ labelled by special symbol $\diamond$, we omit $\mathbf{u}$ and
simply write $t[s_1, \ldots, s_k] = t[s_1, \ldots, s_k]_\mathbf{u}$.


\paragraph*{Directed Graphs.}
A \emph{directed graph} $G$ with edge labels in $\Gamma$ is a pair
$(V_G,E_G)$ where $V_G$ is a set of vertices and $E_G \subseteq (V_G
\times \Gamma \times V_G)$ is a set of edges. Given two vertices $x$
and $y$, we write $x \xrightarrow{\gamma}_G y$ if
$(x,\gamma,y)\in E_G$, $x \xrightarrow{}_G y$ if there
exists $\gamma\in \Gamma$ such that $x \xrightarrow{\gamma}_G y$,
and $x \xrightarrow{\Gamma'}_G y$ if there exists $\gamma\in
\Gamma'$ such that $x \xrightarrow{\gamma}_G y$.
There is a \emph{directed path} in $G$ from $x$ to $y$ labelled
by $w = w_1 \ldots w_k \in \Gamma^*$, written $x \xrightarrow{w}_G y$,
if there are vertices $x_0, \ldots, x_k$ such that $x = x_0$, $x_k =
y$ and for all $1 \leq i \leq k$, $x_{i-1} \xrightarrow{w_i}_G x_i$. We
additionally write $x \xrightarrow{*}_G y$ if there exists $w$ such that
$x \xrightarrow{w}_G y$ , and $x \xrightarrow{+}_G y$ if there is such a
path with $|w| \geq 1$.
A directed graph $G$ is \emph{connected} if there exists an
\emph{undirected} path between any two vertices $x$ and $y$, meaning
that $(x, y) \in (\xrightarrow{}_G \cup \xrightarrow{}_G^{-1})^*$.
%
We omit $G$ from all these notations when it is clear from the context.
%
%
A directed graph $D$ is \emph{acyclic}, or is a DAG, if there is no
$x$ such that $x \xrightarrow{+} x$. The \emph{empty
  DAG} consisting of a single vertex (and no edge, hence its name) is
denoted by $\emptydag$.
Given a DAG $D$, we denote by $I_D$ its set of vertices of in-degree
$0$, called \emph{input vertices}, and by $O_D$ its set of vertices of
out-degree $0$, called \emph{output vertices}. The DAG is said to be of
\emph{in-degree} $|I_D|$ and of \emph{out-degree} $|O_D|$.
We henceforth only consider finite DAGs.

\paragraph*{Rewriting Systems.}

Let $\Sigma$ and $\Gamma$ be finite alphabets. A $\Gamma$-labelled
\emph{ground tree rewriting system} (GTRS) is a finite set $R$ of
triples $(\ell, a, r)$ called \emph{rewrite rules}, with $\ell$ and
$r$ finite $\Sigma$-labelled trees and $a \in \Gamma$ a label. The
rewriting graph of $R$ is $\mathcal{G}_R = (V,E)$, where
$V = \mathcal{T}(\Sigma)$ and
$E = \{ (c[\ell], a, c[r]) \mid (\ell, a, r) \in R \}$. The
\emph{rewriting relation} associated to $R$ is
$\xrightarrow{}_R\ =\ \xrightarrow{}_{\mathcal{G}_R}$, its
\emph{derivation relation} is
$\xrightarrow{*}_R\ =\ \xrightarrow{*}_{\mathcal{G}_R}$.
When restricted to words (or equivalently unary trees), such systems
are usually called \emph{suffix} (or \emph{prefix}) \emph{word
  rewriting systems}.

\section{Higher-Order Stack Trees}
\label{sec:host}

\subsection{Higher-Order Stacks}

We briefly recall the notion of higher-order stacks (for details, see
for instance \cite{Carayol05}). In order to obtain a more
straightforward extension from stacks to stack trees, we use a
slightly tuned yet equivalent definition, whereby the hierarchy starts
at level $0$ and uses axs different set of basic operations.

In the remainder, $\Sigma$ will denote a fixed finite alphabet and $n$
a positive integer. We first define stacks of order $n$ (or
$n$-stacks).  Let $\Stacks_{0}(\Sigma) = \Sigma$ denote the set of
$0$-stacks.  For $n >0$, the set of $n$-stacks is $\Stacks_{n}(\Sigma)
= (\Stacks_{n-1}(\Sigma))^+$, the set of non-empty sequences of
$(n-1)$-stacks. When $\Sigma$ is understood, we simply write
$\Stacks_{n}$.  For $s\in \Stacks_{n}$, we write $s = \stack{n}{s_1,
  \cdots, s_k}$, with $k>0$ and $n>0$, for an $n$-stack of size $|s| = k$
whose topmost $(n-1)$-stack is $s_k$.
For example, $\stack{3}{\stack{2}{\stack{1}{aba}}
  \stack{2}{\stack{1}{aba} \stack{1}{b} \stack{1}{aa}}}$ is a
$3$-stack of size 2, whose topmost $2$-stack $\stack{2}{\stack{1}{aba}
  \stack{1}{b} \stack{1}{aa}}$ contains three $1$-stacks, etc.
%
%
\paragraph*{Basic Stack Operations.}
Given two letters $a,b \in \Sigma$, we define the partial function $\rew{a}{b} 
: \Stacks_{0} \rightarrow \Stacks_{0}$ such that $\rew{a}{b}(c) = 
b$, if $c=a$ and is not defined otherwise. We also consider the identity 
function $\Id : \Stacks_{0} \rightarrow \Stacks_{0}$. For $n \geq 1$, the 
function $\cop{n} : \Stacks_{n} \rightarrow \Stacks_{n}$ is 
defined by $\cop{n}(s) = \stack{n}{s_1,\cdots,s_k,s_k}$, for every 
$s=\stack{n}{s_1,\cdots,s_k} \in \Stacks_{n}$. As it is injective, we 
denote by $\ncop{n}$ its inverse (which is a partial function).

Each level $\ell$ operation $\theta$ is extended to any level $n > \ell$
stack $s = \stack{n}{s_1,\cdots,s_k}$ by letting $\theta(s) =
\stack{n}{s_1,\cdots,s_{k-1},\theta(s_k)}$.  The set $\Ops{n}$ of
basic operations of level $n$ is defined as: $\Ops{0} = \{\rew{a}{b}
\mid a,b \in \Sigma\} \cup \{\Id\}$, and for $n\geq 1$, $\Ops{n} =
\Ops{n-1} \cup \{\cop{n},\ncop{n}\}$.



\subsection{Stack Trees}

We introduce the set
$\TS_{n}(\Sigma) = \mathcal{T}(\Stacks_{n-1}(\Sigma))$ (or simply
$\TS_n$ when $\Sigma$ is understood) of \emph{$n$-stack-trees}.
Observe that an $n$-stack-tree of
degree 1 is isomorphic to an $n$-stack, and that $\TS_{1} = 
\mathcal{T}(\Sigma)$.
Figure \ref{fig:stack_tree} shows an example of a 3-stack tree.
The notion of stack trees therefore subsumes both higher-order stacks and
ordinary trees.

\begin{figure}[t]
  \begin{center}
    \begin{tikzpicture}[scale=.7]
      \node (root) at (0,0) {$\stack{2}{\stack{1}{aa}\stack{1}{bab}}$}; 
      \node (0) at (-2,-1.25) {$\stack{2}{\stack{1}{aa}\stack{1}{aaa}}$}; 
      \node (1) at (3,-1.25) {$\stack{2}{\stack{1}{aa}\stack{1}{a}\stack{1}{b}}$};
      \node (00) at (-3,-2.5) {$\stack{2}{\stack{1}{ab}}$}; 
      \node (10) at (1.5,-2.5) {$\stack{2}{\stack{1}{ba}\stack{1}{ba}\stack{1}{b}}$}; 
      \node (11) at (5,-2.5) {$\stack{2}{\stack{1}{abb}\stack{1}{ab}}$};
      \draw [->] (root) to (0); 
      \draw [->] (root) to (1); 
      \draw [->] (0) to (00); 
      \draw [->] (1) to (10); 
      \draw [->] (1) to (11);
    \end{tikzpicture}
  \end{center}
  \caption{A 3-stack-tree.}\label{fig:stack_tree}
\end{figure}
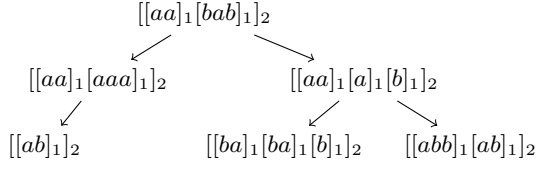

\paragraph*{Basic Stack Tree Operations.}
We now extend $n$-stack operations to stack trees. There are in
general several positions where one may perform a given operation on a
tree. We thus first define the \emph{localised} application of an
operation to a specific position in the tree (given by the index of a
leaf in the lexicographic ordering of leaves), and then derive a
definition of stack tree operations as binary relations, or
equivalently as partial functions from stack trees to sets of stack
trees.

Any operation of $\Ops{n-1}$ is extended to $\TS_{n}$ as follows: given
$\theta \in \Ops{n-1}$, and an integer $i\leq |\fr(t)|$, $\theta_{(i)}(t) = 
\context{t}{u_i}{\theta(s)}$ with $s=t(u_i)$, where $u_i$ is the $i^{th}$ leaf 
of the tree, with respect to the lexicographic order. If $\theta$ is not 
applicable to $s$, $\theta_i(t)$ is not defined.
We define $\theta(t) = \{\theta_{(i)}(t)\mid i\leq |\fr(t)|\}$, 
i.e. the set of stack trees obtained by applying $\theta$ to a leaf of 
$t$. 

The $k$-fold duplication of a stack tree leaf and its label is denoted by 
$\bcopy{n}{k} : \TS_{n} \rightarrow 2^{\TS_{n}}$. Its application to the 
$i^{th}$ 
leaf of a tree $t$ is: ${\bcopy{n}{k}}_{(i)}(t) = t\cup \{u_i j \mapsto t(u_i) 
\mid j \leq k\}$, with $i\leq |\fr(t)|$. Let $\bcopy{n}{k}(t)= 
\{{\bcopy{n}{k}}_{(i)}(t)\}$ be the set of stack trees obtained by applying 
$\bcopy{n}{k}$ to a leaf of $t$.
The inverse operation, written $\nbcopy{n}{k}$, is such that $t' =
{\nbcopy{n}{k}}_{(i)}(t)$ if $t = {\bcopy{n}{k}}_{(i)}(t')$. We also define 
$\nbcopy{n}{k}(t) = \{{\nbcopy{n}{k}}_{(i)}(t)\}$. Notice that $t' \in 
\nbcopy{n}{k}(t)$ if $t \in \bcopy{n}{k}(t')$.

For simplicity, we will henceforth only consider the case where stack trees 
have arity at most $2$ and $k \leq 2$, but all results go through in the 
general case. We denote by $\TOps{n} = \Ops{n-1} \cup 
\{\bcopy{n}{k},\nbcopy{n}{k} \mid k \leq 2\}$ the set of basic operations over 
$\TS_{n}$.

\subsection{Stack Tree Rewriting}


As already mentioned, $\TS_{1}$ is the set of trees labelled by
$\Sigma$. In contrast with basic stack tree operations, a tree rewrite
rule $(\ell, r)$ expresses the replacement of an arbitrarily large
ground subtree $\ell$ of some tree $s = c[\ell]$ into $r$, yielding
the tree $c[r]$. 
Contrary to the case of order 1 stacks (which are simply words),
composing basic stack tree operations does not allow us to directly
express such an operation, because there is no guarantee that two
successive operations will be applied to the same part of a tree.
We thus need to find a way to consider compositions of basic
operations acting on a single sub-tree.
In our notations, the effect of a ground tree rewrite rule could thus be seen as the
\emph{localised} application of a sequence of $\mathrm{rew}$ and
$\nbcopy{1}{2}$ operations followed by a sequence of $\mathrm{rew}$
and $\bcopy{1}{2}$ operations. The relative positions where these
operations must be applied could be represented as a pair of trees with edge
labels in $\Ops{0}$.

From level 2 on, this is no longer possible. Indeed a localised sequence of
operations may be used to perform introspection on the stack labelling
a node without destroying it, by first performing a $\cop{2}$
operation followed by a sequence of level 1 operations and a
$\ncop{2}$ operation. It is thus impossible to directly represent such
a transformation using pairs of trees labelled by stack tree
operations.
We therefore adopt a presentation of \emph{compound operations} as
DAGs, which allows us to specify the relative application positions of
successive basic operations. However, not every DAG represents a valid 
compound operation, so we first need to define a suitable subclass of
DAGs and associated concatenation operation.
An example of the model we aim to define can be found in Fig. 
\ref{fig:application_of_a_dag}.
\begin{figure}[t]
\subfloat[Stack tree $t$]{
     \begin{tikzpicture}[scale=.7]
        \node (root) at (0,.5) {$\stack{1}{bbb}$}; 
        \node (0) at (-1,-1) {$\stack{1}{bbb}$}; 
        \node (1) at (1,-1) {$\stack{1}{aabb}$}; 
        \draw [->] (root) to (0); 
        \draw [->] (root) to (1); 
      \end{tikzpicture}
}\hfil
\subfloat[Operation $D$]{
\begin{tikzpicture}[scale=.6]  
      \node (5p) at (0,-5) {.};
      \node (6) at (0,-6) {.};
      \node (7) at (0,-7) {.};  
      \node (8) at (-1,-8) {.};  
      \node (9) at (1,-8) {.};  
      \node (10) at (-1,-9) {.};  
      \node (11) at (1,-9) {.};  
      
      \draw[->] (5p) to node[midway,right]{$\ncop{1}$} (6) ;
      \draw[->] (6) to node[midway,right]{$\rew{b}{c}$} (7) ;
      \draw[->] (7) to node[near start,left]{$1$} (8) ;
      \draw[->] (7) to node[near start,right]{$2$} (9) ;
      \draw[->] (8) to node[midway,right]{$\rew{c}{a}$} (10) ;
      \draw[->] (9) to node[midway,right]{$\cop{1}$} (11) ;
\end{tikzpicture}
}\hfil
\subfloat[$D_{(1)}(t)$]{
     \begin{tikzpicture}[scale=.7]
	\node (root) at (0,.5) {$\stack{1}{bbb}$};
        \node (0) at (-1,-1) {$\stack{1}{bc}$}; 
        \node (1) at (1,-1) {$\stack{1}{aabb}$};
        \node (00) at (-2,-2.5) {$\stack{1}{ba}$}; 
        \node (01) at (0,-2.5) {$\stack{1}{bcc}$}; 
        \draw [->] (root) to (0); 
        \draw [->] (root) to (1); 
        \draw [->] (0) to (00); 
        \draw [->] (0) to (01);
      \end{tikzpicture}
}\hspace{.05cm}
\subfloat[$D_{(2)}(t)$]{
     \begin{tikzpicture}[scale=.7]
	\node (root) at (0,.5) {$\stack{1}{bbb}$};
        \node (0) at (-1,-1) {$\stack{1}{bbb}$}; 
        \node (1) at (1,-1) {$\stack{1}{aac}$};
        \node (10) at (0,-2.5) {$\stack{1}{aaa}$}; 
        \node (11) at (2,-2.5) {$\stack{1}{aacc}$}; 
        \draw [->] (root) to (0); 
        \draw [->] (root) to (1); 
        \draw [->] (1) to (10); 
        \draw [->] (1) to (11); 
      \end{tikzpicture}
}
\caption{The application of an operation $D$ to a stack tree $t$.
}\label{fig:application_of_a_dag}
\end{figure}
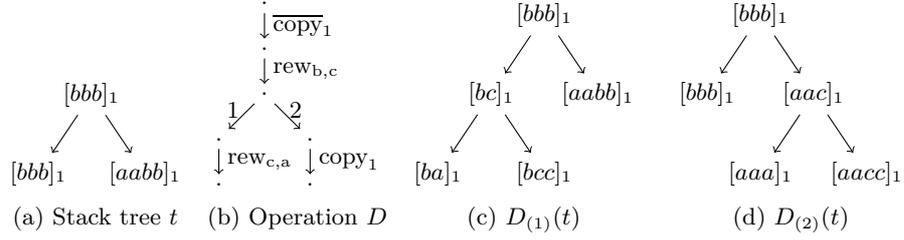

\paragraph*{Concatenation of DAGs.}

Given two DAGs $D$ and $D'$ with $O_D = \{b_1, \ldots, b_\ell\}$ and
$I_{D'} = \{a'_1, \ldots, a'_{k'}\}$ and two indices $i$ and $j$ with
$1 \leq i \leq \ell$ and $1 \leq j \leq k'$, we denote by $D
\cdot_{i,j} D'$ the unique DAG $D''$ obtained by merging the
$(i+m)$-th output vertex of $D$ with the $(j+m)$-th input vertex of
$D'$ for all $m \geq 0$ such that both $b_{i+m}$ and $a'_{j+m}$
exist. 
Formally, letting $d = \min(\ell-i, k'-j)+1$ denote the number of
merged vertices, we have $D'' = \mathrm{merge}_f (D \uplus D')$ where
$\mathrm{merge}_f(D)$ is the DAG whose set of vertices is $f(V_D)$ and
set of edges is $\{(f(x),\gamma,f(x'))\mid (x,\gamma,x')\in E_D\}$,
and $f(x) = b_{i+m}$ if $x = a'_{j+m}$ for some $0 \leq m \leq d$, and
$f(x) = x$ otherwise. We call $D''$ the $(i,j)$-concatenation of $D$
and $D'$. Note that the $(i,j)$-concatenation of two connected DAGs
remains connected.

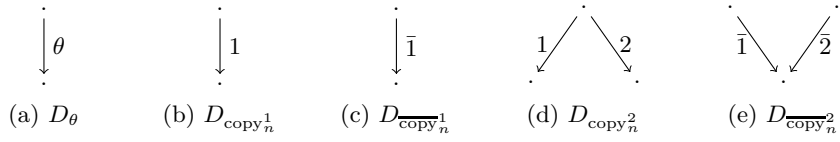
\begin{figure}[t]
\subfloat[$D_\theta$]{
\begin{tikzpicture}
  \useasboundingbox (-.7,-.5) rectangle (.7,.7);
      \node (1) at (0,.6) {.};
      \node (2) at (0,-.4) {.}; 
      \draw[->] (1) to node[midway,right]{$\theta$} (2) ; 
\end{tikzpicture}
}\hfil
\subfloat[$D_{\bcopy{n}{1}}$]{
\begin{tikzpicture}
  \useasboundingbox (-.7,-.5) rectangle (.7,.7);
      \node (1) at (0,.6) {.};
      \node (2) at (0,-.4) {.}; 
      \draw[->] (1) to node[midway,right]{$1$} (2) ; 
\end{tikzpicture}
}\hfil
\subfloat[$D_{\nbcopy{n}{1}}$]{
\begin{tikzpicture}
  \useasboundingbox (-.7,-.5) rectangle (.7,.7);
      \node (1) at (0,.6) {.};
      \node (2) at (0,-.4) {.}; 
      \draw[->] (1) to node[midway,right]{$\bar{1}$} (2) ; 
\end{tikzpicture}
}\hfil
\subfloat[$D_{\bcopy{n}{2}}$]{
\begin{tikzpicture}
      \node (1) at (0,.5) {.}; 
      \node (2) at (-.7,-.5) {.}; 
      \node (3) at (.7,-.5) {.}; 
      \draw [->] (1) to node[midway,left]{$1$} (2); 
      \draw [->] (1) to node[midway,right]{$2$} (3) ; 
\end{tikzpicture}
}\hfil
\subfloat[$D_{\nbcopy{n}{2}}$]{
\begin{tikzpicture}
      \node (2) at (-.7,.5) {.};
      \node (3) at (.7,.5) {.};
      \node (4) at (0,-.5) {.};
      \draw [->] (2) to node[midway,left] {$\bar{1}$} (4);
      \draw [->] (3) to node[midway,right]{$\bar{2}$} (4);
\end{tikzpicture}
}
\caption{DAGs of the basic $n$-stack tree operations (here $\theta$ 
ranges over $\Ops{n-1}$).}
\label{inj_tops_dag}
\end{figure}

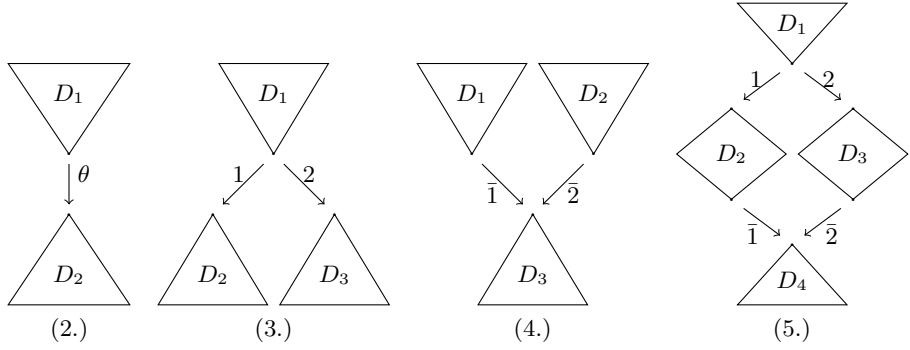
\begin{figure}[t]
\stepcounter{subfigure}
\renewcommand{\thesubfigure}{\arabic{subfigure}.}
\subfloat[]{
\begin{tikzpicture}[scale=.8]
      \node (0) at (0,1) {$D_1$};
      \node (1) at (0,0) {.};
      \node (2) at (0,-1) {.}; 
      \node (4) at (0,-2) {$D_2$};
      \draw[->] (1) to node[near start,right]{$\theta$} (2) ; 
      \draw (0,-1) -- (1,-2.5) -- (-1,-2.5) -- cycle;
      \draw (0,0) -- (1,1.5) -- (-1,1.5) -- cycle;
\end{tikzpicture}
}\hfil
\subfloat[]{
\begin{tikzpicture}[scale=.8]
      \node (1) at (0,0) {.};
      \node (2) at (-1,-1) {.};
      \node (2') at (1,-1) {.}; 
      \node (4) at (-1,-2) {$D_{2}$};
      \node (4') at (1,-2) {$D_{3}$};
      \node at (0,1) {$D_1$};
      \draw[->] (1) to node[near start,left]{$1$} (2) ;
      \draw[->] (1) to node[near start,right]{$2$} (2') ; 
      \draw (-1,-1) -- (-.1,-2.5) -- (-1.9,-2.5) -- cycle; 
      \draw (1,-1) -- (1.9,-2.5) -- (0.1,-2.5) -- cycle;
      \draw (0,0) -- (.9,1.5) -- (-.9,1.5) -- cycle;
\end{tikzpicture}
}\hfil
\subfloat[]{
\begin{tikzpicture}[scale=.8]
      \node (1) at (-1,0) {.};
      \node (1') at (1,0) {.};
      \node (2) at (0,-1) {.}; 
      \node (4) at (0,-2) {$D_3$};
      \node at (-1,1) {$D_1$};
      \node at (1,1) {$D_2$};
      \draw[->] (1) to node[near start,below]{$\bar{1}$} (2) ;
      \draw[->] (1') to node[near start,below]{$\bar{2}$} (2) ; 
      \draw (0,-1) -- (.9,-2.5) -- (-.9,-2.5) -- cycle; 
      \draw (-1,0) -- (-.1,1.5) -- (-1.9,1.5) -- cycle; 
      \draw (1,0) -- (.1,1.5) -- (1.9,1.5) -- cycle;
\end{tikzpicture}
}\hfil
\subfloat[]{
\begin{tikzpicture}[scale=.8]
      \node (1) at (-1,-.75) {.};
      \node (1') at (1,-.75) {.};
      \node (2) at (0,-1.5) {.}; 
      \node (4) at (-1,.75) {.};
      \node (4') at (1,.75) {.};
      \node (3) at (0,1.5) {.}; 
      \node at (0,-2.1) {$D_4$};
      \node at (-1,0) {$D_2$};
      \node at (1,0) {$D_3$};
      \node at (0,2.15) {$D_1$};
      \draw[->] (1) to node[near start,below]{$\bar{1}$} (2) ;
      \draw[->] (1') to node[near start,below]{$\bar{2}$} (2) ;
      \draw[->] (3) to node[near start,left]{$1$} (4) ;
      \draw[->] (3) to node[near start,right]{$2$} (4') ;
      \draw (0,-1.5) -- (.9,-2.5) -- (-.9,-2.5) -- cycle; 
      \draw (-1,-.75) -- (-.1,0) -- (-1,.75) -- (-1.9,0) -- cycle; 
      \draw (1,-.75) -- (.1,0) -- (1,.75) -- (1.9,0) -- cycle;
      \draw (0,1.5) -- (.9,2.5) -- (-.9,2.5) -- cycle;
\end{tikzpicture}
}
\caption{
  Possible decompositions of a compound operation, numbered according
  to the items in Definition \ref{def:compound}.  }
\label{fig_application}
\end{figure}

\paragraph*{Compound Operations}
We represent compound operations as DAGs.  We will refer in particular
to the set of DAGs
$\graphops{n} = \{D_\theta \mid \theta \in \TOps{n}\}$ associated with
basic operations, which are depicted in Fig.  \ref{inj_tops_dag}.
Compound operations are inductively defined below, as depicted in
Fig. \ref{fig_application}.

\begin{definition}
  \label{def:compound}
  A DAG $D$ is a \emph{compound operation} (or simply an
  \emph{operation}) if one of the following holds:
 \begin{enumerate}
  \item $D = \emptydag$;
  \item $D = (D_1 \cdot_{1,1} D_\theta) \cdot_{1,1} D_2$, with
    $|O_{D_1}| = |I_{D_2}| = 1$ and
    $\theta \in \Ops{n-1} \cup \{\bcopy{n}{1}, \nbcopy{n}{1}\}$;
  \item
    $D = ((D_1 \cdot_{1,1} D_{\bcopy{n}{2}}) \cdot_{2,1} D_3)
    \cdot_{1,1} D_2$, with $|O_{D_1}| = |I_{D_2}| = |I_{D_3}| = 1$;
  \item
    $D = (D_1 \cdot_{1,1} (D_2 \cdot_{1,2} D_{\nbcopy{n}{2}}))
    \cdot_{1,1} D_3$ with $|O_{D_1}| = |O_{D_2}| = |I_{D_3}| = 1$;
  \item \label{item:diamond}
    $D = ((((D_1 \cdot_{1,1} D_{\bcopy{n}{2}}) \cdot_{2,1} D_3)
    \cdot_{1,1} D_2) \cdot_{1,1} D_{\nbcopy{n}{2}}) \cdot_{1,1} D_4$,
    with
    $|O_{D_1}| = |I_{D_2}| = |O_{D_2}| = |I_{D_3}| = |O_{D_3}| =
    |I_{D_4}| = 1$ ;
 \end{enumerate}
 where $D_1, D_2, D_3$ and $D_4$ are compound operations.
\end{definition}

Additionally, the vertices of $D$ are ordered inductively in such a
way that every vertex of $D_i$ in the above definition is smaller than
the vertices of $D_{i+1}$, the order over $\emptydag$ being the empty
one. This induces in particular an order over the input vertices of
$D$, and one over its output vertices.


\begin{definition}
  Given a compound operation $D$, we define $D_{(i)}(t)$, its
  \emph{localised application} starting at the $i$-th leaf of a stack
  tree $t$, as follows:
\begin{enumerate}
\item If $D = \emptydag$, then $D_{(i)} (t) = t$.
\item If $D = (D_1 \cdot_{1,1} D_\theta) \cdot_{1,1} D_2$ with
  $\theta \in \Ops{n-1} \cup \{\bcopy{n}{1},\nbcopy{n}{1}\}$,
  
   \hfill then $D_{(i)}(t) = {D_2}_{(i)}(\theta_{(i)}({D_1}_{(i)}(t)))$.

\item If $D = ((D_1 \cdot_{1,1} D_{\bcopy{n}{2}}) \cdot_{2,1} D_3)
  \cdot_{1,1} D_2$,
  
  \hfill then $D_{(i)}(t) =
  {D_2}_{(i)}({D_3}_{(i+1)}({\bcopy{n}{2}}_{(i)}({D_1}_{(i)}(t))))$.

\item If $D = ((D_1 \cdot_{1,1} (D_2 \cdot_{2,1} D_{\nbcopy{n}{2}})) 
  \cdot_{1,1} D_3$,

  \hfill then $D_{(i)}(t) = {D_3}_{(i)}({\nbcopy{n}{2}}_{(i)}($
  ${D_2}_{(i+1)}({D_1}_{(i)}(t))))$.

\item If $D = ((((D_1 \cdot_{1,1} D_{\bcopy{n}{2}}) \cdot_{2,1} D_3)
  \cdot_{1,1} D_2) \cdot_{1,1} D_{\nbcopy{n}{2}}) \cdot_{1,1} D_4$,
  
  \hfill then $D_{(i)}(t) = {D_4}_{(i)}
  ({\nbcopy{n}{2}}_{(i)}({D_3}_{(i+1)}({D_2}_{(i)}({\bcopy{n}{2}}_{(i)}(
  {D_1}_{(i)}(t))))))$.
\end{enumerate}

\end{definition}

\begin{remark}
  An operation may admit several different decompositions with respect
  to Def. \ref{def:compound}. However, its application is
  well-defined, as one can show this process is locally confluent.
\end{remark}

Given two stack trees $t$, $t'$ and an operation $D$, we say that
$t' \in D(t)$ if there is a position $i$ such that $t' =
D_{(i)}(t)$. Figure \ref{fig:application_of_a_dag} shows an
example. We call $\mathcal{R}_D$ the relation induced by
$D$: for any stack trees $t,t'$, $\mathcal{R}_D(t,t')$ if
and only if $t'\in D(t)$.
Finally, given a $k$-tuple of operations $\bar{D} = (D_1,
\ldots, D_k)$ of respective in-degrees $d_1,
\ldots, d_k$ and a $k$-tuple of indices $\mathbf{i} = (i_1, \ldots,
i_k)$ with $i_{j+1} \geq i_{j} + d_j$ for all $1 \leq j < k$, we denote
by $\bar{D}_{(\mathbf{i})}(t)$ the parallel application
${D_1}_{(i_1)} (\ldots {D_k}_{(i_k)} (t) \ldots )$ of
$D_1, \ldots, D_k$ to $t$, $\bar{D}(t)$ the set of
all such applications and $\mathcal{R}_{\bar{D}}$ the induced relation.


Since the $(i,j)$-concatenation of two operations as defined above is not
necessarily a licit operation, we need to restrict ourselves to
results which are well-formed according to Def. \ref{def:compound}.
Given $D$ and $D'$, we let $D \cdot D' = \{D \cdot_{i,j} 
D' \mid D \cdot_{i,j} D' \text{ is an operation}\}$. Given $n >1$, we 
define\footnotemark $D^n = \bigcup_{i <n} D^i \cdot D^{n-i}$, and let $D^* 
= \bigcup_{n\geq 0} D^n$ denote the set of \emph{iterations} of $D$.
These notations are naturally extended to sets of operations.

\footnotetext{This unusual definition is necessary because $\cdot$ is not 
associative. For example, $(D_{\bcopy{n}{2}} \cdot_{2,1}
D_{\bcopy{n}{2}}) \cdot_{1,1}  D_{\bcopy{n}{2}}$ is in
$(D_{\bcopy{n}{2}})^2 \cdot D_{\bcopy{n}{2}}$ but not  
in $D_{\bcopy{n}{2}} \cdot (D_{\bcopy{n}{2}})^2$. }

\begin{proposition}
  $\graphops{n}^*$ is precisely the set of all well-formed compound
  operations.
\end{proposition}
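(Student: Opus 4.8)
The plan is to prove the two inclusions separately. Write $\mathcal{C}$ for the set of compound operations of Definition \ref{def:compound}. The inclusion $\graphops{n}^* \subseteq \mathcal{C}$ is almost immediate from the definitions: by construction $D \cdot D'$ retains only those concatenations $D \cdot_{i,j} D'$ that are operations, so every element of $\graphops{n}^n$ obtained through at least one application of $\cdot$ already lies in $\mathcal{C}$. It therefore suffices to treat the base levels $\graphops{n}^0 = \{\emptydag\}$ and $\graphops{n}^1 = \graphops{n}$. The empty DAG is a compound operation by case 1, and each basic DAG $D_\theta$ of Fig. \ref{inj_tops_dag} is one as well: for $\theta \in \Ops{n-1} \cup \{\bcopy{n}{1}, \nbcopy{n}{1}\}$ it is an instance of case 2 with $D_1 = D_2 = \emptydag$, while $D_{\bcopy{n}{2}}$ and $D_{\nbcopy{n}{2}}$ arise from cases 3 and 4 respectively by setting all auxiliary arguments to $\emptydag$.

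The converse inclusion $\mathcal{C} \subseteq \graphops{n}^*$ is the substantial part, and I would prove it by structural induction on Definition \ref{def:compound}. Two preliminary facts are needed. First, $\graphops{n}^*$ is closed under $\cdot$: if $A \in \graphops{n}^i$ and $B \in \graphops{n}^j$ then $A \cdot B \subseteq \graphops{n}^{i+j}$, directly from the definition $\graphops{n}^{i+j} = \bigcup_{k<i+j}\graphops{n}^k \cdot \graphops{n}^{i+j-k}$. Second, $\emptydag$ is neutral for concatenation, in the sense that merging the unique vertex of $\emptydag$ with a single input or output vertex of a DAG $X$ returns $X$; consequently $X \cdot_{i,1}\emptydag = X$ and $\emptydag \cdot_{1,j} X = X$ whenever the relevant degrees are $1$.

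For the induction itself, suppose $D$ is built as in one of cases 2--5 from compound operations $D_1, \ldots, D_4$, which by the induction hypothesis already belong to $\graphops{n}^*$; the basic DAGs involved belong to $\graphops{n} \subseteq \graphops{n}^*$. I would then follow the very bracketing prescribed by the definition and check that every intermediate concatenation is again a compound operation, so that it lies in $\graphops{n}^*$ by the closure fact and may legitimately be fed into the next $\cdot$. This is exactly where $\emptydag$-neutrality is used: each prefix of the prescribed bracketing is recognised as an instance of the same case with the not-yet-attached suboperations replaced by $\emptydag$. For instance, in the diamond case \ref{item:diamond}, $D_1 \cdot_{1,1} D_{\bcopy{n}{2}}$ is a case-3 operation with $D_2 = D_3 = \emptydag$, the further concatenation with $D_3$ then $D_2$ keeps it a case-3 operation, the subsequent merge with $D_{\nbcopy{n}{2}}$ yields a case-5 operation with $D_4 = \emptydag$, and the final concatenation with $D_4$ produces $D$ itself; the degree hypotheses attached to each case guarantee that the out- and in-degrees remain $1$ where required, so that every step is a legal $\cdot_{i,j}$ producing an operation. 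Cases 2, 3 and 4 are handled identically but with shorter bracketings.

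The main obstacle is precisely this bookkeeping forced by the non-associativity of $\cdot$ and by the fact that $A \cdot B$ discards concatenations that are not operations (as highlighted by the footnote on $(D_{\bcopy{n}{2}})^2 \cdot D_{\bcopy{n}{2}}$). Because of this one cannot simply assemble the pieces in an arbitrary order and invoke closure; one must exhibit a single bracketing all of whose prefixes are operations. The $\emptydag$-substitution argument supplies such a bracketing uniformly across the five cases, and the degree annotations in Definition \ref{def:compound} are exactly what is needed to certify that each prefix meets the degree side-conditions of the case it instantiates.
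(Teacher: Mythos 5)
Your proof is correct and follows essentially the same route as the paper's (which simply observes that membership in $\graphops{n}^*$ forces being an operation by definition of $\cdot$, and that conversely Definition \ref{def:compound} exhibits a decomposition into basic DAGs). You merely spell out the bookkeeping the paper leaves implicit — the base cases $\emptydag$ and $\graphops{n}$, and the fact that every prefix of the prescribed bracketing is itself an operation (via $\emptydag$-substitution), which is indeed the point one must check given that $\cdot$ discards ill-formed concatenations.
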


\begin{proof}
  Recall that $\graphops{n}$ denotes the set of DAGs associated with
  basic operations. By definition of iteration, any DAG in
  $\graphops{n}^*$
  is an operation.
  Conversely, by Def. \ref{def:compound}, any operation can be
  decomposed into a concatenation of DAGs of $\graphops{n}$.  \qed
\end{proof}

\paragraph*{Ground Stack Tree Rewriting Systems.}
By analogy with order 1 trees, given some finite alphabet of labels
$\Gamma$, we call any finite subset of labelled operations in
$\graphops{n}^* \times \Gamma$ a labelled \emph{ground stack-tree
  rewriting system} (GSTRS). We straightforwardly extend the notions
of rewriting graph and derivation relation to these systems.
Note that for $n = 1$, this class coincides with ordinary ground tree
rewriting systems.  Moreover, one can easily show that the rewriting
graphs of ground stack-tree rewriting systems over unary $n$-stack
trees (trees containing only unary operations, i.e. no edge labelled
by $2$ or $\bar{2}$) are isomorphic to the configuration graphs of
order $n$ pushdown automata performing a finite sequence of operations
at each transition.

\section{Operation Automata}\label{sec:gstt}

In this section, in order to provide finite descriptions of possibly
infinite sets of operations, in particular the derivation relations of
GSTRS, we extend the notion of \emph{ground tree transducers} (or GTT)
of \cite{DauchetT90} to ground tree rewriting systems. 

A GTT $T$ is given by a tuple $\big((A_i, B_i)\big)_{1 \leq i \leq k}$
of pairs of finite tree automata. A pair of trees $(s,t)$ is accepted
by $T$ if $s = c[s_1, \ldots s_m]$ and $t = c[t_1, \ldots, t_m]$ for
some $m$-context $c$, where for all $1\leq j \leq m$, $s_j \in L(A_i)$
and $t_j \in L(B_i)$ for some $1 \leq i \leq k$. It is also shown
that, given a relation $R$ recognised by a GTT, there exists another
GTT recognising its reflexive and transitive closure $R^*$.

Directly extending this idea to ground stack tree rewriting systems is
not straightforward: 
contrary to the case of trees, a given compound operation may be applicable to 
many different subtrees. Indeed, the only subtree to which a ground tree 
rewriting rule $(s,t)$ 
can be applied is the tree $s$. On stack trees, this is no longer true, as 
depicted in Fig. \ref{fig:application_of_a_dag}: an operation does not entirely
describe the labels of nodes of subtrees it can be applied to (as in 
the case of trees), and can therefore be applied to infinitely many different 
subtrees.
We will thus express relations by
describing sets of compound operations over stack trees. 
Following \cite{Carayol05} where recognisable sets of higher-order stacks are
defined, we introduce operation automata and recognisable sets of operations. 

\begin{definition}
An automaton over $\graphops{n}^*$ is a tuple $A = 
(Q,\Sigma,I,F,\Delta)$, where
\begin{itemize}
\item $Q$ is a finite set of states,
\item $\Sigma$ is a finite stack alphabet,
\item $I \subseteq Q$ is a set of initial states,
\item $F \subseteq Q$ is a set of final states,
\item $\Delta \subseteq \left( Q\times
  (\Ops{n-1}\cup\{\bcopy{n}{1},\nbcopy{n}{1}\}) \times Q \right)$

\hfill $\cup \left( (Q\times Q) \times Q \right) \cup \left(Q \times (Q
  \times Q) \right)$ is a set of transitions.
\end{itemize}
\end{definition}

An operation $D$ is accepted by $A$ if there is a labelling of its 
vertices by states of $Q$ such that all input vertices are labelled by 
initial states, all output vertices by final states, and this labelling
is consistent with $\Delta$, in the sense that for  
all $x$, $y$ and $z$ respectively labelled by states $p$, $q$ and $r$, and for 
all $\theta \in \Ops{n-1} \cup \{\bcopy{n}{1},\nbcopy{n}{1}\}$,
\begin{align*}
  x \xrightarrow{\theta} y  & \implies (p,\theta,q) \in \Delta,
  \\
  x \xrightarrow{1} y \land x \xrightarrow{2} z & 
\implies (p,(q,r)) \in \Delta,
  \\
  x \xrightarrow{\bar{1}} z \land y \xrightarrow{\bar{2}} z & 
\implies ((p,q),r) \in \Delta.
\end{align*}

We denote by 
$\mathrm{Op}(A)$ the set of operations recognised by $A$. $\Rec$ 
denotes the class of sets of operations recognised by operation automata.
A pair of stack trees $(t,t')$ is in the relation $\mathcal{R}(A)$
defined by $A$ if for some $k \geq 1$ there is a $k$-tuple of
operations $\bar{D} = (D_1,\cdots,D_k)$ in $\graphaut{A}^k$
such that $t' \in \bar{D}(t)$.
At order $1$, we have already seen that stack trees are simply trees,
and that ground stack tree rewriting systems coincide with ground tree
rewriting systems. Similarly, we also have the following:

\begin{proposition}
  The classes of relations recognised by order $1$ operation
  automataand by ground tree transducers coincide.
\end{proposition}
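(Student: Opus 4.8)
The plan is to match the two models through a structural normal form for order-$1$ operations. Since $\TS_1 = \mathcal{T}(\Sigma)$, a compound operation $D$ applied to a contiguous block of leaves reads the ground subtree $\ell$ whose frontier is that block, folds it down to the single node at its root (via $\nbcopy{1}{2}$, $\nbcopy{1}{1}$ and relabelling $\rew{a}{b}$ steps) and then grows a new subtree $r$ from that node (via $\bcopy{1}{2}$, $\bcopy{1}{1}$ and $\rew{a}{b}$ steps); its net effect is thus the replacement of $\ell$ by $r$, exactly like a ground tree rewrite rule. The fact I would establish first is that every non-empty operation has a \emph{waist}: a single vertex $v$ through which every directed path from an input to an output passes. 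This follows by induction on Definition~\ref{def:compound}, since in each clause the input side is forced to funnel through one vertex (the output of $D_1$ in clauses 2,3,5, the $\nbcopy{1}{2}$ output in clause 4); everything on the input side of $v$ performs the fold of $\ell$ and everything on its output side the unfold of $r$, the diamond of clause~\ref{item:diamond} lying entirely on one side and only imposing regular constraints on labels. Here I would use that the frontier of a subtree is a contiguous block in the (lexicographic, i.e. depth-first) leaf order, so that an operation of in-degree $d$ reads precisely the frontier of one subtree.

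For the inclusion of operation-automaton relations into GTT relations, I would use the waist to factorise an accepting run. Given an operation automaton $A$ and a waist state $q \in Q$, let $A_q$ be the ordinary bottom-up tree automaton accepting the subtrees $\ell$ that $A$ can fold with a run ending at $q$, and let $B_q$ accept the subtrees $r$ that $A$ can unfold with a run started at $q$; both are regular, the fold and unfold parts of $A$ (together with macro-transitions summarising diamonds) being finite-state tree-automaton computations on $\ell$ and $r$. Since the only information crossing the waist is the state $q$, the set of single-operation replacements realised by $A$ is exactly $\bigcup_{q\in Q} L(A_q)\times L(B_q)$, a finite union of products, i.e. a GTT pair-language. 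The GTT $T=((A_q,B_q))_{q\in Q}$ then recognises $\mathcal{R}(A)$: a parallel application touches disjoint contiguous leaf blocks, that is, disjoint subtrees, which is precisely the simultaneous replacement of the subtrees at the holes of a context performed by $T$.

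For the converse, given a GTT $T=((A_i,B_i))_i$ I would build an operation automaton $A$ with one waist state $q_i$ per pair. The fold part simulates the bottom-up run of $A_i$ on $\ell$: at a node of label $b$ it relabels the already-folded children to $b$ with $\rew{\cdot}{b}$ and then applies $\nbcopy{1}{2}$ (or $\nbcopy{1}{1}$ at unary nodes), a state of $A$ recording a state of $A_i$ together with the current label of $\Sigma$; reaching the root yields $q_i$. Symmetrically, the unfold part from $q_i$ guesses $r$ top-down, creating children with $\bcopy{1}{2}$, fixing their labels with $\rew$, and checking them against $B_i$ read in reverse. An accepting run of $A$ then realises exactly a replacement $\ell\to r$ with $\ell\in L(A_i)$ and $r\in L(B_i)$, and parallel application reproduces the disjoint replacements of $T$, so $\mathcal{R}(A)=\mathcal{R}(T)$.

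I expect the main obstacles to be twofold. First, the honest proof of the waist factorisation: one must verify that interleaved $\rew$ steps and, above all, the diamonds of clause~\ref{item:diamond} can be absorbed into the fold and unfold tree automata without letting any information leak across the waist other than through the single state $q$ — this is exactly what yields the product form rather than an arbitrary regular relation. Second, the encoding of arbitrary tree automata through the very constrained basic operations: because $\nbcopy{1}{2}$ (resp. $\bcopy{1}{2}$) can only merge (resp. duplicate) siblings carrying the \emph{same} label as their parent, every transition of the simulated automaton must be mediated by $\rew$ relabellings, and one must carry the node label inside the state to keep fold and unfold consistent. Both are finite-state bookkeeping arguments, but getting this label discipline exactly right is where the care lies.
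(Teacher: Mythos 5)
Your proof is correct and follows essentially the route the paper itself indicates (the paper states this proposition without an explicit proof, but its remark in Section~3.3 that a ground tree rewrite is the localised application of a sequence of $\mathrm{rew}$ and $\nbcopy{1}{2}$ steps followed by a sequence of $\mathrm{rew}$ and $\bcopy{1}{2}$ steps is exactly your waist factorisation, and the destructive/constructive split used in the normalisation of operation automata is its general-order form). You have also correctly isolated the two delicate points --- absorbing order-$1$ bubbles as label tests so that only a state-and-label pair crosses the waist, and mediating every merge/duplication by $\mathrm{rew}$ relabellings with the node label carried in the control state --- which is precisely where the bookkeeping lies.
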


At higher orders, the class $\Rec$ and the corresponding binary
relations retains several of the good closure properties of ground
tree transductions.

\begin{proposition}\label{prn:closure}
  $\Rec$ is closed under union, intersection and iterated
  concatenation. The class of relations defined by operation automata
  is closed under composition and iterated composition.
\end{proposition}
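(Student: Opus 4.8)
The plan is to prove the set-level closures (union, intersection, iterated concatenation) by direct automaton constructions, and then to reduce the relation-level closures to the same gluing mechanism. For union, given $A_1, A_2$ recognising $S_1, S_2$, I would take the disjoint union of their states, initial sets, final sets and transition relations. Since every compound operation is a connected DAG --- the basic-operation DAGs of Fig.~\ref{inj_tops_dag} are connected and, as already noted, $(i,j)$-concatenation preserves connectedness --- any accepting labelling uses states of a single component, so the construction recognises exactly $S_1 \cup S_2$. For intersection I would use the product automaton on $Q_1 \times Q_2$, synchronising the three kinds of transitions (unary, split, merge) componentwise; a labelling of an operation is accepting in the product iff its two projections are accepting in $A_1$ and $A_2$, giving $S_1 \cap S_2$.

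For iterated concatenation the key observation is that $D \cdot_{i,j} D'$ glues output vertices of $D$ (which carry final states) to input vertices of $D'$ (which carry initial states). I would therefore recognise $S^*$ by augmenting $A$ with \emph{gluing states} drawn from $F \times I$: a state $\langle f,\iota\rangle$ is to behave like $f$ on its incoming edges and like $\iota$ on its outgoing edges, so that a vertex separating one copy of an $S$-operation from the next can be labelled consistently. Concretely I would saturate $\Delta$ by substituting $\langle f,\iota\rangle$ for $f$ in every transition where $f$ occurs as a target (a unary target, a merge target, or a split component) and for $\iota$ in every transition where $\iota$ occurs as a source (a unary source, a split source, or a merge component), keeping $I$ and $F$ as global initial and final sets and forcing $I \cap F \neq \emptyset$ so that $\emptydag \in S^*$ is accepted. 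Soundness and completeness I would establish by induction following Definition~\ref{def:compound}: an accepting labelling induces a decomposition into $S$-pieces at the gluing vertices, and conversely every iterated concatenation admits such a labelling.

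For the relation-level statements I would first prove a bridging lemma: applying a tuple of operations and then a second tuple to the resulting stack tree has the same effect as applying a single tuple whose entries are operations admitting a \emph{horizontal cut} --- above the cut an operation of $A_1$ (or $\emptydag$), below it one of $A_2$ (or $\emptydag$), the two glued at the leaves created by the first batch. This lets me realise $\mathcal{R}(A_2) \circ \mathcal{R}(A_1)$ as $\mathcal{R}(A_3)$, where $A_3$ runs $A_1$ above the cut and $A_2$ below it using gluing states $F_1 \times I_2$, exactly as in the iterated-concatenation construction but with a single layer, with global inputs ranging over $I_1 \cup I_2$ and global outputs over $F_1 \cup F_2$. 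Iterated composition $\mathcal{R}(A)^*$ I would then obtain by permitting arbitrarily many cut layers, that is by reusing the looping (gluing-state) construction already built for $S^*$; here the correspondence with iterated concatenation of $\mathrm{Op}(A)$ is immediate once the single-layer case is settled.

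I expect the bridging lemma, rather than the automaton constructions, to be the main obstacle. The difficulty is that one operation of the second batch may straddle leaves produced by several operations of the first batch, so the combined operation merges or splits across the cut through the $\bcopy{n}{2}$ and $\nbcopy{n}{2}$ cases of Definition~\ref{def:compound} rather than through a single $\cdot_{i,j}$. Checking that the operations so obtained are precisely those accepted by the gluing-state automaton will require careful bookkeeping of leaf positions, a case analysis over the branching operations, and attention to the non-associativity of $\cdot$ already flagged for iterated concatenation.
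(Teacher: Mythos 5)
Your proposal is correct and follows essentially the same route as the paper: disjoint union for union, the product automaton for intersection, and for iterated concatenation a saturation of $\Delta$ that lets a gluing vertex be simultaneously final for one piece and initial for the next (the paper realises your $\langle f,\iota\rangle$ states by instead adding copies of every final-targeting transition redirected to initial states, plus one fresh state that is both initial and final to accept $\emptydag$ --- semantically the same construction). The one point where you go further is the relation-level ``bridging lemma'': the paper simply asserts that the automaton accepting $\mathrm{Op}(A)^*$ defines $\mathcal{R}(A)^*$ without working out the straddling-across-the-cut bookkeeping you rightly identify as the real obstacle, so your plan there is added diligence rather than a divergence.
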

 
The construction of automata recognising the union and intersection of
two recognisable sets, the iterated concatenation of a recognisable
set, or the composition of two automata-definable relations, can be
found in the appendix. Given automaton $A$, the relation defined by
the automaton accepting $\mathrm{Op}(A)^*$ is $\mathcal{R}(A)^*$.

\paragraph*{Normalised automata.}
Operations may perform ``unnecessary'' actions on a given
stack tree, for instance duplicating a leaf with a $\bcopy{n}{2}$
operation and later destroying both copies with
$\nbcopy{n}{2}$. Such 
operations 
which leave the input tree unchanged are referred to as \emph{loops}. 
There are thus in general infinitely many operations
representing the same relation over stack trees. It is therefore
desirable to look for a canonical representative (a canonical
operation) for each considered relation. The intuitive idea is to
simplify operations by removing occurrences of successive mutually
inverse basic operations. This process is a very classical tool in the
literature of pushdown automata and related models, and was applied to
higher-order stacks in \cite{Carayol05}.  Our notion of reduced
operations is an adaptation of this work.

There are two main hurdles to overcome. First, as already mentioned, a
compound
operation $D$ can perform introspection on the label of a leaf
without destroying it. If $D$ can be applied to a given stack
tree $t$, such a sequence of operations does not change the resulting
stack tree $s$. It does however forbid the application of $D$ to
other stack trees by inspecting their node labels, hence 
removing this part of the computation would lead to an operation
with a possibly strictly larger domain. To adress this problem, and
following \cite{Carayol05}, we use \emph{test
  operations} ranging over regular sets of $(n-1)$-stacks, which will
allow us to handle non-destructive node-label introspection.

A second difficulty appears when an operation destroys a subtree
and then reconstructs it identically, for instance a $\nbcopy{n}{2}$
operation followed by $\bcopy{n}{2}$. Trying to remove such a pattern
would lead to a disconnected DAG, which does not describe a compound
operation in our sense. We thus need to leave such occurrences
intact. We can nevertheless bound the number of times a given position
of the input stack tree is affected by the application of an operation by
considering two phases: a \emph{destructive} phase during which only
$\nbcopy{n}{i}$ and order $n-1$ basic operations (possibly including tests)
are performed on the input stack-tree, and a \emph{constructive} phase
only consisting of $\bcopy{n}{i}$ and order $n-1$ basic operations.
Similarly to the way ground tree rewriting is
performed at order 1.

Formally, a \emph{test} $T_L$ over $\Stacks_{n}$ is the restriction of
the identity operation to $L \in \Rec(\Stacks_{n})$\footnote{Regular
  sets of $n$-stacks are
  obtained by considering regular sets of sequences of operations of
  $\Ops{n}$ applied to a given stack $s_0$. More details can be found
  in \cite{Carayol05}.}. In other words, given $s \in \Stacks_{n}$,
$T_L(s) = s$ if $s \in L$, otherwise, it is undefined. We denote 
by $\Tests{n}$ the set of
test operations over $\Stacks_{n}$. We enrich our basic operations
over $\TS_{n}$ with $\Tests{n-1}$. We
also extend compound operations with edges labelled by
tests. We denote by $\graphops{n}^\mathcal{T}$ the set of
basic operations with tests. We can now define the notion of
reduced operation analogously to that of reduced instructions with 
tests in \cite{Carayol05}.


\begin{definition}
  For $i \in \{0,\cdots,n\}$, we define the set of words
  $\mathrm{Red}_i$ over
  $\Ops{n}\cup\Tests{n}\cup\{1,2,\bar{1},\bar{2}\}$ as:
 \begin{itemize}
  \item $\mathrm{Red}_0 = \{\varepsilon,T,\rew{a}{b},\rew{a}{b} \cdot T,T \cdot
    \rew{a}{b},\rew{a}{b} \cdot T \cdot \rew{c}{d}$

    \hfill $\mid a,b,c,d \in \Sigma,T \in \Tests{n}\}$,
  \item For $0<i<n$, $\mathrm{Red}_i = (\mathrm{Red}_{i-1} \cdot \ncop{i})^* 
\cdot \mathrm{Red}_{i-1} \cdot (\cop{i}\cdot\mathrm{Red}_{i-1})^*$,
  \item $\mathrm{Red}_n = (\mathrm{Red}_{n-1}\cdot \{\bar{1},\bar{2}\} )^* \cdot
\mathrm{Red}_{n-1}\cdot (\{1,2\}\cdot \mathrm{Red}_{n-1})^*$.
 \end{itemize}
\end{definition}

%
%

\begin{definition}
 An operation with tests $D$ is \emph{reduced} if for every $x,y \in 
V_D$, if $x\xrightarrow{w} y$, then $w \in \mathrm{Red}_n$.
\end{definition}

Observe that, in the decomposition of a reduced operation $D$, case
\ref{item:diamond} of the inductive definition of compound operations
(Def. \ref{def:compound}) should never occur, as otherwise, there
would be a path on which $1$ appears before $\bar{1}$, which
contradicts the definition of reduced operation.

An automaton $A$ is said to be \emph{normalised} if it only accepts
reduced operations, and \emph{distinguished} if 
there is no transition ending in an initial state or starting in a final state. 
The following proposition shows that any
operation automaton can be normalised and distinguished.

\begin{proposition}
  For every automaton $A$, there exists a distinguished
  normalised automaton with tests $A_r$ such that
  $\mathcal{R}(A) = \mathcal{R}(A_r)$.
\end{proposition}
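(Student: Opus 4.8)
The plan is to prove the statement by normalising operations through a terminating rewriting system, closely following the reduction of higher-order stack operation sequences of \cite{Carayol05} but lifting it from linear sequences to the branching DAG structure of compound operations. Concretely, I would first establish an operation-level lemma: every compound operation with tests $D$ admits a reduced operation with tests $D_r$ inducing the same relation, $\mathcal{R}_D = \mathcal{R}_{D_r}$; the proposition then follows by lifting this reduction to the automaton $A$. The reduction combines two ingredients mirroring the two ``hurdles'' identified before the definition of reduced operations: cancellation of mutually inverse stack operations (handled with $\Tests{n-1}$ to preserve non-destructive introspection), and elimination of tree-level \emph{copy-then-uncopy} introspection (again handled with tests on node labels).

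For the stack levels $0 < i < n$, the reduction is essentially that of \cite{Carayol05}: along each maximal unary segment of $D$ the label sequence lives in $\Ops{n-1}^*$ and can be rewritten into $\mathrm{Red}_{n-1}$ by cancelling each $\cop{i}\,\ncop{i}$ pattern and recording the intervening introspection as a test $T_L$ with $L \in \Rec(\Stacks_{n-1})$; this computes the same partial function on $(n-1)$-stacks and produces the inner $\mathrm{Red}_{n-1}$ blocks of $\mathrm{Red}_n$. For the tree level, I would use the observation already made after Definition \ref{def:compound}: a path carrying a constructive edge ($1$ or $2$) before a destructive one ($\bar{1}$ or $\bar{2}$) corresponds exactly to an occurrence of case \ref{item:diamond} (or its degree-$1$ analogue). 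In such a pattern the two created leaves are duplicated from a common label $s$, transformed by the sub-operations $D_2, D_3$, and must be made equal again for $\nbcopy{n}{2}$ to apply; the pattern therefore acts as the identity on the tree shape while imposing the condition that $D_2(s)$ and $D_3(s)$ both be defined and equal. By the closure properties of $\Rec(\Stacks_{n-1})$ this condition defines a recognisable set $L'$, so I would replace the whole pattern by a single test $T_{L'}$ on the surviving node. After exhaustively applying these rewrites, case \ref{item:diamond} no longer occurs and every path label lies in $\mathrm{Red}_n$; termination follows from a well-founded measure such as the total number of duplication ($\cop{i}$ and $\bcopy{n}{k}$) edges, which each rewrite strictly decreases.

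To lift this to automata I would argue that the reduction is a \emph{recognisable} transduction of operations: every local rewrite depends only on a bounded neighbourhood and on the states labelling the endpoints of the affected edges, so one can build $A_r$ by a saturation/product construction that simulates $A$ while contracting each cancelled pattern into a test transition between the corresponding states. Since each reduction preserves $\mathcal{R}_D$, leaves the input and output degrees unchanged (the contracted patterns are internal), and the inserted tests faithfully capture the lost introspection and hence the domain, the tuple-based definition of $\mathcal{R}(A)$ via $\graphaut{A}^k$ is preserved, giving $\mathcal{R}(A) = \mathcal{R}(A_r)$; moreover $A_r$ is normalised because it accepts only reduced operations. Making $A_r$ \emph{distinguished} is then a routine clean-up: duplicate each state that is simultaneously initial (resp. final) and the target (resp. source) of a transition into a fresh copy used solely as an entry (resp. exit) point, redirecting transitions accordingly without altering the accepted language.

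The main obstacle I anticipate is the tree-level elimination of case \ref{item:diamond} and its integration into the automaton. Unlike Carayol's purely linear setting, the introspection to be removed here spans two parallel branches that are subsequently merged, so I must (i) verify that the equality-and-definedness condition extracted from $D_2$ and $D_3$ is genuinely recognisable over $(n-1)$-stacks, which requires combining the reduced forms of $D_2, D_3$ with intersection and domain operations in $\Rec(\Stacks_{n-1})$, and (ii) show that contracting a two-branch sub-DAG into a single test edge is compatible with the automaton's branching transitions $(p,(q,r))$ and $((p,q),r)$, so that both recognisability and the induced relation are preserved. The stack-level cancellations and the final distinguishing step are comparatively standard.
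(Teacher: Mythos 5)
Your overall strategy matches the paper's: both rest on the same two ingredients (replacing stack-level $\cop{i}\,\ncop{i}$ introspection by tests following \cite{Carayol05}, and replacing tree-level copy-then-uncopy ``bubbles'' by tests on the surviving node), followed by a destructive-before-constructive phase separation and a final distinguishing step. The organisational difference is that you reduce a single operation first and then lift, whereas the paper never states an operation-level normal-form lemma: it transforms the automaton directly in six stages (splitting tree/stack states with $\Id$ transitions, saturating with test shortcuts for bubbles, splitting into destructive and constructive states, normalising the stack part per pair of states via \cite{Carayol05}, eliminating $\Id$, and splitting test from non-test states).

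Two points in your argument need repair. First, the condition you extract from a diamond pattern is too weak: for $\nbcopy{n}{2}$ to apply after $\bcopy{n}{2}$ has duplicated a leaf labelled $s$, both children must carry the \emph{same label as their parent}, i.e.\ $D_2(s) = D_3(s) = s$, not merely ``$D_2(s)$ and $D_3(s)$ both defined and equal''; the pattern is the identity on labels as well as on shape. With your $L'$ the reduced operation would apply to stack trees outside the domain of the original one, and your claimed equality $\mathcal{R}_D = \mathcal{R}_{D_r}$ would fail. Second, the lift to automata is not as routine as you suggest. A finite automaton carries only finitely many test transitions, so the test inserted between states $q$ and $q'$ must be built from the \emph{languages} of all operations the automaton can read between the relevant states, not from the particular sub-operations $D_2,D_3$ occurring in one accepted DAG. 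Consequently the relation of an individual accepted operation is only \emph{included} in that of its surrogate with the coarser test (the paper's corresponding lemma is stated as an inclusion $\mathcal{R}_D \subseteq \mathcal{R}_G$), and one must argue separately that the surrogate is itself accepted by the same automaton so that the union $\mathcal{R}(A)$ is unchanged. Your phrase ``since each reduction preserves $\mathcal{R}_D$'' glosses over exactly this point, which is where the substance of the automaton-level construction lies.
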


The idea of the construction is to 
transform $A$ in several steps, each modifying the set of accepted
operations but not the recognised relation. The proof relies on the
closure properties of regular sets of $(n-1)$-stacks and an analysis
of the structure of $A$.
We show in particular, using a saturation technique, that the set of
states of $A$ can be partitioned into \emph{destructive states} (which
label the destructive phase of the operation, which does not contain
the $\bcopy{n}{i}$ operation) and the \emph{constructive states}
(which label the constructive phase, where no $\nbcopy{n}{i}$
occurs). These sets are further divided into \emph{test states}, which
are reached after a test has been performed (and only then) and which
are the source of no test-labelled transition, and the others. This
transformation can be performed without altering the accepted relation
over stack trees.

\section{Rewriting Graphs of Stack Trees}\label{sec:rew_graph}

In this section, we study the properties of ground stack tree rewriting 
graphs.
Our goal is to show that the graph of any $\Gamma$-labelled GSTRS has a decidable
FO$[\xrightarrow{*}]$ theory. We first state that there exists a
distinguished and
reduced automaton $A$ recognising the derivation relation $\xrightarrow{*}_R$ of $R$, and
then show, following \cite{ColcombetL07}, that there exists a finite-set 
interpretation of $\xrightarrow{*}_R$ and every $\xrightarrow{a}_R$ for 
$(D, a) \in R$ from a graph with decidable WMSO-theory
.

\begin{theorem}
  \label{thm:fo}
  Given a $\Gamma$-labelled GSTRS $R$, $\mathcal{G}_R$ has a
  decidable FO$[ \xrightarrow{*} ]$ theory.
\end{theorem}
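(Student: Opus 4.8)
The plan is to establish the result along the lines of \cite{DauchetT90,ColcombetL07}, combining two ingredients. First, I would show that the whole derivation relation $\xrightarrow{*}_R$ is recognised by a single operation automaton. Second, following \cite{ColcombetL07}, I would exhibit a finite-set interpretation that produces $\mathcal{G}_R$, equipped with its reachability relation, from a suitable $n$-pushdown graph $G$ whose MSO theory is decidable. Since finite-set interpretations send structures of decidable MSO theory to structures of decidable FO theory, and since the reachability predicate is one of the interpreted relations, this yields decidability of FO$[\xrightarrow{*}]$ over $\mathcal{G}_R$.

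For the first ingredient, write $R = \{(D_1,a_1),\ldots,(D_m,a_m)\}$. The finite set $\{D_1,\ldots,D_m\}$ is recognised by some operation automaton $A_0$, so that $\mathcal{R}(A_0) = \xrightarrow{}_R$. By Prop. \ref{prn:closure}, $\Rec$ is closed under iterated concatenation, and the automaton recognising $\mathrm{Op}(A_0)^*$ defines the relation $\mathcal{R}(A_0)^* = \xrightarrow{*}_R$; hence there is an operation automaton $A$ with $\mathcal{R}(A) = \xrightarrow{*}_R$. Applying the normalisation proposition above, I may take $A$ distinguished and reduced without altering $\mathcal{R}(A)$. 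This is what makes the interpretation feasible: a reduced operation destructs along each branch before reconstructing it (case \ref{item:diamond} of Def. \ref{def:compound} never occurring), so the correspondence between the nodes of a source tree $t$ and those of a related target $t'$ is structurally constrained, and its tests range over regular sets of $(n-1)$-stacks.

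For the second ingredient, I would fix a graph $G$ of the hierarchy with decidable MSO theory and encode each stack tree $t \in \TS_n$ as a finite set of vertices of $G$, one per node of $t$, each recording both the node's position in $\{1,2\}^*$ and its $(n-1)$-stack label, in such a way that the branching structure and the stack contents are recovered by WMSO-definable relations on $G$. I would then provide WMSO formulas defining, over such codes: the set of valid codes of stack trees; each one-step relation $\xrightarrow{a_j}_R$; and the reachability relation $\xrightarrow{*}_R$. The last point is the crux: membership of a pair $(t,t')$ in $\mathcal{R}(A)$ amounts to the existence of an accepting state-labelling of a tuple of reduced operations of $A$ joining the code of $t$ to that of $t'$. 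Because $A$ is reduced and distinguished and its tests denote regular sets of $(n-1)$-stacks --- which are WMSO-definable in $G$ by \cite{Carayol05} --- this existence can itself be written as a WMSO formula over $G$, yielding the desired finite-set interpretation.

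The main obstacle is exactly this last translation. One must express, within a single WMSO formula over the base graph, the existence of a consistent labelling of the operation DAGs of $A$ (covering the three transition shapes of $\Delta$), simultaneously matching the source and target codes node-by-node through the destructive and then constructive phases and discharging the regular tests on node labels. The normalised and distinguished form of $A$, together with the two-phase (destruct-then-reconstruct) shape of reduced operations, bounds how branches interact and keeps this property finite-set definable; checking that the resulting formulas define precisely the intended relations, and that the whole package is a legitimate finite-set interpretation in the sense of \cite{ColcombetL07}, is where the real work lies. Granting this, decidability of FO$[\xrightarrow{*}]$ on $\mathcal{G}_R$ is immediate from the decidability of the MSO theory of $G$ and the transfer theorem of \cite{ColcombetL07}.
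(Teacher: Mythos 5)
Your proposal follows essentially the same route as the paper: build an operation automaton for $\xrightarrow{*}_R$ via closure under iterated concatenation, normalise and distinguish it, then finite-set interpret the graph (with its reachability relation) into the order-$n$ treegraph using WMSO formul\ae{} for valid codes, one-step edges, and the existence of an accepting run over reduced DAGs, concluding by the transfer theorem of \cite{ColcombetL07}. The only cosmetic difference is the encoding --- the paper codes a stack tree by the set of $n$-stacks recording its root-to-leaf paths (one per leaf) rather than one vertex per node --- which does not change the argument.
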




To prove this theorem, we show that the graph $\mathcal{H}_R = (V,E)$ with $V = 
\TS_{n}$ and $E = (\xrightarrow{*}_R) \cup \bigcup_{a \in \Gamma} (\xrightarrow{a}_R)$
 obtained by adding the relation $\xrightarrow{*}_R$ to
 $\mathcal{G}_R$ has a decidable FO theory.
 To do so, we show that $\mathcal{H}_R$ is finite-set interpretable
 inside a structure with a decidable WMSO-theory, and conclude using
 Corollary 2.5 of \cite{ColcombetL07}. Thus from Section 5.2 of the
 same article, it follows that the rewriting graphs of GSTRS are in
 the tree-automatic  hierarchy.

Given a $\Gamma$-labelled GSTRS $R$ over $\TS_n$, we choose to interpret 
$\mathcal{H}_R$ inside the \emph{order $n$ Treegraph}
$\Delta^n$ 
over alphabet $\Sigma\cup \{1,2\}$. Each vertex
of this graph is an $n$-stack, and there is an edge
$s\xrightarrow{\theta} s'$ if and only if $s' = \theta(s)$ with $\theta \in
\Ops{n}\cup \Tests{n}$. This graph belongs to the  
$n$-th level of the pushdown hierarchy and has a decidable WMSO
theory\footnotemark. 

\footnotetext{It is in fact a generator of this class of graphs via
  WMSO-interpretations (see \cite{CarayolW03} for additional
  details).}


Given a stack tree $t$ and a position $u \in dom(t)$, we denote by
$\Code{t}{u}$ the $n$-stack 
$\stack{n}{\push{w_0}(t(\varepsilon)),\push{w_1}(t(u_{\leq 
1})),\cdots,\push{w_{ |u|-1 } } 
(t(u_{\leq |u|-1})),t(u)}$, where $\push{w}(s)$ is 
obtained by adding the word $w$ at the top of the top-most 
1-stack in $s$, and $w_i = \#(u_{\leq i}) 
u_{i+1}$. This stack $\Code{t}{u}$ is the encoding of the node at 
position $u$ in $t$. Informally, it is obtained by storing in an $n$-stack  
the sequence of $(n-1)$-stacks labelling nodes from the root of $t$
to position $u$, and adding at the top of each $(n-1)$-stack the number of children 
of the corresponding node of $t$ and the next direction taken to reach node $u$.
Any stack tree $t$ is then encoded by the finite set of $n$-stacks
$X_t = \{\Code{t}{u} \mid u \in fr(t)\}$, i.e. the set of encodings of
its leaves. Observe that this coding is injective.

\begin{example}
 The coding of the stack tree $t$ depicted in Fig. \ref{fig:stack_tree} is:
 
 \begin{tabular}{lll}
  $X_t = $ & $\{$ & 
$\stack{3}{\stack{2}{\stack{1}{aa}\stack{1}{bab21}}\stack{2}{\stack{1}{aa}
\stack{1}{aaa11}}\stack{2}{\stack{1}{ab}}}$,\\
& & 
$\stack{3}{\stack{2}{\stack{1}{aa}\stack{1}{bab22}}\stack{2}{\stack{1}{aa}\stack
{1}{a}\stack{1}{b21}}\stack{2}{\stack{1}{ba}\stack{1}{ba}\stack{1}{b}}}$,\\
& & 
$\stack{3}{\stack{2}{\stack{1}{aa}\stack{1}{bab22}}\stack{2}{\stack{1}{aa}\stack
{1}{a}\stack{1}{b22}}\stack{2}{\stack{1}{abb}\stack{1}{ab}}}\}$
 \end{tabular}

\end{example}

We now represent any relation $S$ between two stack trees as a
WMSO-formula with two free second-order variables, which holds in
$\Delta^n$ over sets $X_s$ and $X_t$ if and only if $(s, t) \in S$.


\begin{proposition}
 Given a $\Gamma$-labelled GSTRS $R$, there exist WMSO-formul\ae{} $\delta, 
\Psi_a$ and $\phi$ such that:
\begin{itemize}
 \item $\Delta_{\Sigma\cup 
\{1,2\}}^n \models \delta(X)$ if and only if $\exists t \in \TS_n, X = 
X_t$,
 \item $\Delta_{\Sigma\cup \{1,2\}}^n \models \Psi_a(X_s,X_t)$ if and only 
if $t\in D(s)$ for some $(D, a) \in R$,
 \item $\Delta_{\Sigma\cup \{1,2\}}^n \models \phi(X_s,X_t)$ if and only if 
$s \xrightarrow{*}_R t$.
\end{itemize}
\end{proposition}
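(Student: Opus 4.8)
The plan is to build the three formul\ae{} compositionally, relying on two facts about $\Delta^n$. First, membership of a single $n$-stack in any regular set $L \in \Rec(\Stacks_{n})$ is first-order definable, since $s \in L$ holds exactly when $s$ carries a self-loop labelled by the test $T_L$. Second, WMSO over $\Delta^n$ provides finite-set quantification together with the reachability predicate, so that bounded sequences of basic operations linking two stacks, as well as their iterations, are definable. In particular, the shape of a single encoding $\Code{t}{u}$ --- each non-top $(n-1)$-stack component ending in a two-letter annotation $\#(u_{\leq j})\,u_{j+1} \in \{1,2\}\{1,2\}$ whose direction is bounded by the arity, the top component carrying no annotation --- is a regular property of $n$-stacks, hence captured by a single test $T_{\mathrm{enc}}$.

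For $\delta(X)$ I would conjoin three WMSO conditions. (i) Every $s \in X$ passes the test $T_{\mathrm{enc}}$. (ii) Any two distinct $s_1, s_2 \in X$ are \emph{path-compatible}: peeling top components of each via $\ncop{n}$ and erasing annotations, they agree on a maximal common prefix of components and then differ precisely in a direction symbol, witnessing a deepest common ancestor from which they branch. (iii) $X$ is \emph{frontier-closed}: its elements are pairwise incomparable for the ancestor relation, and whenever an encoding records an arity-$2$ branching at some ancestor both directions are realised by some element of $X$, left-closedness being enforced by the direction annotations. By injectivity of the coding, a set satisfying (i)--(iii) is exactly some $X_t$.

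Since $R$ is finite, $\Psi_a(X_s,X_t)$ is a disjunction over the rules $(D,a) \in R$. Applying $D$ at a leaf replaces the subtree rooted at some node $p$ by a new one while leaving every leaf not below $p$ untouched; thus $X_s$ and $X_t$ share all encodings that do not extend the branch stack $b$ coding $p$, whereas the encodings extending $b$ --- the old leaves in $X_s$, the new ones in $X_t$ --- are related through the fixed DAG $D$. As $D$ is a finite object, the link between an old-leaf stack and its image is a bounded composition of the operations labelling $D$ (together with the annotation updates caused by arity changes along the path to $p$), definable by guarded reachability in $\Delta^n$, and the matching of old to new leaves follows the $\bcopy{n}{2}/\nbcopy{n}{2}$ branching of $D$. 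For $\phi(X_s,X_t)$ I would instead use the distinguished normalised automaton $A$ recognising $\xrightarrow{*}_R$ (asserted above); its reduced two-phase form (no diamond case) factors any accepting run so that from each old leaf one climbs to $b$ through the destructive phase and from $b$ one descends to each new leaf through the constructive phase. I would then assert the existence of finite sets of stacks witnessing a consistent state-labelling of the affected region, with the climb from each old-leaf stack of $X_s$ to $b$ and the descent to each new-leaf stack of $X_t$ realised by reachability paths whose labels are exactly the operations and tests admitted by $A$ --- the state constraints becoming regular self-loop tests along these paths --- while all other leaves are shared as in $\Psi_a$.

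The main difficulty is the run encoding in $\phi$, together with the completeness half of $\delta$: the affected subtree holds an unbounded number of leaves, so the transformation cannot be described leaf-by-leaf in first order. The delicate point is to encode, by finite sets of stacks, a run of the operation automaton spread over this unbounded region, matching its tree-automaton-like transitions to MSO-definable families of reachability paths in $\Delta^n$, while correctly maintaining the branching annotations on the shared path to $p$ (which shift whenever an arity changes). Making this bookkeeping line up so that $X_t$ is exactly the frontier of the rewritten tree --- neither more nor fewer leaves --- is where the real work lies.
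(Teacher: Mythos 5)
Your overall plan coincides with the paper's: encode each stack tree by the finite set of $n$-stacks coding its leaves, define $\delta$ by local conditions on such a set, define $\Psi_a$ as a disjunction of formul\ae{} built by structural induction on each finite DAG $(D,a)\in R$, and define $\phi$ by existentially quantifying finite sets that label stacks with states of a distinguished normalised automaton recognising $\xrightarrow{*}_R$. Your conditions for $\delta$ are essentially the paper's $\mathrm{OnlyLeaves}\wedge\mathrm{TreeDom}\wedge\mathrm{UniqueLabel}$, and your treatment of $\Psi_a$ matches the paper's $\Psi_D$.

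The gap is in the soundness direction of $\phi$: a ``consistent state-labelling'' in your sense (initial states only on elements of $X_s$, final states only on elements of $X_t$, every labelled stack having successors as prescribed by some transition) does not by itself guarantee $s\xrightarrow{*}_R t$. Several runs, or fragments of runs, can overlap on the same stacks, and the induced labelled graph need not decompose into a disjoint union of compound operations accepted by the automaton applied at disjoint positions --- a stack could acquire three ``children'' coming from two unrelated transitions, or a cycle of order-$(n-1)$ operations could be read off the labelling. The paper's formula therefore carries a third conjunct $\mathrm{Diff}(\vec Z)$ requiring that no stack be labelled by two states of the same part of the state space (destructive or constructive, crossed with test or non-test), and the correctness proof extracts the operations from a \emph{minimal} satisfying labelling; this is precisely where normalisation is used --- a stack may legitimately be revisited once in the destructive phase and once in the constructive phase, so plain pairwise disjointness of the $Z_q$ is unavailable, and reducedness is what rules out identity loops in the extracted graph. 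Your proposal invokes the normalised two-phase automaton but never exploits it for this purpose, and it also frames $\phi$ around a single branch stack $b$, whereas $\xrightarrow{*}_R$ may apply operations at several disjoint positions simultaneously; both points need to be repaired before the ``only if'' half of the third bullet goes through.
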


First note that the intuitive idea behind this interpretation is to
only work on those vertices of $\Delta^n$ which are the encoding of
some node in a stack-tree. Formula $\delta$ will distinguish, amongst
all possible finite sets of vertices, those which correspond to the
set of encodings of all leaves of a stack-tree. Formul\ae{} $\Psi_a$ and
$\phi$ then respectively check the relationship through
$\xrightarrow{a}_R$ (resp. $\xrightarrow{*}_R$) of a pair of
stack-trees.
We give here a quick sketch of the formul\ae{} and a glimpse of their
proof of correction.  More details can be found in appendix
\ref{annex:fsi}.

Let us first detail formula $\delta$, which is of the form
\[
\delta(X)  =  \mathrm{OnlyLeaves}(X) \wedge 
\mathrm{TreeDom}(X) \wedge \mathrm{UniqueLabel}(X).
\]
$\mathrm{OnlyLeaves}(X)$ holds if every element of $X$ codes for a
leaf.  $\mathrm{TreeDom}(X)$ holds if the induced domain is the domain
of a tree and the arity of each node is consistent with the elements
of $X$.  $\mathrm{UniqueLabel}(X)$ holds if for every position $u$ in
the induced domain, all elements which include $u$ agree on its
label. 

\smallskip

%

From here on, variables $X$ and $Y$ will respectively stand for the
encoding of some input stack tree $s$ and output stack-tree $t$.
For each $a \in \Gamma$, $\Psi_a(X, Y)$ is the disjunction of a family
of formul\ae{} $\Psi_D(X, Y)$ for each $(D, a) \in R$. Each $\Psi_D$ is
defined by induction over $D$, simulating each basic operations in
$D$, ensuring that they are applied according to their respective
positions, and to a single closed subtree of $s$ (which simply
corresponds to a subset of $X$), yielding $t$.

\smallskip

Let us now turn to formula $\phi$. Since the set of DAGs in $R$ is
finite, it is recognisable by an operation automaton. Since $\Rec$ is
closed under iteration (Cf. Sec. \ref{sec:gstt}), one may build a
distinguished normalised automaton accepting $\xrightarrow{*}_R$. What
we thus really show is that given such an automaton $A$, there exists
a formula $\phi$ such that $\phi(X, Y)$ holds if and only if
$t \in \bar{D}(s)$ for some vector $\bar{D} = D_1, \ldots D_k$ of DAGs
accepted by $A$.
Formula $\phi$ is of the form
\[
\phi(X,Y) = \exists \vec{Z}, \init(X,Y,\vec{Z}) \wedge
\diff(\vec{Z}) \wedge \mathrm{Trans}(\vec{Z}).
\]
%
%
Following a common pattern in automata theory, this formula expresses
the existence of an accepting run of $A$ over some tuple of reduced
DAGs $\bar{D}$, and states that the operation corresponding to
$\bar{D}$, when applied to $s$, yields $t$.
Here, $\vec{Z} = Z_{q_1},\cdots,Z_{q_{|Q_A|}}$ defines a labelling of a subset of 
$\Delta_{\Sigma\cup\{1,2\}}^n$ with the states of the automaton, each element 
$Z_q$ of $\vec{Z}$ representing the set of nodes labelled by a given
control state $q$. 
Sub-formula $\init$ checks that only the elements of $X$ (representing
the leaves of $s$) are labelled by initial states, and only those in
$Y$ (leaves of $t$) are labelled by final states.
$\mathrm{Trans}$ ensures that the whole labelling respects the
transition rules of $A$. For each component $D$ of $\bar{D}$, and
since every basic operation constituting $D$ is applied locally and
has an effect on a subtree of height and width at most $2$, this
amounts to a local consistency check between at most three vertices,
encoding two nodes of a stack tree and their parent node. The relative
positions where basic operations are applied is checked using the sets
in $\vec{Z}$, which represent the flow of control states at each step
of the transformation of $s$ into $t$.
Finally, $\diff$ ensures that no stack is labelled by two states
belonging to the same part (destructive, constructive, testing or
non-testing) of the automaton, thus making sure we simulate a unique
run of $A$. This is necessary to ensure that no spurious run is
generated, and is only possible because $A$ is normalised.

\section{Perspectives}\label{sec:concl}


There are several open questions arising from this work. The first one
is the strictness of the hierarchy, and the question of finding simple
examples of graphs separating each of its levels with the
corresponding levels of the pushdown and tree-automatic hierarchies.
A second interesting question concerns the trace languages of stack
tree rewriting graphs. It is known that the trace languages of higher-order 
pushdown automata are the indexed languages \cite{Caucal96}, that the class 
of languages recognised by automatic structures are the context-sensitive 
languages \cite{Rispal02} and that those recognised by tree-automatic 
structures form the class \textsc{Etime} \cite{Meyer08}. However there is to our 
knowledge no characterisation of the languages recognised by ground tree 
rewriting systems. 
It is not hard to define a 2-stack-tree rewriting 
graph whose path language between two specific vertices is $\{u \shuffle u\mid 
u\in \Sigma^*\}$, which we believe cannot be recognised using tree rewriting 
systems or higher-order pushdown
automata\footnote{$\shuffle$ denotes the shuffle product. For every $u,v \in 
\Sigma^*$ and $a,b \in \Sigma$, $u\shuffle \varepsilon = \varepsilon \shuffle 
u = u$, $au \shuffle bv = a(u\shuffle bv) \cup b(au \shuffle v)$}. 
%
Finally, the model of stack trees can be readily extended to trees
labelled by trees. Future work will include the question of extending
our notion of rewriting and Theorem \ref{thm:fo} to this model.

\bibliographystyle{plain}
\bibliography{stack_tree}


\clearpage
\appendix

\section{Properties of Operation Automata}

%

%
%
%
%





In this section, we show that $\mathrm{Rec}$ is closed under union, 
intersection, 
iteration and contains the finite sets of operations.

\begin{proposition}\label{prop_union}
Given two automata $A_1$ and $A_2$, there exists an automaton $A$ such that 
$\graphaut{A} = \graphaut{A_1} \cap \graphaut{A_2}$
\end{proposition}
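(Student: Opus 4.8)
The plan is to use the standard synchronised product construction, adapted to the DAG-automaton model at hand. I assume without loss of generality that $A_1$ and $A_2$ share the same stack alphabet $\Sigma$ (otherwise no operation could be accepted by both, since the labels $\theta \in \Ops{n-1}$ of the unary edges depend on $\Sigma$, and the intersection is empty, recognised e.g.\ by an automaton with no final state). Writing $A_i = (Q_i, \Sigma, I_i, F_i, \Delta_i)$ for $i \in \{1,2\}$, I define $A = (Q_1 \times Q_2, \Sigma, I_1 \times I_2, F_1 \times F_2, \Delta)$ whose transition set $\Delta$ synchronises the two components in each of the three shapes of the model: $((p_1,p_2), \theta, (q_1,q_2)) \in \Delta$ iff $(p_1,\theta,q_1) \in \Delta_1$ and $(p_2,\theta,q_2) \in \Delta_2$ for $\theta \in \Ops{n-1} \cup \{\bcopy{n}{1},\nbcopy{n}{1}\}$; next $((p_1,p_2),((q_1,q_2),(r_1,r_2))) \in \Delta$ iff $(p_1,(q_1,r_1)) \in \Delta_1$ and $(p_2,(q_2,r_2)) \in \Delta_2$; and symmetrically $(((p_1,p_2),(q_1,q_2)),(r_1,r_2)) \in \Delta$ iff $((p_1,q_1),r_1) \in \Delta_1$ and $((p_2,q_2),r_2) \in \Delta_2$.

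The correctness reduces to observing that the accepting labellings of a fixed operation $D$ by $A$ are exactly the pairings of accepting labellings by $A_1$ and $A_2$. First I would check the easy direction: given an accepting labelling $\rho \colon V_D \to Q_1 \times Q_2$ of $D$ by $A$, the two projections $\rho_1 = \pi_1 \circ \rho$ and $\rho_2 = \pi_2 \circ \rho$ are accepting labellings of $D$ by $A_1$ and $A_2$ respectively. Indeed, input (resp.\ output) vertices are mapped by $\rho$ into $I_1 \times I_2$ (resp.\ $F_1 \times F_2$), so each projection sends them into $I_i$ (resp.\ $F_i$); and by the definition of $\Delta$ as a componentwise product, each of the three consistency conditions for $\rho$ against $\Delta$ immediately entails, edge by edge, the corresponding condition for $\rho_i$ against $\Delta_i$. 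Hence $D \in \graphaut{A_1} \cap \graphaut{A_2}$.

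For the converse, given $D \in \graphaut{A_1} \cap \graphaut{A_2}$ witnessed by accepting labellings $\rho_1$ and $\rho_2$ of the \emph{same} DAG $D$, I would form $\rho(x) = (\rho_1(x), \rho_2(x))$. Because both labellings range over the identical vertex set $V_D$ and identical edge set $E_D$, the pairing $\rho$ sends input vertices to $I_1 \times I_2$ and output vertices to $F_1 \times F_2$, and the three synchronised transition conditions defining $\Delta$ hold at every edge precisely because the matching conditions hold for $\rho_1$ in $\Delta_1$ and for $\rho_2$ in $\Delta_2$. Thus $\rho$ witnesses $D \in \graphaut{A}$, and the two inclusions give $\graphaut{A} = \graphaut{A_1} \cap \graphaut{A_2}$.

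There is no real obstacle here: the argument is a routine synchronised-product construction, and the only point requiring a little care is that the three shapes of transition in this automaton model --- the unary edges, the $\bcopy{n}{2}$-style forks, and the $\nbcopy{n}{2}$-style joins --- must each be synchronised separately, so that the componentwise projection of a labelling is again a valid labelling, and conversely. Because acceptance is defined here by a vertex labelling rather than by a traversal, the right objects to combine are the labellings themselves, which is exactly what the product does; closure under union is obtained separately, by a disjoint union of the two automata.
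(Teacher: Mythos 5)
Your proof is correct and follows essentially the same route as the paper: the synchronised product automaton on $Q_1 \times Q_2$ with componentwise transitions for each of the three transition shapes, with correctness established by projecting and pairing accepting labellings. The only cosmetic difference is that the paper first completes both automata with a sink state, a step that is in fact unnecessary for intersection and that you rightly omit.
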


\begin{proof}
We will construct an automaton which witness Prop \ref{prop_union}.
First, we ensure that the two automata are complete by adding a sink state if 
some transitions do not exist.
We construct then the automaton $A$ which is the product automaton of $A_1$ and 
$A_2$:

$Q= Q_{A_1} \times Q_{A_2}$

$I= I_{A_1}\times I_{A_2}$

$F= F_{A_1} \times F_{A_2}$

\begin{tabular}{lll}
$\Delta$ & $=$    & $\{((q_1,q_2),\theta,(q'_1,q'_2)) \mid (q_1,\theta,q'_1) 
\in \Delta_{A_1} \land (q_2,\theta,q'_2)\in \Delta_{A_2}\}$\\
         & $\cup$ & $\{(((q_1,q_2),(q'_1,q'_2)),(q''_1,q''_2)) \mid 
((q_1,q'_1),q''_1)\in \Delta_{A_1} \land ((q_2,q'_2),q''_2)\in 
\Delta_{A_2}\}$\\
         & $\cup$ & $\{((q_1,q_2),((q'_1,q'_2),(q''_1,q''_2))) \mid 
(q_1,(q'_1,q''_1))\in \Delta_{A_1} \land (q_2,(q'_2,q''_2))\in\Delta_{A_2}\}$\\
\end{tabular}

If an operation admits a valid labelling in $A_1$ and in $A_2$, then the 
labelling 
which labels each states by the two states it has in its labelling in $A_1$ and 
$A_2$ is valid.
If an operation admits a valid labelling in $A$, then, restricting it to the 
states of $A_1$ (resp $A_2$), we have a valid labelling in $A_1$ (resp $A_2$).
 \qed
\end{proof}

%

\begin{proposition}
Given two automata $A_1$ and $A_2$, there exists an automaton $A$ such that 
$\graphaut{A} = \graphaut{A_1} \cup \graphaut{A_2}$
\end{proposition}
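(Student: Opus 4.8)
The plan is to mirror the product construction used for intersection in Proposition \ref{prop_union}, but to replace the synchronising acceptance condition by a disjunctive one. The key observation is that an operation is a DAG whose input vertices must be labelled by initial states and whose output vertices by final states, with internal transitions consistent with $\Delta$. For union, I want an operation to be accepted exactly when it admits a consistent labelling in $A_1$ \emph{or} in $A_2$, so the two automata must not interfere with one another.

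First I would take the disjoint union of the two automata, forming $A$ with $Q = Q_{A_1} \uplus Q_{A_2}$, $I = I_{A_1} \uplus I_{A_2}$, $F = F_{A_1} \uplus F_{A_2}$, and $\Delta = \Delta_{A_1} \uplus \Delta_{A_2}$, where all transitions are kept intact over their respective state sets. Because the state sets are disjoint, no transition of $\Delta$ ever mixes a state of $A_1$ with a state of $A_2$: every binary transition $(p,(q,r))$ or $((p,q),r)$ in $\Delta$ has all three components drawn from a single $Q_{A_i}$. I would argue by induction on the structure of the operation (Def. \ref{def:compound}), or more directly by connectedness, that any consistent labelling of a given operation $D$ must assign states from a single $Q_{A_i}$ to all of $V_D$: since the DAG underlying any operation is connected and every transition keeps both endpoints within the same $Q_{A_i}$, the labels cannot switch components along any edge, hence the whole labelling lives in one $A_i$.

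From this the two inclusions follow immediately. If $D$ is accepted by $A_i$, its witnessing labelling is still a valid labelling in $A$ (since $\Delta_{A_i} \subseteq \Delta$, $I_{A_i} \subseteq I$, $F_{A_i} \subseteq F$), so $D \in \graphaut{A}$. Conversely, if $D \in \graphaut{A}$, the confinement argument shows its labelling uses only states of some single $Q_{A_i}$; restricting everything to $A_i$ yields a valid accepting labelling there, so $D \in \graphaut{A_i}$. Hence $\graphaut{A} = \graphaut{A_1} \cup \graphaut{A_2}$.

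The only genuinely delicate point is the confinement claim, namely that a valid labelling of a single operation cannot straddle both state sets. This relies on the fact that the DAG of any compound operation is connected (which the paper notes is preserved by $(i,j)$-concatenation, and which holds for each basic $D_\theta$ in $\graphops{n}$), together with the disjointness of the state sets and the observation that every transition in $\Delta$ is internal to one $A_i$. Without connectedness one could in principle label different connected components from different automata, which would break the argument; so I expect the main obstacle to be stating and invoking connectedness cleanly, after which the rest is routine.
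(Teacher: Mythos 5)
Your proof is correct and follows essentially the same route as the paper: form the disjoint union of the two automata and observe that, since no transition mixes states of the two components and every compound operation is a connected DAG, any accepting labelling is confined to a single $Q_{A_i}$. Your explicit appeal to connectedness makes precise a point the paper's own proof leaves implicit, but the construction and argument are the same.
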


\begin{proof}
We take the disjoint union of $A_1$ and $A_2$:

$Q = Q_{A_1} \uplus Q_{A_2}$

$I = I_{A_1} \uplus I_{A_2}$

$F = F_{A_1} \uplus F_{A_2}$

$\Delta = \Delta_{A_1} \uplus \Delta_{A_2}$

If an operation admits a valid labelling in $A_1$ (resp $A_2$), it is also a 
valid 
labelling in $A$.
If an operation admits a valid labelling in $A$, as $A$ is a disjoint union of 
$A_1$ 
and $A_2$, it can only be labelled by states of $A_1$ or of $A_2$ (by 
definition, there is no transition between states of $A_1$ and states of $A_2$) 
and then the labelling is valid in $A_1$ or in $A_2$.
 \qed
\end{proof}

%

\medskip

\begin{proposition}
Given an automaton $A$, there exists $A'$ which recognises $\graphaut{A}^*$.
\end{proposition}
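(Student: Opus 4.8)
The plan is to build $A'$ by taking the automaton $A$ and adding transitions that allow the final states of one copy of an accepted operation to serve as the initial states of another, thereby ``stitching'' together iterated concatenations. Concretely, I would first invoke the earlier remark that $\graphops{n}^*$ is exactly the set of well-formed compound operations and that $\mathrm{Op}(A)^*$ is obtained by closing $\mathrm{Op}(A)$ under the concatenation operation $\cdot$. Recall that a vector $\bar{D}$ witnessing membership in the iterated relation is a tuple of operations, and that $D^* = \bigcup_{n \geq 0} D^n$ with $D^n = \bigcup_{i<n} D^i \cdot D^{n-i}$; so recognising $\graphaut{A}^*$ amounts to recognising all DAGs obtained by repeatedly performing $(i,j)$-concatenations of DAGs in $\graphaut{A}$.

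First I would identify, in the accepting labellings of $A$, the interface between the output vertices (labelled by final states) and the input vertices (labelled by initial states). Since an $(i,j)$-concatenation merges a contiguous block of output vertices of one operation with a contiguous block of input vertices of another, the merged vertex in $D \cdot_{i,j} D'$ simultaneously plays the role of an output vertex of $D$ and an input vertex of $D'$. The key construction is therefore to introduce new ``glue'' states: for each pair $(f, \iota) \in F \times I$, I would add the ability for a single vertex to be labelled consistently both as a final state $f$ (validating the incoming transition from the first operation) and as an initial state $\iota$ (validating the outgoing transition into the second). The cleanest way to realise this is to add to $\Delta$ a set of $\varepsilon$-like shortcut transitions, or equivalently to saturate the transition relation so that whenever a vertex could be final it may also be treated as initial, and then declare a new distinguished set of initial and final states so that only genuine first-inputs and last-outputs are constrained.

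The main steps, in order, are: (1) take the state set $Q' = Q$ together with possibly a marked copy to track whether we are at a genuine boundary of $\bar{D}$ versus an internal merge; (2) keep all transitions of $\Delta$ and add, for each final state $f$ and initial state $\iota$, the identification allowing a vertex labelled by the pair to satisfy both the ``ends a run'' and ``starts a run'' conditions at once; (3) set the new initial states to be $I$ and new final states to be $F$, so that input vertices of the whole DAG still carry initial states and output vertices still carry final states; (4) verify by induction on the definition of $D \cdot D'$ and on $D^n$ that a DAG admits a valid labelling in $A'$ if and only if it decomposes as an iterated concatenation of DAGs each admitting a valid labelling in $A$. The forward direction glues labellings along the merged vertices; the backward direction reads off the decomposition from the positions where a vertex uses a glue transition.

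The hard part will be handling the merging of \emph{contiguous blocks} of vertices correctly, since the $(i,j)$-concatenation does not merge all outputs with all inputs but only an aligned consecutive segment determined by $d = \min(\ell-i, k'-j)+1$. Because the output and input vertices of an operation are linearly ordered (per the inductive vertex ordering in Definition \ref{def:compound}), I must ensure the glue transitions respect this ordering so that an unmerged output vertex of $D$ remains an output vertex of the result and an unmerged input vertex of $D'$ remains an input vertex, rather than being spuriously forced into a boundary role. A careful treatment is needed so that the new automaton does not accept ``partial'' gluings that fail to correspond to any licit iterated concatenation; this is exactly where one uses the fact that $\cdot$ only retains well-formed operations. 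I expect to argue that since every DAG accepted by $A$ is already a well-formed compound operation, and since $\graphaut{A}^* \subseteq \graphops{n}^*$, the saturated automaton automatically restricts to well-formed results, and the correspondence between glue-labelled vertices and concatenation points yields the equivalence $\mathrm{Op}(A') = \mathrm{Op}(A)^*$.
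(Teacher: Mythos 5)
Your plan matches the paper's construction: the paper realises your ``glue'' by adding, for every transition of $A$ ending in a final state, a copy redirected to end in an initial state instead, so that a merged vertex (output of one operation, input of the next) carries an initial state, and correctness is argued exactly as you propose --- by induction on the iterated concatenation in one direction, and by cutting the DAG at the added transitions in the other. The only detail you omit is a fresh state that is both initial and final, which the paper adds so that the empty DAG $\emptydag$ (the $0$-th iterate) is accepted.
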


\begin{proof}
We construct $A'$.

$Q = Q_A \uplus \{q\}$

$I = I_A \cup \{q\}$

$F = F_A \cup \{q\}$

The set of transition $\Delta$ contains the transitions of $A$ together with 
multiple copies of each transition ending with a state in $F_A$, modified to 
end in a state belonging to $I_A$

\begin{tabular}{lll}
$\Delta$ & $=$    & $\Delta_A$\\
	 & $\cup$ & $\{(q_1,\theta,q_i) \mid q_i \in I_A,\exists q_f \in F_A, 
(q_1,\theta,q_f) \in \Delta_A\}$\\
	 & $\cup$ & $\{((q_1,q_2),q_i) \mid q_i \in I_A, \exists q_f \in F_A, 
((q_1,q_2),q_f) \in \Delta_A\}$\\
	 & $\cup$ & $\{(q_1,(q_2,q_i)) \mid q_i \in I_A, \exists q_f \in F_A, 
(q_1,(q_2,q_f))\in \Delta_A\}$\\
	 & $\cup$ & $\{(q_1,(q_i,q_2)) \mid q_i \in I_A, \exists q_f \in F_A, 
(q_1,(q_f,q_2))\in \Delta_A\}$\\
	 & $\cup$ & $\{(q_1,(q_i,q'_i)) \mid q_i,q'_i \in I_A, \exists q_f,q'_f 
\in F_A, (q_1,(q_f,q'_f))\in \Delta_A\}$\\
\end{tabular}

For every $k\in \mathbb{N}$, if $D \in (\graphaut{A}^k)$, it has a 
valid labelling in $A'$:
The operation $\emptydag$ has a valid labelling because $q$ is initial and 
final. So it is true for $(\graphaut{A}^0)$
If it is true for $(\graphaut{A}^k)$, we take an operation $G$ in 
$(\graphaut{A}^{k+1})$ and decompose it in $D$ of 
$\graphaut{A}$ and $F$ of $\graphaut{A}^k$ (or symmetrically, $D 
\in \graphaut{A}^k$ and $F \in \graphaut{A}^k$), 
such that $G \in D \cdot F$.
The labelling which is the union of some 
valid labellings for $D$ and $F$ and labels the identified nodes 
with the labelling of $F$ (initial states) is valid in $A$.

If an operation admits a valid labelling in $A'$, we can separate several 
parts of the operation, separating on the added transitions, and we obtain a 
collection of operations of $\graphaut{A}$. Then we have a graph in 
$\graphaut{A}^k$ for a given $k$.
Then $\graphaut{A'} = \bigcup_{k \geq 0}\graphaut{A}^k$, then 
$A'$ recognises $\graphaut{A}^*$.
%
%
 \qed
\end{proof}
\medskip

\begin{proposition}
Given an operation $D$, there exists an 
automaton $A$ such that $\graphaut{A} = \{D\}$.
\end{proposition}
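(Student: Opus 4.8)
The plan is to build an automaton whose states are literally the vertices of $D$, so that the only way to label a DAG consistently is to copy $D$ itself. Concretely, I would set $Q = V_D$, take $I = I_D$ and $F = O_D$, and let $\Delta$ be exactly the set of local transitions induced by the edges of $D$: for every unary edge $x \xrightarrow{\theta} y$ with $\theta \in \Ops{n-1} \cup \{\bcopy{n}{1}, \nbcopy{n}{1}\}$ put $(x,\theta,y) \in \Delta$; for every branching vertex $x$ with $x \xrightarrow{1} y$ and $x \xrightarrow{2} z$ put $(x,(y,z)) \in \Delta$; and for every merge target $z$ with $x \xrightarrow{\bar 1} z$ and $y \xrightarrow{\bar 2} z$ put $((x,y),z) \in \Delta$. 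Completeness is then immediate: the identity labelling $v \mapsto v$ sends inputs to $I_D$, outputs to $O_D$, and respects $\Delta$ by construction, so $D \in \graphaut{A}$.

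The real work is soundness, i.e. proving $\graphaut{A} \subseteq \{D\}$. Suppose $D'$ is accepted through a labelling $h : V_{D'} \to V_D$; I would show $h$ is an isomorphism of operations. First, since $D$ is acyclic, a vertex of $D$ with in-degree $0$ is the target of no transition and one with out-degree $0$ is the source of none, so every $D'$-vertex mapped onto $I_D$ (resp. $O_D$) must itself be an input (resp. output), and conversely the input/output conditions force $h(I_{D'}) \subseteq I_D$, $h(O_{D'}) \subseteq O_D$. Next I would use the fact that $D'$ is a \emph{valid} compound operation (Def.~\ref{def:compound}): each of its vertices has one of finitely many typed local neighbourhoods (a single $\theta$-edge, an incident $\bcopy{n}{2}$ pair of out-edges $1,2$, an incident $\nbcopy{n}{2}$ pair of in-edges $\bar1,\bar2$, or a boundary vertex). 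Because $\Delta$ is exactly the edge set of $D$, each such neighbourhood of $x$ must match the neighbourhood of $h(x)$ in $D$ with the same labels, so $h$ restricts to a label-preserving bijection between the incident edges of $x$ and those of $h(x)$; in other words $h$ is a covering-like map that preserves in- and out-degrees.

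The crux, which I expect to be the main obstacle, is to upgrade this local bijectivity to a global isomorphism, i.e. to rule out that $D'$ is a nontrivial ``unfolding'' or ``folding'' of $D$. Here I would exploit connectivity together with acyclicity and the boundary constraints: starting from an input of $D'$ (necessarily mapped to an input of $D$) and propagating along edges, the degree-preservation established above forces the lift to be uniquely determined and to terminate exactly at the outputs, leaving no room for a cover of degree greater than one over the connected base $D$; hence $h$ is a bijection and thus an isomorphism, giving $D' \cong D$. The base cases $D = \emptydag$ (single state, initial and final, no transitions, which by connectivity admits only a single-vertex accepted DAG) and the basic operation DAGs of $\graphops{n}$ are checked directly, and an alternative, fully inductive proof following the five cases of Def.~\ref{def:compound}---assembling singleton automata for the sub-operations $D_1,\dots,D_4$ on disjoint state sets and gluing them through the middle basic operation---would yield the same conclusion should the covering argument prove awkward to formalise.
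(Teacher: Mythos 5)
Your construction is exactly the one in the paper: states are the vertices of $D$, initial/final states are $I_D$/$O_D$, and $\Delta$ is read off the edges (unary edges give $(q,\theta,q')$, a $1/2$-split gives $(q,(q',q''))$, a $\bar1/\bar2$-merge gives $((q,q'),q'')$); the paper then simply asserts ``the recognised connected part is $D$ by construction'' and stops, so your completeness direction and the identification of inputs/outputs are fine and already more than the paper writes down. The one point where your argument overclaims is precisely the step you flag as the crux: connectivity of the base together with acyclicity and the boundary conditions does \emph{not} rule out connected covers of degree greater than one. For instance, if $D$ is the diamond $x\xrightarrow{1}y$, $x\xrightarrow{2}z$, $y\xrightarrow{\bar1}w$, $z\xrightarrow{\bar2}w$, the ``crossed double diamond'' on $x_1,x_2,y_1,y_2,z_1,z_2,w_1,w_2$ (with $y_1,z_2$ merging into $w_1$ and $y_2,z_1$ merging into $w_2$) is a connected, acyclic, degree-$2$ cover whose canonical labelling $v_i\mapsto v$ satisfies every local constraint of your $\Delta$ and sends inputs to $I_D$ and outputs to $O_D$. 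What actually excludes it is not a covering-space argument but the fact that it is not a well-formed compound operation: no case of Def.~\ref{def:compound} applies to it (its splits and merges are not properly matched into disjoint single-entry/single-exit branches), and $\graphaut{A}$ only ranges over valid operations. So the clean way to close the soundness direction is your fallback: induct on the decomposition of the \emph{accepted} DAG $D'$ via the five cases of Def.~\ref{def:compound} and show the labelling forces each piece to coincide with the corresponding piece of $D$, rather than trying to prove $h$ is a graph covering of trivial degree.
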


\begin{proof}
If $D = (V,E)$, we take:

$Q = V$

$I$ is the set of incoming vertices

$F$ is the set of output vertices

\begin{tabular}{lll}
$\Delta$ & $=$    & $\{(q,\theta,q') \mid (q,\theta,q') \in E\}$\\
	 & $\cup$ & $\{(q,(q',q'')) \mid (q,1,q') \in E \land 
(q,2,q'') \in E\}$\\
	 & $\cup$ & $\{((q,q'),q'') \mid (q,1,q'') \in E \land 
(q',2,q'') \in E\}$\\
\end{tabular}

The recognised connected part is $D$ by construction.
 \qed
\end{proof}

\section{Normalised Automata}

\begin{definition}

An automaton is normalised if all its recognised operations are reduced.
\end{definition}

\begin{theorem}
Given an operation automaton with tests, there exists a distinguished 
normalised operation automaton with tests which accepts the same language.
\end{theorem}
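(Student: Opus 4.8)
The plan is to transform $A$ through a sequence of steps that alter the set of accepted operations but preserve the recognised relation $\mathcal{R}(A)$ (this is the reading of ``same language'' here, consistent with the proposition stated in the body), ending with an automaton that accepts only reduced operations and is distinguished. I would proceed by induction on the order $n$, since the reducedness condition $\mathrm{Red}_n$ is defined by nesting the lower-level condition $\mathrm{Red}_{n-1}$ inside the level-$n$ copy structure. In the base case the level-$0$ computations are governed by $\mathrm{Red}_0$, so reducing a sequence of $\rew{a}{b}$ operations amounts to inserting at most one test on each side of a single $\mathrm{rew}$, which is immediate.

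For the inductive step I would separate the two layers of the automaton. First I'd reduce the level-$(n-1)$ sub-computations: along any maximal path of $A$ that does not traverse a copy-edge (one labelled $1, 2, \bar1$ or $\bar2$), the automaton reads a word over $\Ops{n-1} \cup \Tests{n-1}$ denoting a partial function on the $(n-1)$-stack labelling a single node. Viewing these segments as an ordinary finite word automaton over $\Ops{n-1}\cup\Tests{n-1}$, I would apply the reduction theory of \cite{Carayol05} (adapted to carry tests) to replace it by an equivalent automaton whose accepted words all lie in $\mathrm{Red}_{n-1}$ and which induces the same partial function on $(n-1)$-stacks. The crucial point imported from \cite{Carayol05} is that the set of $(n-1)$-stacks on which a copy-introspect-uncopy block (a $\cop{k}$, some computation, then $\ncop{k}$) behaves as the identity is a set in $\Rec(\Stacks_{n-1})$, so such a block may be collapsed into a single test $T_L$. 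Closure of $\Rec(\Stacks_{n-1})$ under Boolean operations and under the operations of $\Ops{n-1}$ guarantees that the sets $L$ so produced are again regular, keeping the construction within $\graphops{n}^{\mathcal{T}}$.

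Next I'd organise the level-$n$ copy structure. Using a saturation technique, I would add shortcut transitions so that the states split into a destructive part (no outgoing $\bcopy{n}{i}$) and a constructive part (no outgoing $\nbcopy{n}{i}$), with every accepting run reading its $\bar1,\bar2$ edges before its $1,2$ edges, which is exactly what $\mathrm{Red}_n$ demands. The obstruction to this ordering is case \ref{item:diamond} of Definition \ref{def:compound}: a $\bcopy{n}{2}$ whose two children are computed on and then merged by a $\nbcopy{n}{2}$. Such a diamond leaves the tree unchanged but constrains the label of the affected leaf to a regular set of $(n-1)$-stacks, namely those for which both branches can be computed and made equal; I would remove each diamond and record its effect as a test $T_L$, the regularity of $L$ again following from the closure properties of $\Rec(\Stacks_{n-1})$. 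Once no diamond remains and the destructive/constructive partition is in place, every path label lies in $\mathrm{Red}_n$, so the automaton is normalised; I would finally refine each part into test states (entered immediately after a test and emitting no further test) and non-test states, as announced in the body.

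To conclude I'd make the automaton distinguished by the usual duplication: replace each initial state by a fresh source-only copy carrying its outgoing transitions, and each final state by a fresh sink-only copy carrying its incoming transitions, so that no transition ends in an initial state or starts in a final state; this is purely structural and affects neither $\mathcal{R}(A)$ nor reducedness. The main obstacle throughout is showing that every introspection pattern -- both the level-$\le n-1$ copy-introspect-uncopy blocks and the level-$n$ diamonds -- contributes a test over a genuinely regular set of $(n-1)$-stacks, and that collapsing it into $T_L$ preserves $\mathcal{R}(A)$ exactly, neither enlarging nor restricting the domain. This is where the closure properties of $\Rec(\Stacks_{n-1})$ and the reduction theory of \cite{Carayol05} do the real work, and where the bookkeeping needed to verify that the relation is unchanged is most delicate.
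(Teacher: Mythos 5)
Your proposal follows essentially the same route as the paper: separate the level-$(n-1)$ stack computations from the level-$n$ copy structure, normalise the former via the reduction theory of \cite{Carayol05}, collapse each $\bcopy{n}{2}/\nbcopy{n}{2}$ diamond (the paper's ``bubble'') into a test over a regular set of $(n-1)$-stacks, split the states into destructive/constructive and test/non-test parts by saturation, and distinguish at the end. One small correction: the test attached to a diamond must require both branches to compute the \emph{original} leaf label (since $\nbcopy{n}{2}$ demands that the children agree with their parent), not merely that the two branches be made equal to each other.
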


\begin{proof}
The first thing to remark is that if we don't have any tree transitions, we 
have 
a higher-order stack automaton as in \cite{Carayol05} and that the notions of 
normalised automaton coincide. The idea is thus to separate the automaton in 
two parts, one containing only tree transitions and the other stack 
transitions, to normalise each part separately and then to remove the useless 
transitions used to separate the automaton.
\bigskip

%
%
%
%

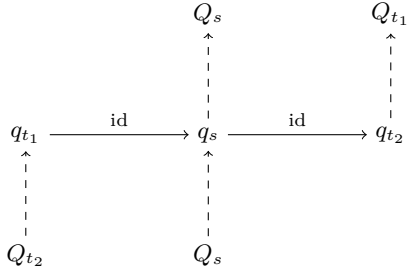
\begin{figure}
\begin{center}
\begin{tikzpicture} [scale=.4]
\node(1) at (0,-4) {$q_{t_1}$};
\node(2) at (6,-4) {$q_s$};
\node(3) at (12,-4) {$q_{t_2}$};
\node(9) at (0,-8) {$Q_{t_2}$};

\node(12) at (6,0) {$Q_s$};
\node(13) at (6,-8) {$Q_s$};

\node(15) at (12,0) {$Q_{t_1}$};

\draw[->] (1) to node[midway,above]{\scriptsize{$\Id$}} (2);
\draw[->] (2) to node[midway,above]{\scriptsize{$\Id$}} (3);

\draw[->,dashed] (9) to (1);
\draw[->,dashed] (13) to (2);
\draw[->,dashed] (2) to (12);
\draw[->,dashed] (3) to (15);

\end{tikzpicture}
\end{center}
\caption{Step 1: The splitting of a state $q$}\label{fig_step1}
\end{figure}

\paragraph*{Step 1:}
In this transformation, we will use a new special basic operation: $\Id$ such 
that its associated operation $D_{\Id}$ is the following DAG: $V_{D_{\Id}} = 
\{x,y\}$ and $E_{D_{\Id}} = 
\{(x,\Id,y)\}$. For every stack tree $t$ and any integer $i \leq |\fr(t)|$, 
$\Id_{(i)}(t) = 
t$.
We will use this operation to separate our DAGs in several parts linked with 
$\Id$ operations, and will remove them at the end of the transformation. We 
suppose that we start with an automaton without such $\Id$ transitions.

We begin by splitting the set of control states of the automaton into three 
parts. We create three copies of $Q$: 
\begin{itemize}
 \item $Q_s$ which are the sources and targets of all the stack 
transitions, target of $\Id$ transitions from $Q_{t_1}$ and source of 
$\Id$-transitions to $Q_{t_2}$.
 \item $Q_{t_1}$ which are the targets of all the tree transitions and the 
sources of $\Id$-transitions to $Q_s$.
 \item $Q_{t_2}$ which are the sources of all the tree transitions and the 
targets of $\Id$-transitions from $Q_s$.
\end{itemize}


The idea of what we want to obtain is depicted in Fig. \ref{fig_step1}.

Formally, we replace the automaton $A = (Q,I,F,\Delta)$ by $A_1 = 
(Q',I',F',\Delta')$ with:

\begin{tabular}{lll}
$Q'$ & $ =$ & $\{q_{t_1},q_{t_2},q_s \mid q\in Q\}$\\

$I'$ & $=$ & $\{q_s \mid q\in I\}$\\

$F'$ & $=$ & $\{q_s \mid q\in F\}$\\

$\Delta$ &$ = $&$ \{(q_s,\theta,q'_s)\mid (q,\theta,q') \in \Delta\}$\\
 &$ \cup $&$ \{(q_{t_2},(q'_{t_1},q''_{t_1}))\mid (q,(q',q'')) \in \Delta\}$\\
 &$ \cup $&$ \{((q_{t_2},q'_{t_2}),q''_{t_1})\mid ((q,q'),q'') \in \Delta\}$\\
 &$ \cup $&$ \{(q_{t_2},\bcopy{n}{1},q'_{t_1}) \mid (q,\bcopy{n}{1},q')\in 
\Delta\}$\\
 &$ \cup $&$ \{(q_{t_2},\nbcopy{n}{1},q'_{t_1}) \mid (q,\nbcopy{n}{1},q')\in 
\Delta\}$\\
 &$ \cup $&$ \{(q_{t_1},\Id,q_s),(q_s,\Id,q_{t_2})\mid q\in Q\}$
\end{tabular}

where for every $q\in Q$, $q_{t_1},q_{t_2},q_s$ are fresh states.

\begin{lemma}
$A$ and $A_1$ recognise the same relation. 
\end{lemma}

\begin{proof}
To prove this lemma, we prove that for every operation $D$ recognised by 
$A$, there is an operation $D'$ recognised by $A_1$ such that $R_D = 
R_{D'}$, and vice versa.

Let us take $D$ recognised by $A$. We prove, by induction on the structure 
of $D$ that we can construct $D'$ such that $R_D = R_{D'}$ 
and for every labelling $\rho_D$ of $D$ consistent with $\Delta$, 
with $I_{D}$ labelled by $\vec{q}$ and $O_{D}$ by $\vec{q'}$, 
there exists $\rho_D'$ a labelling of $D'$ consistent with $\Delta'$ 
such that $I_{D'}$ is labelled by $\vec{q_s}$ and $O_{D'}$ by 
$\vec{q'_s}$.

If $D = \emptydag$, we take $D' = \emptydag$. We have $R_D = 
R_{D'}$. For every labelling $\rho_D$ which labels the unique node of 
$D$ by $q$, we take $\rho_{D'}$ which labels the unique node of 
${D'}$ by $q_s$. These labellings are consistent by $\Delta$ and 
$\Delta'$, by vacuity.

Suppose now that we have $F$ and $F'$ such that for 
every labelling $\rho_F$ we can define a labelling $\rho_{F'}$ 
satisfying the previous condition. Let us consider the following cases:

\begin{itemize}
 \item $D = (F \cdot_{1,1} D_\theta) \cdot_{1,1} G$, 
for $\theta \in \{\bcopy{n}{1},\nbcopy{n}{1}\}$. We call $x$ the output node of 
$F$ and $y$ the input node of $G$. We have $V_{D} = 
V_{F} \cup V_{G}$ and $E_{D} = E_{F} \cup 
E_{G} \cup \{x\xrightarrow{\theta}y\}$.

By induction hypothesis, we consider $F'$ and $G'$, and construct 
${D'} = ((({F'} \cdot_{1,1} D_{\Id}) \cdot_{1,1} D_\theta) 
\cdot_{1,1} D_{\Id}) \cdot_{1,1} {G'} $, with $V_{{D'}} = 
V_{{F'}}\cup V_{{G'}} \cup \{x'_1,x'_2\}$ and 
$E_{D'} = E_{F'} \cup E_{G'}\cup \{x' 
\xrightarrow{\Id} x'_1,x'_1 \xrightarrow{\theta} x'_2,x'_2 \xrightarrow{\Id} 
y'\}$, where $x'$ is the output node of $F'$ and $y'$ the input node of 
$G'$.

We take $\rho_D$ a labelling of $D$ and $\rho_F$ (resp. 
$\rho_G$) its restriction to $F$ (resp. $G$). We have 
$\rho_D(x) = q$ and $\rho_D(y) = q'$. By induction hypothesis, we 
consider $\rho_{F'}$ (resp. $\rho_{G'}$) the corresponding labelling 
of $F'$ (resp. $G'$), with $\rho_{F'}(x') = q_s$ (resp. 
$\rho_{G'}(y') = q'_s$). Then, we construct $\rho_{D'} = 
\rho_{F'} \cup \rho_{G'} \cup \{x'_1 \rightarrow q_{t_2}, x'_2 
\rightarrow q'_{t_1}\}$.

 As $\rho_D$ is consistent with $\Delta$, $(q,\theta,q')$ is in $\Delta$, 
then by construction $(q_{t_2},\theta,q'_{t_1})$ is in $\Delta'$. We have also 
$(q_s,\Id,q_{t_2})$ and $(q'_{t_1},\Id,q'_{s})$ are in $\Delta'$. Then, 
$\rho'_D$ is consistent with $\Delta'$.

To prove that $R_D = R_{D'}$, we just have to remark that, from the 
definition of application of operation, we have for every stack tree $t$ and 
integer $i$, we have $D'_{(i)}(t) = G'_{(i)}( \Id_{(i)}( 
\theta_{(i)}( \Id_{(i)}( F'_{(i)}(t))))) = 
G_{(i)}(\theta_{(i)}(F_{(i)}(t))) = D_{(i)}(t)$.
\smallskip

The other cases being similar, we just give $D'$ and $\rho_{D'}$ 
and leave the details to the reader.

 \item $D = (F \cdot_{1,1} D_\theta) \cdot_{1,1} G$, 
for $\theta \in \Ops{n-1} \cup \Tests{n-1}$. We call $x$ the output node of 
$F$ and $y$ the input node of $G$. We have $V_D = 
V_F \cup V_G$ and $E_D = E_F \cup 
E_G \cup \{x\xrightarrow{\theta}y\}$.

By induction hypothesis, we consider $F'$ and $G'$, and construct 
$D' = (F' \cdot_{1,1} \theta) \cdot_{1,1} G' $, 
with $V_{D'} = V_{F'}\cup V_{G'}$ 
and $E_{D'} = E_{F'} \cup E_{G'}\cup \{x'  
\xrightarrow{\theta} y'\}$, where $x'$ is the output node of $F'$ and $y'$ 
the input node of $G'$.

We take $\rho_D$ a labelling of $D$ and $\rho_F$ (resp. 
$\rho_G$) its restriction to $F$ (resp. $G$). We have 
$\rho_D(x) = q$ and $\rho_D(y) = q'$. By induction hypothesis, we 
consider $\rho_{F'}$ (resp. $\rho_{G'}$) the corresponding labelling 
of $F'$ (resp. $G'$), with $\rho_{F'}(x') = q_s$ (resp. 
$\rho_{G'}(y') = q'_s$). Then, we construct $\rho_{D'} = 
\rho_{F'} \cup \rho_{G'}$.

 \item $D = ((F \cdot_{1,1} D_{\bcopy{n}{2}}) \cdot_{2,1} 
H) \cdot_{1,1} G$. We call $x$ the output node of 
$F$, $y$ the input node of $G$ and $z$ the input node of 
$H$. We have $V_D = 
V_F \cup V_G \cup V_H$ and $E_D = 
E_F \cup E_G \cup E_H \cup 
\{x\xrightarrow{1}y,x\xrightarrow{2}z\}$.

By induction hypothesis, we consider $F'$, $G'$ and $H'$, and 
construct $D' = (((((F \cdot_{1,1} D_{\Id}) D_{\bcopy{n}{2}}) 
\cdot_{2,1} D_{\Id}) \cdot_{2,1} H) \cdot_{1,1} D_{\Id}) \cdot_{1,1} 
G$, with $V_{D'} = 
V_{F'}\cup V_{G'} \cup V_{H'} \cup 
\{x'_1,x'_2,x'_3\}$ and 
$E_{D'} = E_{F'} \cup E_{G'}\cup E_{H'} 
\{x' 
\xrightarrow{\Id} x'_1,x'_1 \xrightarrow{1} x'_2,x'_1\xrightarrow{2} x'_3,x'_2 
\xrightarrow{\Id} y', x'_3 \xrightarrow{\Id} z'\}$, where $x'$ is the output 
node of $F'$, $y'$ the input node of $G'$ and $z'$ the input node of 
$H'$.

We take $\rho_D$ a labelling of $D$ and $\rho_F$ (resp. 
$\rho_G$, $\rho_H$) its restriction to $F$ (resp. 
$G$, $H$). We have $\rho_D(x) = q$, $\rho_D(y) = q'$ 
and $\rho_D(z) = q''$. By induction hypothesis, we 
consider $\rho_{F'}$ (resp. $\rho_{G'}$,$\rho_{H'}$) the 
corresponding labelling of $F'$ (resp. $G'$,$H'$), with 
$\rho_{F'}(x') = q_s$ (resp. $\rho_{G'}(y') = q'_s$, 
$\rho_{H'}(z') = q''_s$). Then, we construct $\rho_{D'} = 
\rho_{F'} \cup \rho_{G'} \cup \rho_{H'} \cup \{x'_1 \rightarrow 
q_{t_2}, x'_2 \rightarrow q'_{t_1},x'_3 \rightarrow q''_{t_1}\}$.

 \item $D = (F \cdot_{1,1} (G \cdot_{1,2} 
D_{\nbcopy{n}{2}})) \cdot_{1,1} H$. We call $x$ the output node of 
$F$, $y$ the output node of $G$ and $z$ the input node of 
$H$. We have $V_D = V_F \cup V_G \cup 
V_H$ and $E_D = E_F \cup E_G \cup E_H \cup 
\{x\xrightarrow{\bar{1}}z,y\xrightarrow{\bar{2}}z\}$.

By induction hypothesis, we consider $F'$, $G'$ and $H'$, and 
construct $D' = (((F \cdot_{1,1} D_{\Id}) \cdot_{1,1} ((
G \cdot_{1,1} D_{\Id}) \cdot_{1,2} D_{\nbcopy{n}{2}})) \cdot_{1,1} 
D_{\Id}) \cdot_{1,1} H$, with $V_{D'} = V_{F'}\cup 
V_{G'} \cup V_{H'} \cup \{x'_1,x'_2,x'_3\}$ and 
$E_{D'} = E_{F'} \cup E_{G'}\cup E_{H'} \{x' 
\xrightarrow{\Id} x'_1,y' \xrightarrow{\Id} x'_2,x'_1 \xrightarrow{\bar{1}} 
x'_3,x'_2\xrightarrow{\bar{2}} x'_3, x'_3 \xrightarrow{\Id} z'\}$, where $x'$ 
is 
the output node of $F'$, $y'$ the input node of $G'$ and $z'$ the 
input 
node of $H'$.

We take $\rho_D$ a labelling of $D$ and $\rho_F$ (resp. 
$\rho_G$, $\rho_H$) its restriction to $F$ (resp. 
$G$, $H$). We have $\rho_D(x) = q$, $\rho_D(y) = q'$ 
and $\rho_D(z) = q''$. By induction hypothesis, we 
consider $\rho_{F'}$ (resp. $\rho_{G'}$, $\rho_{H'}$) the 
corresponding labelling of $F'$ (resp. $G'$, $H'$), with 
$\rho_{F'}(x') = q_s$ (resp. $\rho_{G'}(y') = q'_s$, 
$\rho_{H'}(z') = q''_s$). Then, we construct $\rho_{D'} = 
\rho_{F'} \cup \rho_{G'} \cup \rho_{H'} \cup \{x'_1 \rightarrow 
q_{t_2}, x'_2 \rightarrow q'_{t_2},x'_3 \rightarrow q''_{t_1}\}$.

 \item $D = (((((F \cdot_{1,1} D_{\bcopy{n}{2}}) \cdot_{2,1} 
H) \cdot_{1,1} G) \cdot_{1,1} D_{\nbcopy{n}{2}}) \cdot_{1,1} 
K$. We call $x$ the output node of $F$, $y_1$ the input 
node of $G$ and $y_2$ its output node, $z_1$ the input node of 
$H$ and $z_2$ its output node and $w$ the input node of $K$. We have 
$V_D = V_F \cup V_G \cup V_H \cup 
V_K$ and $E_D = E_F \cup E_G \cup 
E_H \cup E_K \cup 
\{x\xrightarrow{1}y_1,x\xrightarrow{2}z_1,y_2\xrightarrow{\bar{1}}t,
z_2\xrightarrow { \bar { 2 }}t\}$.

By induction hypothesis, we consider $F'$, $G'$, $H'$ and 
$K'$, and construct $D' = ((((((F' \cdot_{1,1} D_{\Id}) 
\cdot_{1,1} D_{\bcopy{n}{2}}) \cdot_{2,1} (D_{\Id} \cdot_{1,1} H')) 
\cdot_{1,1} (D_{\Id} \cdot_{1,1} G')) \cdot_{1,1} D_{\nbcopy{n}{2}}) 
\cdot_{1,1} D_{\Id}) \cdot_{1,1} K'$, with $V_{D'} = 
V_{F'}\cup V_{G'} \cup V_{H'} \cup V_{K'} \cup 
\{x'_1,x'_2,x'_3,x'_4,x'_5,x'_6\}$ and $E_{D'} = E_{F'} \cup 
E_{G'}\cup E_{H'} \cup E_{K'} \{x' \xrightarrow{\Id} 
x'_1,x'_1 \xrightarrow{1} x'_2,x'_1\xrightarrow{2} x'_3,x'_2 \xrightarrow{\Id} 
y'_1, x'_3 \xrightarrow{\Id} z'_1,y'_2 \xrightarrow{\Id} 
x'_4,z'_2 \xrightarrow{\Id} x'_5, x'_4 \xrightarrow{\bar{1}} 
x'_6,x'_5\xrightarrow{\bar{2}} x'_6,x'_6 \xrightarrow{\Id} t'\}$, where $x'$ is 
the output node of $F'$, $y'_1$ the 
input node of $G'$, $y'_2$ its output node, $z'_1$ the input node of 
$H'$, $z'_2$ its output node and $t'$ the input node of $K'$.

We take $\rho_D$ a labelling of $D_D$ and $\rho_F$ (resp. 
$\rho_G$, $\rho_H$, $\rho_K$) its restriction to $F$ (resp. 
$G$, $H$, $K$). We have $\rho_D(x) = q$, 
$\rho_D(y_1) = q'$, $\rho_D(z_1) = q''$, $\rho_D(y_2) = r'$, 
$\rho_D(z_2) = r''$ and $\rho_D(t) = r''$ . By induction hypothesis, 
we consider $\rho_{F'}$ (resp. $\rho_{G'}$, $\rho_{H'}$, 
$\rho_{K'}$) the corresponding labelling of $F'$ (resp. 
$G'$, $H'$, $K'$), with $\rho_{F'}(x') = q_s$ (resp. 
$\rho_{G'}(y'_1) = q'_s$, $\rho_{H'}(z'_1) = 
q''_s$, $\rho_{G'}(y'_2) = r'_s$, $\rho_{H'}(z'_2) = r''_s$, 
$\rho_{K'}(t') = r''_s$). Then, we construct $\rho_{D'} = 
\rho_{F'} \cup \rho_{G'} \cup \rho_{H'} \cup \{x'_1 \rightarrow 
q_{t_2}, x'_2 \rightarrow q'_{t_1},x'_3 \rightarrow q''_{t_1},x'_4 \rightarrow 
r_{t_2}, x'_5 \rightarrow r'_{t_2},x'_6 \rightarrow r''_{t_1}\}$.
\end{itemize}

To do the other direction, we take $D'$ recognised by $A_1$ and show that 
we can construct $D$ recognised by $A$ with $R_D = R_{D'}$ by an 
induction on the structure of $D'$ similar to the previous one (for each 
$\Id$ transition, we do not modify the constructed DAG and for all other 
transition, we add them to the DAG). All the arguments are similar to the 
previous proof, so we let the reader detail it.
\qed
\end{proof}

\bigskip

We start by normalising the tree part of the automaton. To do so, we just have 
to prevent the automaton to recognise DAGs which contain $((D_{\bcopy{n}{2}} 
\cdot_{1,1} F_1) \cdot_{2,1} F_2) \cdot_{1,1} 
D_{\nbcopy{n}{2}}$, or $(D_{\bcopy{n}{1}} \cdot_{1,1} F) \cdot_{1,1} 
D_{\nbcopy{n}{1}}$ as a subDAG. Such a subDAG will be called a 
bubble. However, we do not want to modify the recognised relation. We will do 
it in two steps: first we allow the automaton to replace the bubbles with 
equivalent tests (after remarking that a bubble can only be a test) in any 
recognised DAG (step 2), and then by ensuring that there won't be any 
$\nbcopy{n}{i}$ transition below the first $\bcopy{n}{j}$ transition (step 3).


\paragraph*{Step 2:}
Let $A_1 = (Q,I,F,\Delta)$ be the automaton obtained after step 1.
Given two states $q_1,q_2$, we denote by $L_{A_{q_1,q_2}}$ the set $\{s \in 
\Stacks_{n-1} \mid \exists D\in \graphops{}(A_1), D_{(1)}(s) = s\}$ 
where 
$A_{q_1,q_2}$ is a copy of $A_1$ in which we take $q_1$ as the unique initial 
state and $q_2$ as the unique final state. In other words, $L_{A_{q_1,q_2}}$ is 
the set of $(n-1)$-stacks such that the trees with one node labelled by this 
stack remains unchanged by an operation recognised by $A_{q_1,q_2}$.
We define $A_2 = (Q,I,F,\Delta')$ with

\begin{tabular}{lll}
$\Delta'$ & $=$ & $\Delta$\\
& $\cup$ & $\{(q_s,T_{L_{A_{r_s,r'_s}} \cap L_{A_{s_s,s'_s}}},q'_s) 
\mid (q_{t_2},(r_{t_1},s_{t_1})), ((r'_{t_2},s'_{t_2}),q'_{t_1}) \in \Delta 
\}$\\
& $\cup$ & $\{(q_s,T_{L_{r_s,s'_s}},q'_s \mid (q_{t_2},\bcopy{n}{1},r_{t_1}), 
(r'_{t_2},\nbcopy{n}{1},q'_{t_1}) \in \Delta \}$
\end{tabular}

%
%
The idea of the construction is depicted in Fig. \ref{fig_step2}.

We give the following lemma for the binary bubble. The case of the unary bubble 
is very similar and thus if left to the reader.

\begin{lemma}
  Let $C_1 = (Q_{C_1},\{i_{C_1}\},\{f_{C_1}\},\Delta_{C_1})$ and $C_2
  = (Q_{C_2},\{i_{C_2}\},\{f_{C_2}\},\Delta_{C_2})$ be two automata
  recognising DAGs without tree operations. The two automata $B_1 =
  (Q_1, I, F, \Delta_1)$ and $B_2 = (Q_2, I, F, \Delta_2)$, with $I =
  \{q_1\}$, $F = \{q_2\}$, $Q_1 = \{q_1,q_2\}$, $\Delta_1 =
  \{(q_1,\Test{L_{C_1} \cap L_{C_2}},q_2)\}$, $Q_2 = \{q_1,q_2\} \cup
  Q_{C_1} \cup Q_{C_2}$ and $\Delta_2 =
  \{(q_1,(i_{C_1},i_{C_2})),((f_{C_1},f_{C_2}),q_2)\} \cup
  \Delta_{C_1} \cup \Delta_{C_2}$ recognise the same relation.
\end{lemma}

\begin{proof}
 An operation $D$ recognised by $B_2$ is of the form $D = 
D_{\bcopy{n}{2}} \cdot_{1,1} (F_1 \cdot_{1,1} (F_2 \cdot_{2,2} 
D_{\nbcopy{n}{2}}))$, where $F_1$ is recognised by $C_1$ and $F_2$ by 
$C_2$.
We have:
\begin{align*}
  D_{(i)}(t) & = {\nbcopy{n}{2}}_{(i)} ({F_1}_{(i)} (
  {F_2}_{(i+1)} ( {\bcopy{n}{2}}_{(i)} (t))))
  \\
  & = {\nbcopy{n}{2}}_{(i)} ({F_1}_{(i)} ( {F_2}_{(i+1)}
  (t\cup \{u_i1 \mapsto t(u_i), u_i2 \mapsto t(u_i)\})))
  \\
  & = {\nbcopy{n}{2}}_{(i)} (t\cup \{u_i1 \mapsto F_1(t(u_i)),
  u_i2 \mapsto F_2(t(u_i))\}).
\end{align*}
So this operation is defined if and only if $F_1(t(u_i)) = F_2(t(u_i)) 
= t(u_i)$. In this case, $D_i(t) = t$.
Thus, $B_2$ accepts only operations which are tests, and these tests are the 
intersection of the tests recognised by $C_1$ and $C_2$. So the relation 
recognised by $B_2$ is exactly the relation recognised by $\Test{L_{C_1} \cap 
L_{C_2}}$, which is the only operation recognised by $B_1$.
 \qed
\end{proof}

We have the following corollary as a direct consequence of this lemma.

\begin{corollary}
 $A_1$ and $A_2$ recognises the same relation.
\end{corollary}

 Indeed, all the new operations recognised do not modify the relation 
recognised by the automaton as each test was already present in the DAGs 
containing a bubble.

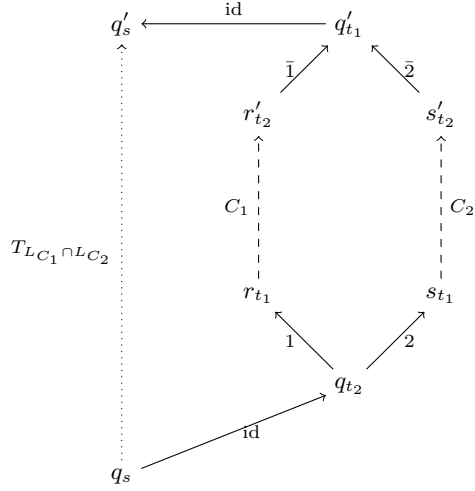
\begin{figure}
\begin{center}
\begin{tikzpicture} [scale=.6]
\node(1) at (0,0) {$q_{t_2}$};
\node(2) at (-2,2) {$r_{t_1}$};
\node(3) at (2,2) {$s_{t_1}$};
\node(5) at (-2,6) {$r'_{t_2}$};
\node(6) at (2,6) {$s'_{t_2}$};
\node(9) at (0,8) {$q'_{t_1}$};

\node(8') at (-5,-2) {$q_s$};
\node(4) at (-5,8) {$q'_s$};


\draw[->] (1) to node[midway,left]{\scriptsize{$1$}} (2);
\draw[->] (1) to node[midway,right]{\scriptsize{$2$}} (3);
\draw[->,dashed] (2) to node[midway,left]{\scriptsize{$C_1$}} (5);
\draw[->,dashed] (3) to node[midway,right]{\scriptsize{$C_2$}} (6);
\draw[->] (5) to node[midway,left]{\scriptsize{$\bar{1}$}} (9);
\draw[->] (6) to node[midway,right]{\scriptsize{$\bar{2}$}} (9);

\draw[->] (8') to node[midway,right]{\scriptsize{$\Id$}} (1);
\draw[->,dotted] (8') to node[midway,left]{\scriptsize{$\Test{L_{C_1} \cap 
L_{C_2}}$}} 
(4);
\draw[->] (9) to node[midway,above]{\scriptsize{$\Id$}} (4);


%
\end{tikzpicture}
\end{center}
\caption{Step 2: The added test transition to shortcut the 
bubble is depicted with a dotted line}\label{fig_step2}
\end{figure}

\paragraph*{Step 3:}
Suppose that $A_2 = (Q,I,F,\Delta)$ is the automaton obtained after step 2.
We now want to really forbid these bubbles. To do so, we split the 
control states automaton in two parts: We create 2 copies of $Q$:

\begin{itemize}
 \item $Q_d$ which are target of no $\bcopy{n}{d}$ transition,
 \item $Q_c$ which are source of no $\nbcopy{n}{d}$ transition.
\end{itemize}

We construct $A_3= (Q',I',F',\Delta')$ with:

\begin{tabular}{lll}
$Q'$ & $=$ & $\{q_d,q_c \mid q \in Q\}$\\

$I'$ & $=$ & $\{q_d,q_c\mid q\in I\}$\\

$F'$ & $=$ & $\{q_d,q_c\mid q\in F\}$\\

$\Delta' $ &$=$ &$ \{(q_d,\theta,q'_d),(q_c,\theta,q'_c)\mid (q,\theta,q')\in 
\Delta, \theta \in \Ops{n-1} \cup \Tests{n-1} \cup \{\Id\}\}$\\
& $\cup $ & $\{((q_d,q'_d),q''_d) \mid ((q,q'),q'') \in \Delta\}$\\
& $\cup $ & $\{(q_d,\nbcopy{n}{1},q'_d) \mid (q,\nbcopy{n}{1},q')\in\Delta\}$\\
& $\cup $ & $\{(q_c,(q'_c,q''_c)),(q_d,(q'_c,q''_c))\mid (q,(q',q''))\in 
\Delta\}$\\
& $\cup $ & $\{(q_c,\bcopy{n}{1},q'_c),(q_d,\bcopy{n}{1},q'_c) \mid 
(q,\bcopy{n}{1},q')\in\Delta\}$
\end{tabular}


\begin{lemma}
 $A_2$ and $A_3$ recognise the same relation
\end{lemma}

\begin{proof}
$A_3$ recognises the operations recognised by $A_2$ which contain no bubble. 
Indeed, every labelling of such an operation in $A_2$ can be modified to be a 
labelling in $A_3$ (left to the reader). Conversely, each operation recognised 
by $A_3$ is recognised by $A_2$.

Let us take $D$ recognised by $A_2$ which contains at least one bubble. 
Suppose that $D$ contains a bubble $F$ and that $D = 
D[F]_x$ where $D$ is a DAG with one bubble less and we obtain $D$ 
by replacing the node $x$ by $F$ in $D$. From step 2, there exist 
four states of $A_2$, $r_s,r'_s,s_s,s'_s$ such that $G = 
D[\Test{L_{A_{r_s,r'_s}} \cap L_{A_{s_s,s'_s}}}]_x$ is recognised by $A_2$. 
Then $R_D \subseteq R_G$, and $G$ has one less bubble than 
$D$.

Iterating this process, we obtain an operation $D'$ without any bubble 
such that $R_D \subseteq R_{D'}$ and $D'$ is recognised by 
$A_2$. As it contains no bubble, it is also recognised by $A_3$.


Then every relation recognised by an operation with bubbles is already included 
in the relation recognised by an operation without bubbles. Then $A_2$ and 
$A_3$ recognise the same relation.
 \qed
\end{proof}

We call the destructive part the restriction $A_{3,d}$ of $A_3$ to $Q_d$ and the
constructive part its restriction $A_{3,c}$ to $Q_c$.

\paragraph*{Step 4:}
We consider an automaton $A_3$ obtained after the previous step. Observe that 
in the two previous steps, we did not modify the separation between $Q_{t_1}$, 
$Q_{t_2}$ and $Q_s$. We call $A_{3,s}$ the restriction of $A_3$ to $Q_s$.

We now want to normalise $A_{3,s}$. As this part of the automaton only contains 
transitions labelled by operations of $\Ops{n-1} \cup \Tests{n-1}$, we can 
consider it as an 
automaton over higher-order stack operations. So we will use the process of 
normalisation over higher-order stack operations defined in \cite{Carayol05}. 
For each pair $(q_s,q'_s)$ of states in $Q_s$, we construct the normalised 
automaton $A_{q_s,q'_s}$ of $A'$ where $A'$ is a copy of $A_{3,s}$ where 
$I_{A'} = 
\{q_s\}$ and $F_{A'}=\{q'_s\}$. We suppose that these automata are 
distinguished, i.e. that states of $I_{A_{q_s,q'_s}}$ are target of no 
transitions and states of $F_{A_{q_s,q'_s}}$ are source of no transitions. We 
moreover suppose that it is not possible to do two test transitions in a row 
(this is not a strong supposition because such a sequence would not be 
normalised, but it is worth noticing it).

We replace $A_{3,s}$ with the union of all the $A_{q_s,q'_s}$: we define $A_4 = 
(Q',I',F',\Delta')$:

\begin{tabular}{lll}
$Q'$ & $=$ & $Q_{t_1} \cup Q_{t_2} \cup \bigcup_{q_s,q'_s} Q_{A_{q_s,q'_s}}$\\

$I'$ & $=$ & $\bigcup_{q_s\in I,q'_s\in Q_s} I_{A_{q_s,q'_s}}$\\

$F'$ & $=$ & $\bigcup_{q_s\in Q_s,q'_s\in F} F_{A_{q_s,q'_s}}$\\

$\Delta'$ & $=$ & 
$\{K \in \Delta \mid K=(q,(q',q'')) \vee K=((q,q'),q'') \vee K= 
(q,\bcopy{n}{1},q')$\\
& & $\vee K= (q,\nbcopy{n}{1},q')\}$\\
& $\cup$ & $\bigcup_{q_s,q'_s \in Q_s} \Delta_{A_{q_s,q'_s}}$\\
& $\cup$ & $\{(q_{t_1},\Id,i) \mid (q_{t_1},\Id,q'_s) \in \Delta, i \in 
\bigcup_{q''_s\in Q} I_{A_{q'_s,q''_s}}\}$\\
& $\cup$ & $\{(f,\Id,q_{t_2}) \mid (q'_s,\Id,q_{t_2}) \in \Delta, f \in 
\bigcup_{q''_s\in Q} F_{A_{q''_s,q'_s}}\}$\\
& $\cup$ & $\{(q_{t_1},\Id,f) \mid (q_{t_1},\Id,q'_s) \in \Delta, f \in 
\bigcup_{q''_s\in Q} F_{A_{q''_s,q'_s}}\}$\\
& $\cup$ & $\{(i,\Id,q_{t_2}) \mid (q'_s,\Id,q_{t_2}) \in \Delta, i \in 
\bigcup_{q''_s\in Q} I_{A_{q'_s,q''_s}}\}$
\end{tabular}

\begin{lemma}
 $A_3$ and $A_4$ recognise the same relation.
\end{lemma}

\begin{proof}
For every operation $D$ recognised by $A_3$, we can construct $D'$ by 
replacing each sequence of $\Ops{n-1} \cup \Tests{n-1}$ operations by their 
reduced sequence, 
which is recognised by $A_4$ and define the same relation. The details are left 
to the reader.

Conversely, for every $D'$ recognised by $A_4$, we can construct $D$ 
recognised by $A_3$ which define the same relation, by replacing every 
reduced sequence of $\Ops{n-1}\cup\Tests{n-1}$ operations by a sequence of 
$\Ops{n-1}\cup\Tests{n-1}$ 
operations defining the same relation such that $D$ is recognised by 
$A_3$. We leave the details to the reader.
 \qed
\end{proof}





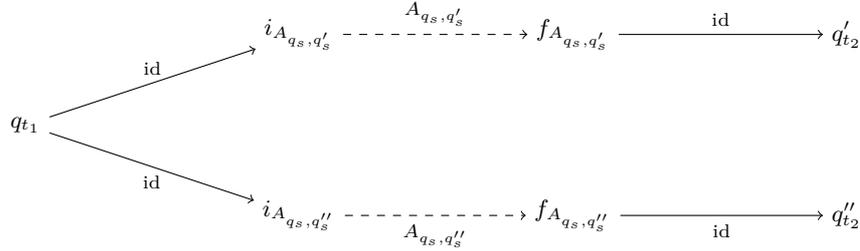
\begin{figure}\label{fig_step4}
\begin{center}
\begin{tikzpicture} [scale=.6]
\node(1) at (0,0) {$q_{t_1}$};
\node(2) at (6,2) {$i_{A_{q_s,q'_s}}$};
\node(2') at (6,-2) {$i_{A_{q_s,q''_s}}$};
\node(3) at (12,2) {$f_{A_{q_s,q'_s}}$};
\node(3') at (12,-2) {$f_{A_{q_s,q''_s}}$};
\node(4) at (18,2) {$q'_{t_2}$};
\node(4') at (18,-2) {$q''_{t_2}$};

\draw[->] (1) to node[midway,above]{\scriptsize{$\Id$}} (2);
\draw[->] (1) to node[midway,below]{\scriptsize{$\Id$}} (2');
\draw[->,dashed] (2) to node[midway,above]{\scriptsize{$A_{q_s,q'_s}$}} (3);
\draw[->,dashed] (2') to node[midway,below]{\scriptsize{$A_{q_s,q''_s}$}} (3');
\draw[->] (3) to node[midway,above]{\scriptsize{$\Id$}} (4);
\draw[->] (3') to node[midway,below]{\scriptsize{$\Id$}} (4');
\end{tikzpicture}
\end{center}
\caption{Step 4: The splitting of the stack part of the automaton}
\end{figure}

\paragraph*{Step 5:}
We now have a normalised automaton, except that we have $\Id$ transitions. We 
remove them by a classical saturation mechanism.
Observe that in all the previous steps, we never modified the separation 
between $Q_{t_1},Q_s$ and $Q_{t_2}$, so that all $\Id$ transitions are from 
$Q_{t_1}$ to $Q_s$ and from $Q_s$ to $Q_{t_2}$.
We take $A_4 = (Q,I,F,\Delta)$ obtained after the previous step. We construct 
$A_5 = (Q',I',F',\Delta')$ with $Q' = Q_s$, $I' = I$, $F' = F$ and
\begin{align*}
\Delta' & = \Delta \setminus \{(q,\Id,q') \in \Delta\}\\
& \cup \{(q_s,\bcopy{n}{1},q'_s) \mid \exists q''_{t_2},q'''_{t_1},
(q''_{t_2},\bcopy{n}{1},q'''_{t_1}), (q'''_{t_1},\Id,q'_s), 
(q_s,\Id,q''_{t_2})\in \Delta\}\\
& \cup \{(q_s,\nbcopy{n}{1},q'_s) \mid \exists q''_{t_2},q'''_{t_1},
(q''_{t_2},\nbcopy{n}{1},q'''_{t_1}), (q'''_{t_1},\Id,q'_s), 
(q_s,\Id,q''_{t_2})\in \Delta\}\\
& \cup \{(q_s,(q'_s,q''_s)) \mid \exists q_1,q_2,q_3, (q_1,(q_2,q_3)), 
(q_s,\Id,q_1),(q_2,\Id,q'_s),(q_3,\Id,q''_s) \in \Delta\}\\
& \cup \{((q_s,q'_s),q''_s) \mid \exists q_1,q_2,q_3, ((q_1,q_2),q_3), 
(q_s,\Id,q_1),(q'_s,\Id,q_2),(q_3,\Id,q''_s) \in \Delta\}
\end{align*}

\begin{lemma}
 $A_4$ and $A_5$ recognise the same relation.
\end{lemma}

\begin{proof}
We prove it by an induction on the structure of relations similar to the one of 
step 1, so we leave it to the reader.
 \qed
\end{proof}

%
%
%

\paragraph*{Step 6:}
We now split the control states set into two parts:
\begin{itemize}
 \item $Q_T$, the states which are target of all and only test transitions and 
source of no test transition,
 \item $Q_C$, the states which are source of all test transitions and target of 
no test transition.
\end{itemize}
Given automaton $A_5 = (Q,I,F,\Delta)$ obtained from the previous step, we 
define $A_6 = (Q',I',F',\Delta')$ with
\begin{align*}
  Q' & = \{q_T,q_C\mid q\in Q\},
  \\
  I' & = \{q_C \mid q\in I\},
  \\
  F' & = \{q_T,q_C \mid q \in F\},
  \\
  \Delta' & = \{(q_C,\theta,q'_C),(q_T,\theta,q'_C) \mid
  (q,\theta,q')\in \Delta, \theta \in \Ops{n-1} \cup \Tests{n-1}
  \{\bcopy{n}{1},\nbcopy{n}{1}\}\}
  \\
  & \cup \{((q_C,q'_C),q''_C),((q_C,q'_T),q''_C),((q_T,q'_C),q''_C),
  ((q_T,q'_T),q''_C) \mid ((q,q'),q'')\in \Delta\}
  \\
  & \cup \{(q_C,(q'_C,q''_C)), (q_T,(q'_C,q''_C)) \mid (q,(q',q'')) \in 
\Delta\} \\
  & \cup \{(q_C,T_L,q'_T) \mid (q,T_L,q')\in \Delta\}.
\end{align*}

\begin{lemma}
 $A_5$ and $A_6$ recognise the same relation.
\end{lemma}

\begin{proof}
As, from step 4 it is not possible to have two successive test transitions, the 
set of recognised operations is the same in both automata, only the labelling 
is modified. The details are left to the reader.
 \qed
\end{proof}
\medskip

Finally, we suppose that an automaton obtained by these steps is distinguished, 
i.e. initial states are target of no transition and final states are source of 
no transition. If not, we can distinguish it by a classical transformation (as 
in the case of word automata).
We now have a normalised automaton with tests $A_6$ obtained after the 
application of the six steps which recognises the same relation as the initial 
automaton $A$.
In subsequent constructions, we will be considering the subsets of
states $Q_T,Q_C,Q_d,Q_c$ as defined in steps 6 and 3, and $Q_{u,d} =
Q_u \cap Q_d$ with $u\in \{T,C\}$ and $d\in \{d,c\}$.
 \qed
\end{proof}

\section{Finite set interpretation}\label{annex:fsi}

In this section, we formally define a finite set interpretation $I_R$
from $\Delta_{\Sigma\cup \{1,2\}}^n$ to the rewriting graph of a
GSTRS $R$. In the whole section,
we consider a distinguished normalised automaton with tests $A = (Q,I,F,\Delta)$
recognising $R^*$, constructed according to the process of the previous
section. 

Let us first formally define a possible presentation of the graph
$\Delta_{\Sigma\cup \{1,2\}}^n$. Vertices of this graph are $n$-stacks
over alphabet $\Sigma \cup \{1,2\}$, and there is an edge
$(x,\theta,y)$ in $\Delta_{\Sigma\cup \{1,2\}}^n$ if $\theta \in
\Ops{n}(\Sigma \cup \{1,2\}) \cup \Tests{n}$ and $y = \theta(x)$.

Since we are building an unlabelled graph, our interpretation
consists of these formul\ae{}:
\begin{itemize}
\item $\delta(X)$ which describes which subsets of
  $\Stacks_n(\Sigma\cup\{1,2\})$ are in the graph,
  \item $\Psi_D(X_s,X_t)$ which is true if $\mathcal{R}_D(s,t)$, for 
$D \in R$,
\item $\phi(X_s,X_t)$ which is true if $\mathcal{R}(A)(s,t)$.
\end{itemize}

\subsection{Notations and Technical Formul\ae{}}

We will use the $\push{d}$ and $\pop{d}$ operations to simplify the notations. 
They have the usual definition (as can be encountered in \cite{Carayol05}), but 
notice that we can define them easily with our operations:
$\push{d}(x) = y$ if there exists $z\in V, a\in \Sigma \cup \{1,2\}$ such that 
$ x \xrightarrow{\cop{1}} z \xrightarrow{\rew{a}{d}} y$, and $\pop{d}(x) = y$ 
if $x = \push{d}(y)$. Observe that $\push{d}(x)$ and $\pop{d}(x)$ are well 
defined as there can only be one $a$ such that the definition holds: the $a$ 
which is the topmost letter of $x$.
We extend this notations to push and pop words to simplify notations.

We first define some formul\ae{} over $\Delta_{\Sigma\cup \{1,2\}}^n$ which 
will be used to construct the set of stacks used to represent stack trees over 
$\Delta_{\Sigma\cup \{1,2\}}^n$.

Given $\theta \in \Ops{n-1}(\Sigma)\cup \Tests{n-1}$, we define $\psi_\theta$ 
such that, given 
two $n$-stacks $x,y$, $\psi_\theta(x,y) = x \xrightarrow{\theta} y$.
$\psi_{\bcopy{n}{i},d}(x,y) = \exists a \in 
\Sigma,z_1,z_2,z_3,z_4,z_5,z_6,z_7,z_8 \in V, x \xrightarrow{\cop{1}} z_1 
\xrightarrow{\rew{a}{i}} z_2 \xrightarrow{\cop{1}} z_3 \xrightarrow{\rew{i}{d}} 
z_4 \xrightarrow{\cop{n}} z_5 \xrightarrow{\rew{d}{i}} z_6 
\xrightarrow{\ncop{1}} z_7 \xrightarrow{\rew{i}{a}} z_8 \xrightarrow{\ncop{1}} 
y$.

$\psi_\theta(x,y)$ is true if $y$ is obtained by applying $\theta$ to $x$. 
$\psi_{\bcopy{n}{i},d}(x,y)$ is true if $y$ is obtained by adding $i$ and $d$ 
to the topmost $1$-stack of $x$, duplicating its topmost $(n-1)$-stack and then 
removing $d$ and $i$ from its topmost $1$-stack.

We now give a technical formula which ensures that a given stack $y$ is 
obtained from a stack $x$ using only the previous formul\ae{}: $\Succ(x,y)$
\begin{multline*}
  \Succ(x,y) = \forall X, ((x \in X \land \forall z,z', (z \in X
  \land (\bigvee_{\theta \in \Ops{n-1} \cup \Tests{n-1}} \psi_\theta(z,z')
  \\
  \vee \bigvee_{i \in \{1,2\}} \bigvee_{d \leq i} 
\psi_{\bcopy{n}{i},d}(z,z')))
  \Rightarrow z' \in X) \Rightarrow y \in X)
\end{multline*}

This formula is true if for every set of $n$-stacks $X$, if $x$ is in $X$ and 
$X$ is closed by the relations defined $\psi_\theta$ and 
$\psi_{\bcopy{n}{i},d}$, then $y$ is in $X$.

\begin{lemma}\label{lem:reach}
  For all $n$-stacks $x = \stack{n}{x_1,\cdots,x_m}$ and $y = 
\stack{n}{y_1,\cdots,y_{m'}}$,

\hfill  $\Succ(x,y)$ holds if and only if $y = [ x_1,\cdots,x_{m-1},
    \push{i_m d_m}(y_m),\push{i_{m+1} d_{m+1}}(y_{m+1}),$ $\cdots,$ $
    \push{i_{m'-1} d_{m'-1}}(y_{m'-1}),y_{m'} ]_n$ where for all $m \leq j < 
m'$, 
$i_j \in  \{1,2\}$, $d_j \leq i_j$ and for all $m \leq j \leq m'$,  there 
exists a sequence of operations $\rho_j \in (\Ops{n-1}(\Sigma)\cup 
\Tests{n-1})^*$ such that
  $\rho_j(x_m,y_j)$.
\end{lemma}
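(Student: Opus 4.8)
The plan is to first observe that the second-order formula $\Succ(x,y)$ is nothing but the standard encoding of reachability: it holds exactly when $y$ lies in the least set containing $x$ and closed under the one-step relations $\psi_\theta$ (for $\theta \in \Ops{n-1}(\Sigma) \cup \Tests{n-1}$) and $\psi_{\bcopy{n}{i},d}$ (for $i \in \{1,2\}$, $d \leq i$); equivalently, when $y$ is obtained from $x$ by a finite sequence of such steps. So it suffices to characterise this reachability relation and match it with the prescribed form. Throughout I would use the net effects already recorded after the definitions: $\psi_\theta$ replaces the topmost $(n-1)$-stack $s_k$ of $\stack{n}{s_1,\cdots,s_k}$ by $\theta(s_k)$, while $\psi_{\bcopy{n}{i},d}$ turns it into $\stack{n}{s_1,\cdots,s_{k-1},\push{i d}(s_k),s_k}$ (the internal $\cop{n}$ duplicates $s_k$, and the surrounding block of $\cop{1}$, $\rew{}{}$ and $\ncop{1}$ operations decorates the lower copy while restoring the upper one). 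In particular neither step ever touches the $m-1$ lowest $(n-1)$-stacks nor decreases their number, which already forces the first $m-1$ components of any reachable $y$ to equal $x_1,\cdots,x_{m-1}$.

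For the forward implication I would let $S$ be the set of all $n$-stacks of the form claimed in the statement and show that $x \in S$ (taking $m'=m$, $y_{m'}=x_m$, $\rho_{m'}=\varepsilon$) and that $S$ is closed under both one-step relations. If $y \in S$ has topmost $(n-1)$-stack $y_{m'} = \rho_{m'}(x_m)$, then applying $\psi_\theta$ yields the same stack with top $\theta(y_{m'}) = (\rho_{m'}\theta)(x_m)$ and the frozen copies unchanged, hence still in $S$; and applying $\psi_{\bcopy{n}{i},d}$ inserts $\push{i d}(y_{m'})$ just below the top and keeps $y_{m'}$ on top, which is again an element of $S$ (with $m'$ incremented and $i_{m'}=i$, $d_{m'}=d$). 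Since $S$ is closed and contains $x$, the least closed set is included in $S$, so $\Succ(x,y)$ implies $y \in S$.

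For the backward implication, given $y \in S$ I would construct an explicit reachability path. Starting from $x$ with top $x_m$, apply the sequence $\rho_m$ to bring the top to $y_m$, apply $\psi_{\bcopy{n}{i_m},d_m}$ to freeze $\push{i_m d_m}(y_m)$ just below the top, and iterate: from the current top $y_j$ move to $y_{j+1}$, then freeze with $\psi_{\bcopy{n}{i_{j+1}},d_{j+1}}$, stopping once the top equals $y_{m'}$ with no final freeze. A straightforward induction on $m'-m$ shows the resulting stack is exactly $y$. The one non-immediate ingredient, which I expect to be the main point of the argument, is that the intermediate top $y_j$ can actually be transformed into $y_{j+1}$ by operations of $\Ops{n-1}(\Sigma)$: I cannot simply reuse $\rho_{j+1}$ here, since it starts from $x_m$ rather than from $y_j$. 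I would therefore prove separately that the reachability relation induced by $\Ops{k}(\Sigma)$ on non-empty $k$-stacks is \emph{total} for every $k < n$, by induction on $k$. At order $0$ a single $\rew{a}{b}$ suffices; at order $k$, one shrinks any $k$-stack to a single $(k-1)$-stack by repeatedly rewriting its topmost $(k-1)$-stack so as to match the one immediately below it (possible by the induction hypothesis) and then applying $\ncop{k}$, converts that $(k-1)$-stack into the desired one by the induction hypothesis, and finally rebuilds the target using $\cop{k}$ together with modifications of the top. This totality makes each step $y_j \to y_{j+1}$ realisable, which completes the construction and establishes $\Succ(x,y)$, hence the equivalence.
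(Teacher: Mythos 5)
The paper states Lemma~\ref{lem:reach} without any proof (only Corollary~\ref{cor:code_reach}, which relies on it, is proved), so there is nothing of the paper's to compare yours against: your proposal supplies the missing argument, and it is correct. The net effects you record for the one-step relations are right ($\psi_\theta$ rewrites the topmost $(n-1)$-stack in place, $\psi_{\bcopy{n}{i},d}$ inserts the decorated copy $\push{i d}(s_k)$ immediately below an unchanged top, and neither relation ever alters or removes the $m-1$ bottom components), and the two inclusions --- closure of your set $S$ for the forward direction, an explicit freeze-and-transform path for the backward one --- go through once these effects are in place. More importantly, you put your finger on the only genuinely non-routine point of the statement: an actual derivation produces a chain in which each $y_{j+1}$ must be reachable from $y_j$ by operations of $\Ops{n-1}(\Sigma)\cup\Tests{n-1}$, whereas the lemma only asserts reachability of each $y_j$ from $x_m$; the two formulations agree only because the reachability relation induced by $\Ops{k}(\Sigma)$ on $\Stacks_{k}(\Sigma)$ is total, and your shrink-to-a-singleton / convert / rebuild induction establishes that totality correctly. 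Without this auxiliary fact the backward implication would genuinely fail, so it is right to make it an explicit lemma. One caveat you inherit from the paper rather than introduce: $\Succ$ is ultimately evaluated inside a finite-set (WMSO) interpretation, where $\forall X$ ranges over finite sets; since no finite set containing $x$ is closed under the step relations (the reachable set is infinite), the identification of the formula with reachability that your first sentence takes for granted requires the set quantifier to range over arbitrary sets --- exactly the reading the paper's own informal gloss of $\Succ$ also adopts. That is a defect of the formula $\Succ$ as written, not of your argument, but it is worth being aware that your opening reduction is only as sound as that reading.
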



\begin{corollary}\label{cor:code_reach}
  For every $n$-stack $x$ and $a\in \Sigma$,
  $\Succ(\stack{n}{a},x)$ holds if and only if there exist a stack
  tree $t$ and a node $u$ such that $x = \Code{t}{u}$.
\end{corollary}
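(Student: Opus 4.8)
The plan is to read the corollary off Lemma~\ref{lem:reach} by specialising $x$ to the singleton stack $\stack{n}{a}$, and then to match the resulting normal form against the definition of $\Code{t}{u}$ term by term. Taking $m = 1$ and $x_1 = a$ in the lemma, $\Succ(\stack{n}{a},x)$ holds if and only if $x$ can be written as $\stack{n}{\push{i_1 d_1}(y_1),\ldots,\push{i_{k-1}d_{k-1}}(y_{k-1}),y_k}$ for some $k \geq 1$, where $i_j \in \{1,2\}$ and $d_j \leq i_j$ for $1 \leq j < k$, and where each $y_j$ is reachable from the fully nested singleton $(n-1)$-stack $a$ by some $\rho_j \in (\Ops{n-1}(\Sigma) \cup \Tests{n-1})^*$. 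The whole argument then reduces to exhibiting a bijection between such decompositions and pairs $(t,u)$ with $u \in \dom(t)$.

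The one piece with genuine content that I would isolate first is the reachability fact: the set of $(n-1)$-stacks reachable from the singleton $a$ using operations of $\Ops{n-1}(\Sigma)$ is all of $\Stacks_{n-1}(\Sigma)$, regardless of $a$. I would prove this by induction on the level, using a $\cop{1}$ immediately followed by a $\rew{a}{b}$ to push an arbitrary letter at level $1$, and $\cop{k}$ together with $\ncop{k}$ to assemble any level-$k$ stack out of level-$(k-1)$ ones. Since tests only restrict the identity and the $\rho_j$ may be chosen freely, test-free witnesses always suffice; hence the clause ``$y_j$ is reachable from $a$'' is equivalent to ``$y_j$ is an arbitrary element of $\Stacks_{n-1}(\Sigma)$''.

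For the ``if'' direction, given $t$ and $u \in \dom(t)$ with $|u| = k-1$, I would set $y_j = t(u_{\leq j-1})$, $i_j = \#(u_{\leq j-1})$ and $d_j = u_j$ for $1 \leq j < k$, together with $y_k = t(u)$. Each $y_j$ is reachable by the reachability fact; we have $d_j \leq i_j$ because $u_{\leq j}$ is a node of $t$, so $u_j$ is a valid child index of $u_{\leq j-1}$; and by definition of $\push{w}$ the $j$-th component equals $\push{\#(u_{\leq j-1})\, u_j}(t(u_{\leq j-1}))$, which is exactly $\push{w_{j-1}}(t(u_{\leq j-1}))$, so the displayed stack is $\Code{t}{u}$ and $\Succ(\stack{n}{a},\Code{t}{u})$ holds. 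For the ``only if'' direction I would reconstruct $(t,u)$ from a given decomposition by letting $u = d_1 \cdots d_{k-1}$, building the path $\varepsilon, d_1, d_1 d_2, \ldots, u$, labelling the node $u_{\leq j-1}$ by $y_j$ and giving it exactly $i_j$ children (the path continuing through child $d_j$, the remaining children being fresh single $a$-leaves), and labelling $u$ by $y_k$. Left-closedness and consistency of the arities then follow from $i_j \in \{1,2\}$ and $d_j \leq i_j$, and by construction $\Code{t}{u} = x$; note that $u$ need not be a leaf, so no arity constraint on $u$ itself is required.

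The main obstacle I expect is not conceptual but bookkeeping: aligning the $k-1$ pushed components of the normal form with the $|u|$ proper ancestors of $u$, keeping the ``arity then direction'' convention $w_i = \#(u_{\leq i})\,u_{i+1}$ straight, and treating the degenerate case $k = 1$ (i.e.\ $u = \varepsilon$), where the decomposition carries no pushes and $t$ is the single node labelled $y_1$. Apart from the reachability fact, which carries all the mathematical weight, the proof is a faithful transcription between two descriptions of the same $n$-stack.
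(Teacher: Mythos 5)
Your proof is correct and follows essentially the same route as the paper's: both directions are obtained by specialising Lemma~\ref{lem:reach} to $\stack{n}{a}$ and translating between the resulting normal form and the definition of $\Code{t}{u}$, with the converse handled by the same explicit tree reconstruction (path labelled by the $y_j$, arities $i_j$, remaining children padded with singleton $a$-labelled leaves). The only difference is that you isolate and sketch a proof of the reachability fact that every $(n-1)$-stack over $\Sigma$ is reachable from the singleton $a$, which the paper simply asserts; making that step explicit (with the stronger induction hypothesis that any level-$k$ stack can be transformed into any other, so that $\ncop{k}$'s applicability condition can always be met) is a welcome addition rather than a deviation.
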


\begin{proof}
Suppose that there exist a stack tree $t$ and a node $u$ such that $x = 
\Code{t}{u}$. Then $x = [ \push{\#(\varepsilon) u_1}(t(\varepsilon)), 
\push{\#(u_{\leq 1}) u_2}(t(u_{\leq 1})),$ $\cdots,$

\hfill $\push{\#(u_{\leq |u|-1}) 
u_{|u|}} $ $(t(u_{\leq |u|-1})), t(u)]_n$.
As for every $i$, $t(u_{\leq i})$ is in $\Stacks_{n-1}(\Sigma)$, there 
exists a $\rho_i$ in $(\Ops{n-1}(\Sigma)\cup \Tests{n-1})^*$ such that 
$\rho_i(\stack{n}{a},t(u_{\leq i}))$. Then by the previous lemma, 
$\Succ(\stack{n}{a},x)$ is true.

Conversely, suppose that $\Succ(\stack{n}{a},x)$ is true. By Lemma
\ref{lem:reach}, we therefore have $x = \stack{n}{\push{i_0 d_0}(x_0),
  \push{i_1 d_1}(x_1), \cdots, \push{i_{m-1} d_{m-1}}(x_{m-1}), x_m}$, where for
every $j$ there exists a $\rho_j \in (\Ops{n-1}(\Sigma)\cup \Tests{n-1})^*$ 
such that
$x_j = \rho_j(\stack{n}{a})$. Then, for every $j$, $x_j \in
\Stacks_{n-1}(\Sigma)$.

We take a tree domain $U$ such that $d_0\cdots d_{m-1} \in U$. We define a 
tree $t$ of domain $U$ such that for every $j$, $t(d_0\cdots d_j) = x_{j+1}$,
$t(\varepsilon) = x_0$, every node $d_0 \cdots d_j$ has $i_{j+1}$ sons, the 
node $\varepsilon$ has $i_0$ sons, and for every $u \in U$ which is not a $d_0 
\cdots d_j$, $t(u) = \stack{n}{a}$. Then we have $x = \Code{t}{d_0\cdots 
d_{m-1}}$.
 \qed
\end{proof}

\subsection{The formula $\delta$}

\noindent
We now define $\delta(X) = \mathrm{OnlyLeaves}(X)) \land
\mathrm{TreeDom}(X) \land \mathrm{UniqueLabel}(X)$ with
\begin{alignat*}{2}
& \mathrm{OnlyLeaves}(X)&\ =\ & \forall x, x\in X \Rightarrow 
\Succ(\stack{n}{a},x)\\
& \mathrm{TreeDom}(X) &\ =\ & \forall x,y,z ((x \in X \land 
 \psi_{\bcopy{n}{2},2}(y,z) \land \Succ(z,x)) \Rightarrow\\
& &	    &	      \exists r,z' (r \in X \land 
\psi_{\bcopy{n}{2},1}(y,z') \land \Succ(z',r))) \wedge\\
& & & ((x \in X \land  \psi_{\bcopy{n}{2},1}(y,z) \land \Succ(z,x)) 
\Rightarrow\\
& &	    &	      \exists r,z' (r \in X \land 
\psi_{\bcopy{n}{2},2}(y,z') \land \Succ(z',r)))\\
& \mathrm{UniqueLabel}(X)&\ =\ & \forall x,y, (x \neq y \land x \in X \land 
y \in X) 
\Rightarrow\\
& &	    &	      (\exists z,z',z'', \psi_{\bcopy{n}{2},1}(z,z') \land 
\psi_{\bcopy{n}{2},2}(z,z'') \land\\
& &	    &	      ((\Succ(z',x) \land \Succ(z'',y)) \vee (\Succ(z'',x) 
\land \Succ(z',y))))
\end{alignat*}
where $a$ is a fixed letter of $\Sigma$.

Formula $\mathrm{OnlyLeaves}$ ensures that an element $x$ in $X$ encodes a
node in some stack tree.
$\mathrm{TreeDom}$ ensures that the prefix closure of
the set of words $d_0 \cdots d_{m-1}$ such that 
\[
\stack{n}{\push{i_0 d_0}(x_0)),\push{i_1 
d_1}(x_1),\cdots,\push{i_{m-1} d_{m-1}}(x_{m-1}),x_m }
\in X
\]
is a valid domain of a tree, and that the set of words $i_0\cdots i_{m-1}$ is 
included in this set (in other words, that the arity announced by the $i_j$ is 
respected).
an 
Finally $\mathrm{UniqueLabel}$ ensures that for any two elements 
\begin{align*}
  & x = \stack{n}{\push{i_0 d_0}(x_0)), \push{i_1 d_1}(x_1), \cdots,
    \push{i_{m-1} d_{m-1}}(x_{m-1}), x_m}
  \\
  \text{and } & y = \stack{n}{\push{i'_0 d'_0}(y_0)), \push{i'_1 d'_1}(y_1),
    \cdots, \push{i'_{m-1} d'_{m'-1}}(y_{m'-1}), y_{m'} }
\end{align*}
of $X$, there exists an index $1\leq j\leq \mathrm{min}(m,m')$ such that for
every $k< j$, $x_k = y_k$, $i_k = i'_k$ and $d_k = d'_k$, $x_j = y_j$, $i_j = 
i'_j$ and $d_j \neq d'_j$, i.e. for any two elements, the $(n-1)$-stacks 
labelling common ancestors are equal, and $x$ and $y$ cannot encode the same 
leaf (as $d_0\cdots d_{m-1} \neq d'_0 \cdots d'_{m'-1}$). Moreover, it also 
prevents $x$ to code a node on the path from the root to the node coded by $y$.

\begin{lemma}
$\forall X \subseteq \Stacks_n(\Sigma \cup \{1,2\})$, $\delta(X) \iff \exists 
t \in \TS_n, X = X_t$

where $X$ ranges only over finite sets of $\Stacks_n(\Sigma \cup \{1,2\})$.
\end{lemma}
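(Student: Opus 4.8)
The plan is to prove the two implications separately, using Corollary~\ref{cor:code_reach} and Lemma~\ref{lem:reach} as the bridge between $n$-stacks $x$ satisfying $\Succ(\stack{n}{a},x)$ and node encodings $\Code{t}{u}$. The easy direction is $(\Leftarrow)$. Assuming $X = X_t$, I would verify each conjunct of $\delta$ in turn. $\mathrm{OnlyLeaves}(X_t)$ is immediate: every element is $\Code{t}{u}$ for a leaf $u$, hence satisfies $\Succ(\stack{n}{a},-)$ by Corollary~\ref{cor:code_reach}. For $\mathrm{UniqueLabel}$, given two distinct leaves $u,v$ of $t$, let $w$ be their longest common prefix; since neither leaf extends the other, $w$ is a strict prefix of both and $u,v$ leave $w$ in different directions, so $\#_t(w) = 2$. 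The common encoding of the path from the root to $w$ provides the witness $z$, and the two direction markers yield $z',z''$ via $\psi_{\bcopy{n}{2},1}$ and $\psi_{\bcopy{n}{2},2}$, with $x = \Code{t}{u}$ and $y = \Code{t}{v}$ reachable by $\Succ$. Finally $\mathrm{TreeDom}$ follows from $t$ being a genuine, hence left-closed, tree with consistent arities: whenever a leaf encoding in $X_t$ witnesses an arity-$2$ node with a direction-$2$ (resp. direction-$1$) child, that node also has a direction-$1$ (resp. direction-$2$) child, whose subtree, being finite, contains a leaf, the encoding of which is the required element of $X_t$.

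For the harder direction $(\Rightarrow)$, assume $\delta(X)$ holds and construct a stack tree $t$ with $X = X_t$. By $\mathrm{OnlyLeaves}$ and Corollary~\ref{cor:code_reach}, every $x \in X$ is a node encoding, so by Lemma~\ref{lem:reach} it decomposes as $\stack{n}{\push{i_0 d_0}(x_0),\ldots,\push{i_{m-1}d_{m-1}}(x_{m-1}),x_m}$, yielding a path $p_x = d_0\cdots d_{m-1}$ together with labels $x_0,\ldots,x_m$ and announced arities $i_0,\ldots,i_{m-1}$. I would define $\dom(t)$ as the prefix closure of $\{p_x \mid x\in X\}$, which is finite since $X$ is, and set $t(w) = x_{|w|}$ for any $x$ whose path $p_x$ passes through $w$. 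The crux is that this labelling is well defined: $\mathrm{UniqueLabel}$ guarantees that any two distinct $x,y\in X$ share a common ancestor-encoding $z$ before branching in opposite directions, which forces the $(n-1)$-stacks \emph{and} the arities recorded along their common prefix to coincide, so $t(w)$ does not depend on the chosen $x$.

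It then remains to check that $t$ is a valid stack tree and that $X = X_t$; this last verification is where I expect the main obstacle, namely reconciling the announced arities $i_j$ with the actual branching structure of $\dom(t)$. I would use $\mathrm{TreeDom}$ to establish left-closedness (a direction-$2$ path entails a sibling direction-$1$ path, and conversely) so that every announced arity-$2$ node genuinely has both children, and $\mathrm{UniqueLabel}$ to establish arity consistency (the $i_j$ agree on common prefixes, so a branching node is announced with arity $2$ by both paths) and the fact that no $p_x$ is a proper prefix of another $p_y$. This last point guarantees that each $p_x$ ends at a leaf of $t$ while, conversely, the frontier of $t$ is exactly $\{p_x \mid x\in X\}$: a strict prefix of some $p_x$ has a successor on that path and is therefore internal, whereas each $p_x$ is maximal. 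Tracing the decomposition of Lemma~\ref{lem:reach} back then gives $x = \Code{t}{p_x}$ for every $x\in X$, whence $X \subseteq X_t$; and since $\fr(t)$ coincides with $\{p_x\}$, every leaf encoding $\Code{t}{u}$ equals some $x\in X$, yielding $X_t \subseteq X$ and concluding the proof.
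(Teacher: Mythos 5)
Your proposal is correct and follows essentially the same route as the paper's proof: the $(\Leftarrow)$ direction checks each conjunct of $\delta$ against $X_t$ using Corollary~\ref{cor:code_reach}, and the $(\Rightarrow)$ direction decomposes each element of $X$ via Lemma~\ref{lem:reach} into a path word and a label sequence, takes the prefix closure as the tree domain, and invokes $\mathrm{UniqueLabel}$ for label/arity consistency and leaf-incomparability and $\mathrm{TreeDom}$ for left-closedness of the domain. The only difference is presentational: you make explicit the verification that $\fr(t)$ equals the set of extracted paths, which the paper leaves implicit.
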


\begin{proof}
 We first show that for every $n$-stack tree $t$, $\delta(X_t)$ holds over 
$\Delta_{\Sigma\cup\{1,2\}}^n$.
 By definition, for every $x \in X_t$, $\exists u \in fr(t), x=\Code{t}{u}$, 
and 
then $\Succ(\stack{n}{a},x)$ holds (by Corollary \ref{cor:code_reach}).
Thus $\mathrm{OnlyLeaves}$ holds.
 
 Let us take $x \in X_t$ such that $x = \Code{t}{u}$ with $u = u_0\cdots u_i 2 
u_{i+2} \cdots u_{|u|}$. As $t$ is a tree, $u_0 \cdots u_i 2 \in dom(t)$ and 
so is $u_0 \cdots u_i 1$. Then, there exists $v \in fr(t)$ such 
that $\forall j \leq i, v_j = u_j$, $v_{i+1} = 1$, and $\Code{t}{v} \in X_t$.
Let us now take $x \in X_t$ such that $x = \Code{t}{u}$ with $u = u_0 \cdots 
u_i 1 u_{i+2} \cdots u_{|u|}$ and $\#(u_0 \cdots u_i 1) = 2$, then $u_0 
\cdots u_i 2$ is in $dom(t)$ and there exists $v \in fr(t)$ such that $\forall 
j 
\leq i, v_j = u_j$, $v_{i+1} = 2$ and $\Code{t}{v} \in X_t$.
Thus $\mathrm{TreeDom}$ holds.
 
 Let $x$ and $y$ in $X_t$ such that $x \neq y$, $x=\Code{t}{u}$ and $y 
= \Code{t}{v}$, and let $i$ be the smallest index such that $u_i \neq v_i$. 
Suppose that $u_i = 1$ and $v_i = 2$ (the other case is symmetric). We 
call $z = \Code{t}{u_0\cdots u_{i-1}}$, and take $z'$ and $z''$ such that 
$\psi_{\bcopy{n}{2},1}(z,z')$ and $\psi_{\bcopy{n}{2},2}(z,z'')$. We have then 
$\Succ(z',x)$ and $\Succ(z'',y)$.
And thus $\mathrm{UniqueLabel}$ holds. Therefore, for every stack tree
$t$, $\delta(X_t)$ holds.  

 Let us now show that for every $X \subseteq \Stacks_n(\Sigma \cup \{1,2\})$ 
such that $\delta(X)$ holds, there exists $t \in \TS_n$, such that $X = X_t$.
 As $\mathrm{OnlyLeaves}$ holds, for every $x\in X$, 
\[
x = 
\stack{n-1}{\push{i_0 u_0}(x_0),\push{i_1 u_1}(x_1),\cdots,\push{i_{k-1} u_{k-1 
} } (x_ { k-1 } ) , x_k}
\]
with, for all $j$, $x_j \in \Stacks_{n-1}$, $i_j \in \{1,2\}$ and $u_j \leq 
i_j$. In the following, we denote by $u^x$ the word $u_0 \cdots u_{k-1}$ for a
given $x$, and by $U = \{u \mid \exists x \in X, u \sqsubseteq
u^x\}$. $U$ is closed under prefixes. 
 As $\mathrm{TreeDom}$ holds, for all $u$, if $u2$ is in $U$, then $u1$ 
is in $U$ as well. Therefore $U$ is the domain of a tree. Moreover, if there is 
a $x$ such that $u1 \sqsubseteq u^x$ and $i_{|u|} = 2$, then 
$\mathrm{TreeDom}$ ensures that there is $y$ such that $u2 \sqsubseteq u^y$ and 
thus $u2 \in U$.
 As $\mathrm{UniqueLabel}$ holds, for every $x$ and $y$ two distinct elements 
of $X$, there exists $j$ such that for all $k < j$ we have $u^x_k= u^y_k$, 
and $u^x_j \neq u^y_j$. Then, for all $k \leq j$, we have $x_k = y_k$ and $i_k 
= i'_k$. Thus, for every $u \in U$, we can define $\sigma_u$ such that for 
every $x$ such that $u \sqsubseteq u^x$, $x_{|u|} = \sigma_u$, and the number 
of sons of each node is consistent with the coding.
 \smallskip
 
 Consider the tree $t$ of domain $U$ such that for all $u \in U$,
 $t(u) = \sigma_u$. We have $X = X_t$, which concludes the proof.
 \qed
\end{proof}

\subsection{The formula $\Psi_D$ associated with an operation}

We now take an operation $D$ which we suppose to be reduced, for the sake 
of simplicity (but we could do so for a non reduced operation, and for any 
operation, there exists a reduced operation with tests defining the same 
relation, from the two previous appendices). We define 
inductively $\psi_D$ as follow:

\begin{itemize}
 \item $\Psi_\emptydag (X,Y) = (X = Y)$
 
 \item $\Psi_{(F \cdot_{1,1} D_\theta) \cdot_{1,1} G}(X,Y) = \exists 
,z,z',Z,X',Y', z \in Z \wedge X\backslash X' = Y 
\backslash Y' = Z \backslash\{z\} \wedge \psi_\theta(z,z') \wedge 
\Psi_F(X,Z) \wedge \Psi_G(Z \cup \{z'\}\backslash\{z\},Y)$, for 
$\theta 
\in \Ops{n-1}\cup \Tests{n}$
 
 \item $\Psi_{(F \cdot_{1,1} D_{\bcopy{n}{1}}) \cdot_{1,1} G}(X,Y) = 
\exists z,z',Z,X',Y', z \in Z \wedge 
X\backslash X' = Y \backslash Y' = Z \backslash\{z\} \wedge 
\psi_{\bcopy{n}{1},1}(z,z') \wedge \Psi_F(X,Z) \wedge \Psi_G(Z \cup 
\{z'\}\backslash\{z\},Y)$
 
 \item $\Psi_{(F \cdot_{1,1} D_{\nbcopy{n}{1}}) \cdot_{1,1} G}(X,Y) = 
\exists z,z',Z,X',Y', z \in Z \wedge X\backslash 
X' = Y \backslash Y' = Z \backslash\{z\} \wedge \psi_{\bcopy{n}{1},1}(z',z) 
\wedge \Psi_F(X,Z) \wedge \Psi_G(Z \cup \{z'\}\backslash\{z\},Y)$
 
 \item $\Psi_{((F \cdot_{1,1} D_{\bcopy{n}{2}}) \cdot_{1,2} H) 
\cdot_{1,1} G}(X,Y) = \exists z,z',z'',Z,Z',X',Y', z \in Z \wedge 
X\backslash X' = Y \backslash Y' = Z \backslash\{z\} \wedge 
\psi_{\bcopy{n}{1},2}(z,z') \wedge \psi_{\bcopy{n}{2},2}(z,z'') \wedge 
\Psi_F(X,Z) \wedge \Psi_G(Z \cup
\{z',z''\}\backslash$ $\{z\},Z') \wedge z'' \in Z' \wedge z' \notin Z' 
\wedge 
\Psi_H(Z',Y)$
 
 \item $\Psi_{(F \cdot_{1,1} (G \cdot_{1,2} D_{\nbcopy{n}{1}})) 
\cdot_{1,1} H}(X,Y) = 
\exists z,z',z'',Z,Z',X',Y', z \in Z \wedge z' \in Z \wedge z \in Z' \wedge z' 
\notin Z' \wedge X\backslash X' = Y 
\backslash Y' = Z \backslash\{z,z'\} \wedge \psi_{\bcopy{n}{2},1}(z'',z) 
\wedge \psi_{\bcopy{n}{2},2}(z'',z') \wedge \Psi_F(X,Z') \wedge 
\Psi_G(Z',Z) \wedge  \Psi_G(Z \cup \{z''\}\backslash\{z,z'\},Y)$
\end{itemize}

As $D$ is a finite DAG, every $\psi_D$ is a finite formula, and is 
thus a monadic formula.

This formula is true if its two arguments are related by $\mathcal{R}_D$.

\begin{proposition}
 Given two stack trees $s,t$ and an operation $D$, $t\in D(t)$ if and 
only if $\Psi_D(X_s,X_t)$ is true.
\end{proposition}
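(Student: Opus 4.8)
The plan is to proceed by structural induction on the operation $D$, following the cases of Definition~\ref{def:compound}, and to exploit the fact that the formula $\Psi_D$ has been defined by exactly the same case analysis, each clause mirroring one clause of the localised application $D_{(i)}$. Since $D$ is assumed reduced, case~\ref{item:diamond} of Definition~\ref{def:compound} never occurs, which is why the definition of $\Psi_D$ has no corresponding clause; the remaining clauses (empty operation, an order $n-1$ operation or test, $\bcopy{n}{1}$, $\nbcopy{n}{1}$, and the binary copy and uncopy) are treated one by one. Throughout I would rely on the injectivity of the coding $t \mapsto X_t$ and on Corollary~\ref{cor:code_reach}, which guarantees that the $n$-stacks occurring in the witness sets are genuine node codes $\Code{t}{u}$. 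The base case $D = \emptydag$ is immediate: $D(s) = \{s\}$, so $t \in D(s)$ iff $t = s$, iff (by injectivity) $X_s = X_t$, iff $\Psi_\emptydag(X_s, X_t)$ holds, since the latter is defined as $X = Y$.

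For the inductive cases, the key is to set up a correspondence between the syntactic witnesses of $\Psi_D$ and the semantic data of a localised application $t = D_{(i)}(s)$. Concretely, in each clause the existentially quantified set $Z$ is to be read as the code $X_{s'}$ of the intermediate stack tree $s'$ obtained after performing the first sub-operation of the decomposition, and the quantified stacks $z, z', z''$ are to be read as the leaf codes $\Code{\cdot}{u}$ of the specific leaves that the basic operation manipulates. The sub-formulae $\psi_\theta$ and $\psi_{\bcopy{n}{i},d}$ then encode, at the level of a single $n$-stack, exactly the effect of a basic operation on the code of an affected leaf: $\psi_\theta(z,z')$ applies an order $n-1$ operation or test to the topmost $(n-1)$-stack, while the $\psi_{\bcopy{n}{i},d}$ formulae turn the code of a node into the code of one of its children once that node is given the appropriate arity (and symmetrically for the inverse operations). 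With these readings fixed, each clause of $\Psi_D$ unfolds into an application of the induction hypothesis to the sub-operations $F$, $G$ (and $H$) on the corresponding intermediate trees, composed exactly as in the matching clause of the definition of $D_{(i)}$. The set-algebra side conditions, of the form $X \setminus X' = Y \setminus Y' = Z \setminus \{z\}$, together with the membership constraints $z'' \in Z' \wedge z' \notin Z'$ in the binary clauses, are what guarantee that $X$, $Z$ and $Y$ share the common block of untouched leaf-codes and differ only on the affected leaf (or on the pair of created/destroyed leaves in the binary cases), so that the basic operation acts on a single subtree, as required.

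The main obstacle, and where most of the verification effort lies, is the bookkeeping of leaf positions. In the localised application the affected leaf is designated by a numeric index that shifts as soon as a $\bcopy{n}{2}$ creates two leaves where there was one, or a $\nbcopy{n}{2}$ does the reverse (recall that in case~$3$ of the application the sub-operations are applied at indices $i$ and $i+1$). The formula, by contrast, never refers to these indices: it tracks the affected leaf through the explicit stacks $z, z', z''$ and their codes. I would therefore prove, as the substance of each inductive step, that these two schemes coincide, namely that the $\psi_{\bcopy{n}{2},d}$ formulae compute precisely the codes of the children of the node coded by $z$, and that replacing $z$ in $Z$ by the two child-codes, i.e. passing from $Z$ to $Z \cup \{z',z''\} \setminus \{z\}$, yields exactly $X_{s'}$ for the tree $s'$ obtained after the copy. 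This makes the set-based manipulation agree with the tree operation ${\bcopy{n}{2}}_{(i)}$, and it is checked directly from the definition of $\Code{t}{u}$, in particular from the fact that each $w_i = \#(u_{\leq i})\,u_{i+1}$ records both the arity of an ancestor and the direction chosen to descend from it. Once this positional correspondence is established in each case, the equivalence $t \in D(s) \iff \Psi_D(X_s, X_t)$ follows by combining it with the induction hypothesis applied to the sub-operations, which completes the proof.
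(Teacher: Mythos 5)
Your proposal is correct and follows essentially the same route as the paper: a structural induction on $D$ mirroring the clause-by-clause definition of $\Psi_D$, in which the witness set $Z$ is identified with the code $X_r$ of the intermediate tree $r = F_{(i)}(s)$, the quantified stacks $z, z', z''$ with the codes of the affected leaves, and the set-difference conditions with the fact that only the affected leaf-codes change. The paper likewise details only the $(F \cdot_{1,1} D_\theta) \cdot_{1,1} G$ case and leaves the remaining (positional) bookkeeping to the reader, so your explicit attention to the index-shifting under $\bcopy{n}{2}$ is if anything slightly more thorough than the written proof.
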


\begin{proof}
 We show it by induction on the structure of $D$:
 \begin{itemize}
  \item If $D = \emptydag$, $\Psi_D(X_s,X_t)$ if and only if $X_s = 
X_t$, which is true if and only if $s = t$.

  \item $D = (F \cdot_{1,1} D_\theta) \cdot_{1,1} G$, with 
$\theta \in \Ops{n-1}\cup \Tests{n}$. Suppose $t \in D(s)$, there exists 
$i$ such that $t = D_{(i)}(t)$. By definition, $t = G_{(i)} 
(\theta_{(i)} (F_{(i)}(s)))$. We call $r = F_{(i)} (s)$. By induction 
hypothesis, we have $\Psi_F(X_s,X_r)$. By definition, we have, for all 
$j<i$, $\Code{s}{u_j} = \Code{r}{u_j}$, and for all $j>i$, 
$\Code{s}{u_{j+|I_F|-1}} = \Code{r}{u_j}$, thus $X_s \backslash 
\{\Code{s}{u_j} \mid i \leq j \leq |I_F|-1\} = X_r \backslash 
\{\Code{r}{u_i}\}$. We call $r' = \theta_{(i)}(r)$. We have $X_{r'} = X_r 
\backslash \{\Code{r}{u_i}\} \cup \{\theta(\Code{r}{u_i})\}$. And by 
definition, 
we have $\psi_\theta(\Code{r}{u_i},\theta(\Code{r}{u_i}))$. We have $t= 
G_{(i)}(r')$, thus, by induction hypothesis, $\Psi_G(X_{r'},X_t)$ is 
true. Moreover, by definition, $X_t \backslash \{\Code{t}{u_j} \mid i \leq j 
\leq |O_G|-1\} = X_{r'} \backslash $ $\{\Code{r'}{u_i}\} = X_r \backslash 
\{\Code{r}{u_i}\}$. Thus, $\Psi_D(X_s,X_t)$ is true, with $Z = X_r$, $z = 
\Code{r}{u_i}$, $z' = \Code{r'}{u_i}$, $X' = \{\Code{s}{u_j} \mid i \leq j \leq 
|I_D|-1\}$ and $Y' = \{\Code{t}{u_j} \mid i \leq j \leq |O_D|-1\}$.

Suppose that $\Psi_D(X_s,X_t)$ is true. We call $r$ the tree such that $X_r 
= Z$. By induction hypothesis, we have $r \in F(s)$. Moreover, we have $z = 
\Code{r}{u_i}$ such that $X_r\backslash\{z\} = X_s \backslash X'$. Thus, by 
definition, $r = F_{(i)}(s)$, and $X' = \{\Code{s}{u_j}\mid i\leq 
|I_F|-1\}$. We have $z' = \theta(z)$, as $\psi_\theta(z,z')$ is true. We 
call $r' = \theta_{(i)}(r)$, and we have $X_{r'} = X_r \backslash \{z\} \cup 
\{z'\}$. As we have $\Psi_G(X_{r'},Y)$, by induction, we have $t \in 
G(r')$. As we moreover have $Y \backslash Y' = Z \backslash \{z\}$, we 
thus have $t = G_{(i)}(r')$. Thus, we have $t = 
G_{(i)}(\theta_{(i)}(F_{(i)}(s))) = D_{(i)}(s)$.
 \end{itemize}
 
 The other cases are similar and left to the reader.

\end{proof}

\subsection{The formula $\phi$ associated with an automaton}

Let us now explain $\phi(X,Y)$, which can be written as $\exists
Z_{q_1},\cdots,Z_{q_{|Q|}}, \phi'(X,Y,\vec{Z})$ with
$\phi'(X,Y,\vec{Z}) = \init(X,Y,\vec{Z}) \land \diff(\vec{Z}) \land
\mathrm{Trans}(\vec{Z})$. We detail each of the three subformulas
$\init$, $\diff$ and $\mathrm{Trans}$ below:

\[
\init(X,Y,\vec{Z})= (\bigcup_{q_i \in I} Z_{q_i}) \subseteq X \land 
(\bigcup_{q_i \in F} Z_{q_i}) \subseteq Y \land X \setminus(\bigcup_{q_i \in 
I} Z_{q_i}) = Y \setminus (\bigcup_{q_i \in F} Z_{q_i})
\]
This formula is here to ensure that only leaves of $X$ are labelled by initial 
states, only leaves of $Y$ are labelled by final states and outside of their 
labelled leaves, $X$ and $Y$ are equal (i.e. not modified).

\begin{multline*}
  \diff(\vec{Z}) = \big( \bigwedge_{q,q' \in Q_{T,c}} Z_q \cap Z_{q'}
  = \emptyset \big) \land \big(\bigwedge_{q,q' \in Q_{C,c}} Z_q \cap
  Z_{q'} = \emptyset \big)\\
  \land \big( \bigwedge_{q,q' \in Q_{T,d}} Z_q \cap Z_{q'} = \emptyset
  \big) \land \big( \bigwedge_{q,q' \in Q_{C,d}} Z_q \cap Z_{q'} =
  \emptyset \big)
\end{multline*}
This formula is here to ensure that a given stack (and thus a given leaf in a 
tree of the run) is labelled by at most a state of each subpart of $Q$: 
$Q_{T,d},Q_{C,d},Q_{T,c},Q_{C,c}$. So if we have a non deterministic choice to 
do we will only choose one possibility. 

\[
\mathrm{Trans}(\vec{Z}) = \forall s, \bigwedge_{q\in Q} ((s\in Z_q)
\Rightarrow (\bigvee_{K \in \Delta}
\mathrm{Trans}_{K}(s,\vec{Z}) \vee \rho_q))
\]
where $\rho_q$ is true if and only if $q$ is a final state, and 
\begin{alignat*}{2}
  & \mathrm{Trans}_{(q,\bcopy{n}{1},q')}(s,\vec{Z}) & & = \exists t,
  \psi_{\bcopy{n}{1},1}(s,t) \land t \in Z_{q'},\\
  & \mathrm{Trans}_{(q,\nbcopy{n}{1},q')}(s,\vec{Z}) & & = \exists t,
  \psi_{\bcopy{n}{1},1}(t,s) \land t \in Z_{q'},\\
  & \mathrm{Trans}_{(q,\theta,q')}(s,\vec{Z}) & & = \exists t,
  \psi_{\theta}(s,t) \land t \in Z_{q'}, \text{for } \theta \in \Ops{n-1}\cup 
\Tests{n-1},\\
  & \mathrm{Trans}_{(q,(q',q''))}(s,\vec{Z}) & & = \exists t,t',
  \psi_{\bcopy{n}{2},1}(s,t) \land \psi_{\bcopy{n}{2},2}(s,t') \land t
  \in Z_{q'} \land t' \in Z_{q''},\\
  & \mathrm{Trans}_{((q,q'),q'')}(s,\vec{Z}) & & = \exists t,t',
  \psi_{\bcopy{n}{2},1}(t',s) \land \psi_{\bcopy{n}{2},2}(t',t) \land t
  \in Z_{q'} \land t' \in Z_{q''},\\
  & \mathrm{Trans}_{((q',q),q'')}(s,\vec{Z}) & & = \exists t,t',
  \psi_{\bcopy{n}{2},1}(t',t) \land \psi_{\bcopy{n}{2},2}(t',s) \land t
  \in Z_{q'} \land t' \in Z_{q''}.
\end{alignat*}
This formula ensures that the labelling respects the rules of the automaton, 
and that for every stack labelled by $q$, if there is a rule starting by $q$, 
there is at least a stack which is the result of the stack by one of those 
rules. And also that it is possible for a final state to have no successor.

\begin{proposition}
 Given $s,t$ two stack trees, 
$\phi(s,t)$ if and only if there are some operations $D_1,\cdots,D_k$ 
recognised by $A$ such that $t$ is obtained by applying 
$D_1,\cdots,D_k$ at disjoint positions of $s$.
\end{proposition}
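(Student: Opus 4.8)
The plan is to establish both directions of the equivalence by matching a satisfying assignment of $\vec{Z}$ with an accepting run of $A$ over a tuple of reduced operations applied at disjoint positions. The guiding idea is that each set $Z_q$ should collect the encodings $\Code{t'}{u}$ of those nodes $u$ which, at some intermediate stage $t'$ in the transformation of $s$ into $t$, are labelled by the control state $q$. Since the formulae $\psi_\theta$ and $\psi_{\bcopy{n}{i},d}$ were designed to mirror exactly the effect of the corresponding basic operation on the encoding of a leaf, the subformula $\mathrm{Trans}$ expresses precisely that this state labelling of encodings is locally consistent with $\Delta$, while $\init$ pins down the input and output vertices and $\diff$ forbids spurious identifications.

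For the forward direction, I would start from a tuple $\bar{D} = (D_1, \ldots, D_k)$ of operations accepted by $A$ with $t = \bar{D}_{(\mathbf{i})}(s)$ for some $\mathbf{i} = (i_1, \ldots, i_k)$ satisfying $i_{j+1} \geq i_j + d_j$ (which encodes the disjointness of positions), and fix for each $D_j$ an accepting labelling $\rho_j$. Unfolding the inductive definition of localised application, every vertex $x$ of $D_j$ corresponds to a node $u$ of some intermediate stack tree $t'$, hence to the $n$-stack $\Code{t'}{u}$; writing $\sigma_x$ for this stack, I would set $Z_q = \{\sigma_x \mid \rho_j(x) = q,\ 1 \leq j \leq k\}$. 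Formula $\init$ then holds because input vertices are labelled by initial states and are exactly the leaves of $s$ touched by the operations (elements of $X_s$), dually for output vertices and $X_t$, while untouched nodes contribute identically to $X_s$ and $X_t$. Formula $\mathrm{Trans}$ holds by consistency of each $\rho_j$ with $\Delta$ combined with the correspondence between the $\psi$-formulae and the basic operations. The delicate point is $\diff$: here I would invoke normalisation of $A$, whose partition of $Q$ into $Q_{T,c}, Q_{C,c}, Q_{T,d}, Q_{C,d}$ guarantees that any fixed position is visited at most once within each of the destructive/constructive and testing/non-testing phases, so that a single encoding lies in at most one $Z_q$ per subpart.

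For the backward direction, given $\vec{Z}$ satisfying $\phi(X_s, X_t)$, I would form the graph whose vertices are the $n$-stacks occurring in $\bigcup_q Z_q$ and whose edges are the witnesses of $\mathrm{Trans}$ (the $\psi$-relations linking a stack to its successors). By $\mathrm{Trans}$ this graph is a disjoint union of DAGs whose state labelling is consistent with $\Delta$, and by $\init$ its input vertices lie in $X_s$ with initial states while its output vertices lie in $X_t$ with final states. Decoding each connected component through the injectivity of the coding (Corollary \ref{cor:code_reach}) recovers operations $D_1, \ldots, D_k$ accepted by $A$ together with the positions at which they act, distinct components touching disjoint subtrees of $s$; reading off the result then yields $t = \bar{D}_{(\mathbf{i})}(s)$.

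I expect the backward direction to be the main obstacle, and specifically the role played by $\diff$. Because $\Code{t'}{u}$ for a node in one intermediate tree may coincide with the encoding of a node in a different intermediate tree, a naive reconstruction could glue together steps belonging to genuinely distinct runs and so manufacture a spurious witness. It is exactly normalisation that rules this out: since each encoding carries at most one state per phase-subpart, the edges recovered from $\mathrm{Trans}$ reassemble into reduced DAGs in the unique intended way, ensuring that the recovered $\bar{D}$ is a legitimate tuple of operations of $\mathrm{Op}(A)$ rather than an artefact of the coding.
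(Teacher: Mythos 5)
Your overall strategy coincides with the paper's: the forward direction builds $\vec{Z}$ from accepting labellings of the $D_j$ by collecting the encodings of intermediate leaves, and the backward direction reassembles a graph from the $\mathrm{Trans}$-witnesses and argues that it splits into operations accepted by $A$. The forward half is essentially the paper's argument. The gap is in the backward half. The subformula $\mathrm{Trans}$ is only an existential, forward-closure condition: for each labelled stack it demands that \emph{some} transition of $\Delta$ be witnessed, but nothing in $\phi$ prevents $\vec{Z}$ from containing superfluous labelled stacks, or a labelled stack from having several distinct witnessed successors (say both a $(q,(q',q''))$-witness and a $(q,\theta,q''')$-witness). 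From such an assignment your graph of witnesses acquires vertices with three or more sons, or sons with extra fathers, and it then fails to decompose according to Definition~\ref{def:compound}; the claim that ``by $\mathrm{Trans}$ this graph is a disjoint union of DAGs'' decoding into operations does not hold for an arbitrary satisfying $\vec{Z}$. Neither $\diff$ nor normalisation repairs this, since the spurious witnesses may involve states from different subparts of $Q$. The paper closes the gap by fixing a \emph{minimal} satisfying labelling $\vec{Z}$ and invoking minimality throughout the case analysis (no vertex has three sons, the two sons produced by a $(q,(q',q''))$-witness have no other father, and so on); acyclicity, by contrast, does come from normalisation, as you say, because a cycle would force a reduced sequence of $\Ops{n-1}\cup\Tests{n-1}$ operations defining the identity.

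A more minor point: you take the vertices of the reconstructed graph to be the $n$-stacks of $\bigcup_q Z_q$, whereas the paper takes pairs $(x,q)$ with $x\in Z_q$. The distinction matters because $\diff$ deliberately allows one stack to carry one state from each of the four subparts $Q_{T,d},Q_{C,d},Q_{T,c},Q_{C,c}$ --- this occurs at every test transition, whose source and target encode the same leaf --- so identifying all occurrences of a stack would collapse distinct positions of the run and create self-loops; working with $(x,q)$ pairs avoids this.
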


\begin{proof}
  First suppose there exist such $D_1,\cdots,D_k$.
We construct a labelling of $\Stacks_n(\Sigma 
\cup \{1,2\})$ which satisfies $\phi(X_s,X_t)$.
 We take a labelling of the $D_i$ by $A$. We will label the $\Stacks_n$ 
according to this labelling. If we obtain a tree $t'$ at any step in the 
run of the application of $D_i$ to $s$, we label $\Code{t'}{u}$ by 
the labelling of the node of $D_i$ appended to the leaf at position $u$ of 
$t'$. Notice that this does not depend on the order we apply the $D_i$ to 
$s$ nor the order of the leaves we choose to apply the operations first.

We suppose that $t = {D_k}_{i_k} (\cdots {D_1}_{i_1}(s)\cdots)$. 
Given a node $x$ of an $D_i$, we call $l(x)$ its labelling.

Formally, we define the labelling inductively: the 
$(D_1,i_1,s_1),\cdots,(D_k,i_k,s_k)$ labelling of 
$\Stacks_n(\Sigma\cup \{1,2\})$ is the following.
\begin{itemize}
  \item The $\emptyset$ labelling is the empty labelling.
  \item The $(D_1,i_1,s_1),\cdots,(D_k,i_k,s_k)$ labelling is the 
union of the $(D_1,i_1,s_1)$ labelling and the 
$(D_2,i_2,s_2),\cdots,(D_k,i_k,s_k)$ labelling.
 \item The $(\emptydag,i,s)$ labelling is $\{\Code{s}{u_i} \rightarrow l(x)\}$, 
where $u_i$ is the $i^{th}$ leaf of $s$ and $x$ is the unique 
node of $\emptydag$.
  \item The $(F_1 \cdot_{1,1} D_\theta) \cdot_{1,1} F_2,i,s)$ labelling 
is the $(F_1,i,s),(F_2,i,\theta_{(i)} ({F_1}_{(i)}(s)))$ labelling.

  \item The $((((F_1 \cdot_{1,1} D_{\bcopy{n}{2}}) \cdot_{2,1} F_3) 
\cdot_{1,1} F_2),i,s)$ labelling is the 
$(F_1,i,s),$ 

\hfill $(F_2,i,{\bcopy{n}{2}}_{(i)}($ ${F_1}_{(i)}(s)))$ 
$,(F_3,i+1
,$ ${ \bcopy{n}{2}}_{(i)}({F_1}_{(i)}(s)))$ labelling.
%

 \item The $((F_1 \cdot_{1,1} (F_2 \cdot_{2,1} \nbcopy{n}{2})) 
\cdot_{1,1} F_3,i,s)$ labelling is the 
$(F_1,i,s),(F_2,i+|I_{F_1}|,s),$ $(F_3,i,{\nbcopy{n}{2}}_{(i)}
({F_2}_{(i+1)} ({F_1}_{(i)}(s))))$ labelling.
\end{itemize}


Observe that this process terminates, as the sum of the edges and the nodes of 
all the DAGs strictly diminishes at every step.

We take $\vec{Z}$ the  
$(D_1,i_1,s),\cdots,(D_k,i_k,s)$ labelling of 
$\Stacks_{n}(\Sigma\cup\{1,2\})$.
 
 \begin{lemma}
  The labelling previously defined $\vec{Z}$ satisfies $\phi'(X_s,X_t,\vec{Z})$.
 \end{lemma}
 
 \begin{proof}
   
   Let us first cite a technical lemma which comes directly from the definition 
of the labelling:
    \begin{lemma}\label{lemma:labelling}
     Given a reduced operation $D$, a labelling of $D$, 
$\rho_D$, a stack tree $t$, a $i\in \mathbb{N}$ and a $j\leq 
|I_D|$, the label of $\Code{t}{u_{i+j-1}}$ (where $u_i$ is the 
$i^{\text{th}}$ leaf of $t$) in the $(D,i,t)$ labelling is 
$\rho_D(x_j)$ (where $x_j$ is the $j^{\text{th}}$ input node of $D$).
    \end{lemma}
    
    For the sake of simplicity, let us consider for this proof that $D$ is 
a reduced operation (if it is a set of reduced operations, the proof is the 
same for every operations).

   First, let us prove that $\mathrm{Init}$ is satisfied.
   From the previous lemma, all nodes of $X_s$ are labelled with 
the labels of input nodes of $D$ (or not labelled), thus they are 
labelled by initial states (as we considered an accepting labelling of 
$D$). Furthermore, as the automaton is distinguished, only these one 
can be labelled by initial states.
   Similarly, the nodes of $X_t$, and only them are labelled by final states 
(or not labelled).
   

    We now show that $\mathrm{Trans}$ is satisfied. Let us suppose that a 
$\Code{t'}{u_i}$ is labelled by a $q$. By construction of the labelling, it has 
been obtained by a $(\emptydag, i,t')$ labelling. If $q$ is final, then we have 
nothing to verify, as $\rho_q$ is true. If not, the node $x$ labelled by $q$ 
which is the unique node of the $\emptydag$ which labelled $\Code{t'}{u_i}$ by 
$q$ has at least one son in $D$. Suppose, for instance that $D = 
(F_1 \cdot_{1,1} D_\theta) \cdot_{1,1} F_2$ such that $x$ is the output node 
of $F_1$. We call $y$ the input node of $F_2$. As $D$ is 
recognised by $A$, it is labelled by a $q'$ such that 
$(q,\theta,q')\in\Delta_A$. By construction, we take the 
$(F_1,i,s),(F_2,i,\theta_{(i)}(t'))$ labelling, with 
$t'={F_2}_{(i)}(s)$. Thus we have $\Code{\theta_{(i)}(t')}{u_i}$ labelled 
by $q'$ (from Lemma \ref{lemma:labelling}), and thus 
$\mathrm{Trans}_{(q,\theta,q')}(\Code{t'}{u_i},\vec{Z})$ is true, as 
$\psi_\theta(\Code{t'}{u_i},\Code{\theta_{(i)}(t')}{u_i}$ is true.

The other possible cases for decomposing $D$ ($D = (((F_1 
\cdot_{1,1} D_{\bcopy{n}{1}}) \cdot_{2,1} F_3) \cdot_{1,1} F_2$ or 
$D = ((F_1 \cdot_{1,1} (F_2 \cdot_{2,1} \nbcopy{n}{2})) 
\cdot_{1,1} F_3$) are 
very similar and are thus left to the reader. Observe that $D$ may not be 
decomposable at the node $x$, in which case we decompose $D$ and consider 
the part containing $x$ until we can decompose the DAG at $x$, where the 
argument is the same.

Let us now prove that the labelling satisfies $\mathrm{Diff}$.
Given $q,q' \in Q_{C,d}$, suppose that there is a $\Code{t'}{u_i}$ which is 
labelled by $q$ and $q'$. By construction, this labelling is obtained by a 
$(F_1,i,t'_1),(F_2,i,t'_2)$ labelling, where $F_1$ and $F_2$ 
are both $\emptydag$, and $t'_1(u_i) = t'_1(u_i)$. We call $x$ (resp. $y$) 
the unique node of $F_1$ (resp. $F_2$). $x$ is labelled by $q$ and $y$ 
by $q'$.

Suppose that $D$ can be decomposed as $(G \cdot_{1,1} D_\theta )
\cdot_{1,1} H$ (or $((G \cdot_{1,1} D_{\bcopy{n}{2}}) \cdot_{2,1} 
K) \cdot_{1,1} H$, or $((G \cdot_{1,1} (H \cdot_{1,2} 
D_{\nbcopy{n}{2}}) \cdot_{1,1} K$)  such 
that $y$ is the output node of $G$ (if not, decompose $D$ until you 
can obtain such a decomposition). Then, suppose you can decompose $G = 
G_1 \cdot_{1,1} D_\theta \cdot_{1,1} G_2$ (or $((G_1 
\cdot_{1,1} (G_3 \cdot_{1,2} D_{\nbcopy{n}{2}}) \cdot_{1,1} G_2$. As 
we are considering states of $Q_{C,d}$, there is no other possible case) such 
that $x$ is the input node of $G_2$. Thus, we have by construction 
$G_2 (\Code{t'}{u_i}) = \Code{t'}{u_i}$. So $G_2$ defines a relation 
contained in the identity. As it is a part of $D$ and thus labelled by 
states of $A$, with $q$ and $q'$ in $Q_{C,d}$, there is no $\bcopy{n}{j}$ nor 
$\nbcopy{n}{j}$ transitions in $G_2$. Moreover, as $q$ and $q'$ are in 
$Q_{C,d}$, $G_2$ is not a single test transition. Then it is a sequence of 
elements of $\Ops{n-1}\cup \Tests{n-1}$ defining a relation included into the 
identity. As $A$ 
is normalised, this is impossible, and then $\Code{t'}{u_i}$ cannot be labelled 
by both $q$ and $q'$.

Taking two states in the other subsets of $Q$ yields the same contradiction 
with few modifications and are thus left to the reader.

 
 Then, as all its sub-formul\ae{} are true, $\phi'(X_s,X_t,\vec{Z})$ is true 
with the described labelling $\vec{Z}$. And then $\phi(X_s,X_t)$ is true.
 \qed
 \end{proof}
 
 Suppose now that $\phi(X_s,X_t)$ is satisfied. We take a minimal labelling 
$\vec{Z}$ that satisfies the formula $\phi'(X_s,X_t,\vec{Z})$.
 We construct the following graph $D$ :
\[
\begin{array}{lll}
     V_D & = & \{(x,q) \mid x \in \Stacks_n(\Sigma \cup \{1,2\}) \land x \in 
     Z_q\}
     \\
     E_D & =    & \{((x,q),\theta,(y,q')) \mid (\exists \theta, 
(q,\theta,q') \in \Delta \land \psi_\theta(x,y))\}\\
     & \cup & \{((x,q),1,(y,q')), 
     ((x,q),2,(z,q'')) \mid (q,(q',q'')) \in \Delta\\
     &        & \land \psi_{\bcopy{n}{2},1}(x,y) \land 
     \psi_{\bcopy{n}{2},2}(x,z)\}\\
     & \cup & \{((x,q),\bar{1},(z,q'')), 
     ((y,q'),\bar{2},(z,q'')) \mid ((q,q'),q'') \in \Delta\\
     &	 & \land \psi_{\bcopy{n}{2},1}(z,x) \land 
     \psi_{\bcopy{n}{2},2}(z,y)\}\\
     & \cup & \{((x,q),1,(y,q')) \mid (q,\bcopy{n}{1},q') \in \Delta \land 
\psi_{\bcopy{n}{1},1}(x,y)\}\\
     & \cup & \{((x,q),\bar{1},(y,q')) \mid (q,\nbcopy{n}{1},q') \in \Delta 
\land \psi_{\bcopy{n}{1},1}(y,x) \}
   \end{array}
\]

 \begin{lemma}
  $D$ is a disjoint union of operations 
$D_1,\cdots,D_k$.
 \end{lemma}
 
 \begin{proof}
  Suppose that $D$ is not a DAG, then there exists $(x,q) \in V$ such that 
$(x,q) \xrightarrow{+} (x,q)$, then there exists a sequence of operations in 
$A_d$ (for $A_c$ it is symmetric, and there is no transition from $A_c$ to 
$A_d$, thus a cycle cannot have states of the both parts) which is the identity 
(and thus it is an sequence of operations of $\Ops{n-1}\cup \Tests{n-1}$). As 
$A_d$ is normalised, it is not possible to have such a sequence.
  Then, there is no cycle in $D$ which is therefore a DAG.
  
  By definition of $E_D$, it is labelled by $\Ops{n-1}\cup \Tests{n-1} \cup
\{1,\bar{1},2,\bar{2}\}$.

  We choose an $D_i$. Suppose that it is not an operation. Thus, there 
exists a node $(x,q)$ of $D_i$ such that $D_i$ cannot be decomposed 
at this node (i.e, in the inducted decomposition, there will be no case which 
can be applied to cut either $D_i$ or one of its subDAG to obtain $(x,q)$ 
as the output node of a subDAG obtained (or the input node). Let us consider 
the following cases for the neighbourhood of $(x,q)$:
  
  \begin{itemize}
   \item $(x,q)$ has a unique son $(y,q')$, which has no other father such that 
$(x,q) \xrightarrow{2} (y,q')$. By 
definition of $\mathrm{Trans}$, we have that $\psi_{\bcopy{n}{2},2}(x,y)$, and 
thus we have a $(q,(q'',q'))\in \Delta$ and a $z$ such that 
$\psi_{\bcopy{n}{2},1}(x,z)$ which is in $Z_{q''}$. This contradicts that 
$(x,q)$ has a unique son in $D_i$.
  If $(x,q) \xrightarrow{\bar{2}} (y,q')$, the case is similar.
  For every other $\theta\in \Ops{n-1} \cup \Tests{n-1} \cup \{1,\bar{1}\}$, we 
can decompose 
the subDAG $\{(x,q)\xrightarrow{\theta}(y,q')\}$ as $(\emptydag \cdot_{1,1} 
D_\theta) \cdot_{1,1} \emptydag$.
  
  \item Suppose that $(x,q)$ has at least three sons 
$(y_1,q_1),(y_2,q_2),(y_3,q_3)$. There is no subformula of $\mathrm{Trans}$ 
which impose to label three nodes which can be obtained from $x$, so this 
contradicts the minimality of the labelling.
  
  For a similar reason, $(x,q)$ has at most two fathers.

  \item Suppose that $(x,q)$ has two sons $(y_1,q_1)$ and $(y_2,q_2)$. By 
definition of $\mathrm{Trans}$ and by minimality, we have that 
$\psi_{\bcopy{n}{2},1}(x,y_1)$, $\psi_{\bcopy{n}{2},2}(x,y_2)$, and 
$(q,(q_1,q_2$ $))\in \Delta$ (otherwise, the labelling would not be minimal, as 
it 
is the only subformula imposing to label two sons of a node). Thus we have 
$(x,q) \xrightarrow{1} (y_1,q_1)$ and $(x,q)\xrightarrow{2} (y_2,q_2)$.
  By minimality again, $(y_1,q_1)$ and $(y_2,q_2)$ have no other father than 
$(x,q)$.
  In this case, the subDAG $\{(x,q) \xrightarrow{1} 
(y_1,q_1),(x,q)\xrightarrow{2} (y_2,q_2)\}$ can be decomposed as $((\emptydag 
\cdot_{1,1} D_{\bcopy{n}{2}}) \cdot_{2,1} \emptydag) \cdot_{1,1} \emptydag$.

  \item Suppose that $(x,q)$ has a unique son $(y_1,q_1)$ which has an other 
father $(y_2,q_2)$. By definition of $\mathrm{Trans}$ and by minimality of the 
labelling, we have that $\psi_{\bcopy{n}{2},1}(y_1,x)$, 
$\psi_{\bcopy{n}{2},2}(y_1,y_2)$, and $((q,q_2),q_1)\in \Delta$. Thus we have 
$(x,q) \xrightarrow{\bar{1}} (y_1,q_1)$ and $(y_2,q_2)\xrightarrow{\bar{2}} 
(y_1,q_1)$. By minimality again, $(y_2,q_2)$ has no other son than $(y_1,q_1)$.
  In this case, the subDAG $\{(x,q) \xrightarrow{\bar{1}} (y_1,q_1), 
(y_2,q_2)\xrightarrow{\bar{2}} (y_1,q_1)\}$ can be decomposed as $(\emptydag 
\cdot_{1,1} (\emptydag \cdot_{1,2} D_{\nbcopy{n}{2}})) \cdot_{1,1} \emptydag$.
  \end{itemize}
  
  In all the cases we considered, or the case is impossible, or the DAG is 
decomposable at the node $(x,q)$. Thus, the DAG $D_i$ is always 
decomposable and is thus an operation.
 \qed
 \end{proof}
 
 \begin{lemma}
  Each $D_i$ is recognised by $A$
 \end{lemma}

 \begin{proof}
  By construction, for every node $(x,q)$, if $x \in X_s$, $q$ is an initial 
state (because $\mathrm{init}$ is satisfied), and $(x,q)$ is then an input 
node, as 
$A$ is distinguished. And as $\mathrm{init}$ is satisfied, only these nodes are 
labelled 
by initial states.
  
  Also, for every node $(x,q)$, if $x \in X_t$, $q$ is a final state (because 
$\mathrm{init}$ is satisfied) and $(x,q)$ is then an output node, as $A$ is 
distinguished. And as $\mathrm{init}$ in satisfied, only these nodes are 
labelled by final states.
  
  By construction, the edges are always transitions present in $\Delta$, and 
then we label each node $(x,q)$ by $q$. 
  
  As the formula $\mathrm{Trans}$ is satisfied, we have that given any node 
$(x,q)$, either $q$ is final (and then $(x,q)$ is an output node), or there 
exists one of the following:
  \begin{itemize}
   \item a node $(y,q')$ and $\theta$ such that $\psi_{\theta}(x,y)$ and 
$(q,\theta,q') \in \Delta$
   \item two nodes $(y,q')$ and $(z,q'')$ such that 
$\psi_{\bcopy{n}{2},1}(x,y)$, 
$\psi_{\bcopy{n}{2},2}(x,z)$ and $(q,(q',$ $q''))\in \Delta$
   \item two nodes $(y,q')$ and $(z,q'')$ such that 
$\psi_{\bcopy{n}{2},1}(z,x)$, 
$\psi_{\bcopy{n}{2},2}(z,y)$ and $((q,q'),$ $q'')\in \Delta$
  \end{itemize}
  
  Then, only nodes $(x,q)$ with $q$ final are childless and are those labelled 
with final states. As well, only $(x,q)$ with $q$ initial are fatherless.
  
  Then each $D_i$ is recognised by $A$ with this labelling.
 \qed
 \end{proof}
 
 \begin{lemma}
  $t$ is obtained by applying the $D_i$ to disjoint positions of $s$.
 \end{lemma}

 \begin{proof}
 We show by induction that $t' = D_{(j)} (s)$ if and only if $X_{t'} = 
X_s \cup \{x \mid (x,q) \in O_D\} \backslash \{x \mid (x,q) \in 
I_D\}$:

\begin{itemize}
 \item If $D = \emptydag$, it is true, as $X_{t'} = X_s$ and $t' = s$.
 \item If $D = (F \cdot_{1,1} D_\theta) \cdot_{1,1} G$, by 
induction hypothesis, we consider $r$ such that $r = F_{(j)}(s)$, we then 
have $X_r = X_s \cup \{y\} \backslash \{x \mid (x,q) \in I_F\}$, where 
$(y,q')$ is the only output node of $F$. By construction, the input node of 
$G$, $(z,q'')$ is such that $\psi_\theta(y,z)$, and thus we have $r' = 
\theta_{(j)}(r)$ such that $X_{r'} = X_r \backslash \{y\} \cup \{z\}$. By 
induction hypothesis, we have $X_{t'} = X_{r'} \cup \{x \mid (x,q) \in 
O_G\} \backslash \{z\}$, as $t' = G_{(j)} (\theta_{(j)} 
(F_{(j)}(s))) = G_{(j)}(r')$. Thus, $X_{t'} = X_s \cup \{x \mid (x,q) 
\in O_G\} \backslash \{x \mid (x,q) \in I_F\} = X_s \cup \{x \mid 
(x,q) 
\in O_D\} \backslash \{x \mid (x,q) \in I_D\}$.
\end{itemize}
  
  The other cases are similar and are thus left to the reader.
  It then suffices to construct this way successively $t_1 = {D_1}_{(i_1)} 
(s)$, $t_2 = {D_2}_{(i_2)}(t_1)$, etc, to obtain $t$ and prove the lemma.
   \qed
 \end{proof}

 We have proved both directions: for every $n$-stack trees $s$ and
 $t$, there exists a set of operations $D_i$ recognised by $A$ such
 that $t$ is obtained by applying the $D_i$ to disjoint positions
 of $s$ if and only if $\phi(X_s,X_t)$.
 \qed
\end{proof}

We then have a monadic interpretation with finite sets (all sets are
finite), and then, the graph has a decidable FO theory, which
concludes the proof.

\section{Example of a language}

We can see a rewriting graph as a language acceptor in a classical way by 
defining some initial and final states and labelling the edges.
%
%
%
We present here an example of a language recognised by a stack tree rewriting 
system. The recognised language is $\{u \shuffle u \mid u \in \Sigma\}$. 
%
Fix an alphabet $\Sigma$ and two special symbols $\uparrow$ and 
$\downarrow$. We consider $\TS_2(\Sigma\cup\{\uparrow,\downarrow\})$.
We now define a rewriting system $R$, whose rules are given in Fig.
\ref{fig:rules}.  

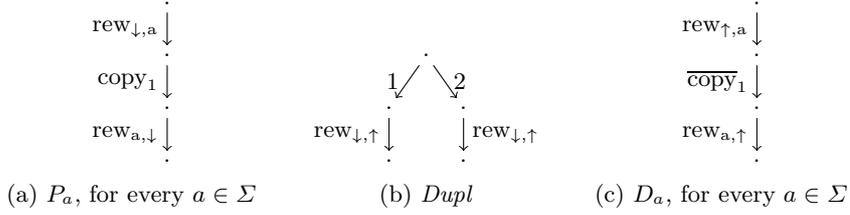
\begin{figure}[ht]
 \begin{center}
  \subfloat[$P_a$, for every $a\in \Sigma$]{
    \begin{minipage}[t]{0.3\linewidth}\centering
      \begin{tikzpicture}[scale=.7]
        \node (1) at (0,.5) {.}; 
        \node (2) at (0,-.5) {.}; 
        \node (3) at (0,-1.5) {.}; 
        \node (4) at (0,-2.5) {.}; 
        \draw [->] (1) to node[midway,left]{$\rew{\downarrow}{a}$} (2); 
        \draw [->] (2) to node[midway,left]{$\cop{1}$} (3); 
        \draw [->] (3) to node[midway,left]{$\rew{a}{\downarrow}$} (4);
      \end{tikzpicture}
    \end{minipage}
  }
  \subfloat[$\mathit{Dupl}$]{
    \begin{minipage}[t]{0.3\linewidth}\centering
      \begin{tikzpicture}[scale=.7]
	 \node (2) at (0,1) {.};
         \node (3) at (-.7,0) {.};
         \node (4) at (.7,0) {.};
         \node (5) at (-.7,-1) {.};
         \node (6) at (.7,-1) {.};
        \draw [->] (2) to node[midway,left]{$1$} (3);
        \draw [->] (2) to node[midway,right]{$2$} (4);
        \draw [->] (3) to node[midway,left] {$\rew{\downarrow}{\uparrow}$} (5);
        \draw [->] (4) to node[midway,right] {$\rew{\downarrow}{\uparrow}$} (6);
      \end{tikzpicture}
    \end{minipage}
  }
  \subfloat[$D_a$, for every $a\in \Sigma$]{
    \begin{minipage}[t]{0.3\linewidth}\centering
      \begin{tikzpicture}[scale=.7]
        \node (1) at (0,.5) {.}; 
        \node (2) at (0,-.5) {.}; 
        \node (3) at (0,-1.5) {.}; 
        \node (4) at (0,-2.5) {.}; 
        \draw [->] (1) to node[midway,left]{$\rew{\uparrow}{a}$} (2); 
        \draw [->] (2) to node[midway,left]{$\ncop{1}$} (3); 
        \draw [->] (3) to node[midway,left]{$\rew{a}{\uparrow}$} (4);
      \end{tikzpicture}
    \end{minipage}
  }
 \end{center}
 \caption{The rules of the rewriting system}
 \label{fig:rules}
\end{figure}

To recognise a language with this system, we have to fix an initial set of 
stack trees and a final set of stack trees. We will have a unique initial tree 
and a recognisable set of final trees. They are depicted on Fig.
\ref{fig:init-fin}.

\begin{figure}[ht]
  \begin{center}
  \subfloat[The initial tree.]{
  \begin{minipage}[t]{0.3\linewidth}\centering
    \begin{tikzpicture}[scale=.7]
      \node (root) at (0,0) {$\stack{1}{\downarrow}$}; 
    \end{tikzpicture}
  \end{minipage}
  }
  \subfloat[A final tree. $s$ is an arbitrary $1$-stack]{
  \begin{minipage}[t]{0.3\linewidth}\centering
    \begin{tikzpicture}[scale=.7]
      \node (root) at (0,0) {$s$}; 
      \node (0) at (-1.5,-1.5) {$\stack{1}{\uparrow}$}; 
      \node (1) at (1.5,-1.5) {$\stack{1}{\uparrow}$};
      \draw [->] (root) to (0); 
      \draw [->] (root) to (1); 
    \end{tikzpicture}
  \end{minipage}
  }
  \end{center}
  \caption{The initial and final trees.}
  \label{fig:init-fin}
\end{figure}
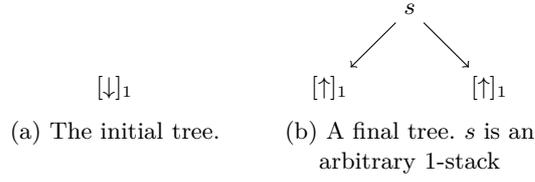

A word $w\in R^*$ is accepted by this rewriting system if there is a path from 
the initial tree to a final tree labelled by $w$.
The trace language recognised is 
\[
\{P_{a_1}\cdots P_{a_n} \cdot \mathit{Dupl} 
\cdot ((D_{a_n} \cdots D_{a_1}) \shuffle (D_{a_n} \cdots D_{a_1})) \mid 
a_1,\cdots ,a_n \in \Sigma\}.
\]
Let us informally explain why. We start on the initial tree, which has only a 
leaf labelled by a stack whose topmost symbol is $\downarrow$. So we cannot 
apply a $D_a$ to it. If we apply a $P_a$ to it, we remain in the same 
situation, but we added an $a$ to the stack labelling the unique node. So we 
can 
read a sequence $P_{a_1}\cdots P_{a_n}$. 
From this situation, we can also apply a $\mathit{Dupl}$, which yields a tree 
with three nodes whose two leaves are labelled by $\stack{1}{a_1\cdots a_n 
\uparrow}$, if we first read $P_{a_1}\cdots P_{a_n}$.
From this new situation, we can only apply $D_a$ rules. If the two leaves are 
labelled by $\stack{1}{b_1\cdots b_m \uparrow}$ and $\stack{1}{c_1\cdots 
c_\ell \uparrow}$, we can apply $D_{b_m}$ or $D_{c_\ell}$, yielding the same 
tree in which we removed $b_m$ or $c_\ell$ from the adequate leaf. We can do 
this until a final tree remains.
So, on each leaf, we will read $D_{a_n} \cdots D_{a_1}$ in this order, but we 
have no constraint on the order we will read these two sequences. So we 
effectively can read any word in $(D_{a_n} \cdots D_{a_1}) \shuffle (D_{a_n} 
\cdots D_{a_1})$. And this is the only way to reach a final tree.

To obtain the language we announced at the start, we just have to define a 
labelling $\lambda$ of each operation of $R$ as follows:
$\lambda(\mathit{Dupl}) = \varepsilon$, for every $a\in \Sigma$,
$\lambda(P_a) = \varepsilon$ and $\lambda(D_a) = a$, and remark that 
if $w$ is of the previous form, then $\lambda(w) = (a_1 \cdots a_n) \shuffle 
(a_1 \cdots a_n)$, and we indeed recognise $\{u\shuffle u \mid u \in 
\Sigma\}$.


\end{document}